\theoremstyle{plain}
\newtheorem{theorem}{Theorem}[section]
\newtheorem{corollary}[theorem]{Corollary}
\newtheorem{lemma}[theorem]{Lemma}
\newtheorem{proposition}[theorem]{Proposition}
\theoremstyle{definition}
\newtheorem{definition}[theorem]{Definition}
\newtheorem{remark}[theorem]{Remark}
\newtheorem{example}[theorem]{Example}
\theoremstyle{remark}
\numberwithin{equation}{section}
\numberwithin{figure}{section}
\newcommand{\Aut}{\text{Aut}}
\DeclareMathOperator{\Hom}{Hom}
\newcommand\id{\mathrm{id}}
\newcommand{\Ext}{\text{Ext}}
\newcommand{\im}{\text{im}}
\newcommand{\Lin}{\text{Lin}}
\newcommand{\pr}{\text{pr}}
\begin{document}

\title{Third group cohomology and gerbes over Lie groups}

\author{Jouko Mickelsson and Stefan Wagner}
\address{Department of Mathematics, University of Helsinki; Department of Mathematics, University of Hamburg}

\email{jouko.mickelsson@gmail.com, stefan.wagner@uni-hamburg.de}

\maketitle

\begin{abstract} 
The topological classification of gerbes, as principal bundles with the structure group
the projective unitary group of a complex Hilbert space,  over a topological space $H$ is given by the third cohomology $\text{H}^3(H, \Bbb Z).$ When $H$ is a topological group the integral cohomology is often related to a locally continuous (or in the case of a Lie group, locally smooth) third group cohomology of $H.$ We shall study in more detail this relation in the case of a group extension $1\to N \to G \to H
\to 1$ when the gerbe is defined by an abelian extension $1\to A \to \hat N \to N \to 1$ of $N.$ 
In particular, when $\text{H}_s^1(N,A)$ vanishes we shall construct a transgression map $\text{H}^2_s(N, A) \to \text{H}^3_s(H, A^N)$, 
where $A^N$ is the subgroup of $N$-invariants in $A$ and the subscript $s$ denotes the locally
smooth cohomology. Examples of this relation appear in gauge theory which are discussed in the paper. 
\end{abstract}

\vspace*{0,3cm}

	\noindent
	Keywords: Third group cohomology, gerbe, abelian extension, automorphisms of group extension, crossed module, transgression map, gauge theory.
	
	\noindent
	MSC2010: 22E65, 22E67 (primary), 20J06, 57T10 (secondary).

\section{Introduction}

A gerbe over a topological space $X$ is a principal bundle with fiber $PU(\mathcal H),$ the projective unitary
group $U(\mathcal H)/S^1$ of a complex Hilbert space $\mathcal H.$ Gerbes are classified by the integral cohomology
$\mathrm{H}^3(X, \Bbb Z).$ In real life gerbes most often come in the following way. We have a vector bundle
$E$ over a space $Y$ with model fiber $\mathcal H$ with a free right group action $Y \times N \to Y$ with $Y/N = X$ such that the
action of $N$ can be lifted to a projective action of $N$ on $E.$ Then $E/N$ becomes a projective
vector bundle over $X$ with structure group $PU(\mathcal H).$ Put in the groupoid language, we have an $S^1$-extension
of the transformation groupoid $Y\times N \to Y;$ the circle extension can be written in terms of a 2-cocycle
$c_2(y; g, g') \in S^1$ with $y\in Y$ and $g,g' \in N.$ In general, there might be topological obstructions
to define $c_2$ as a globally continuous cocycle on $N.$ A well known example of this is when $N =LK,$ the smooth loop
group of a compact simple Lie group $K.$ This group has nontrivial central $S^1$-extensions parametrized by
$\mathrm{H}^2(N, \Bbb Z)$ but because these are nontrivial circle bundles over $LK$ the 2-cocycle $c_2$ is only
locally smooth, in a neighborhood of the neutral element.

More complicated examples arise in gauge theory. There $N$ is the group of smooth gauge transformations
acting on the space $Y$ of  gauge connections and in the quantum theory one is forced to deal with extensions
of $N$ with the abelian group $A$ of circle valued functions on $Y.$ 

As pointed out by Neeb \cite{Ne04} the abelian extensions, even in the topologically nontrivial case 
$1 \to A \to \hat N \to N \to 1$ of an $A$-bundle over $N,$ can be constructed in terms of locally smooth
2-cocycles.  The natural question is then to which extent gerbes over a Lie group $H$ are described by
locally smooth 3-cocycles on $H.$ We shall concentrate on the following situation. Let $G$ be an (infinite-dimensional) Lie group, $N$ a normal subgroup of $G$,  $H= G/N$ and $A$ a $G$-module. Given an extension
$\widehat{N}$ of $N$ by $A$ (viewed as an abelian group) described by a representative of an element $c_2\in \mathrm{H}^2_s(N, A),$ i.e.,
by a locally smooth 2-cocycle with values in $A,$ we want to define a transgression $\tau(c_2) \in \mathrm{H}^3_s(H,A^N),$
where $A^N$ is the set of $N$ invariant elements in $A.$ This we achieve by assuming that the first cohomology group $\mathrm{H}_s^1(N, A)$
vanishes (cf. Theorem \ref{Gammasmoothcrossedmodule}). In addition, under the same restriction, we show that the transgressed element
$\tau(c_2)$ vanishes if and only if there is a prolongation of the group extension $1\to N \to G \to H \to 1$
to an extension $1 \to \widehat{N} \to \widehat{G} \to H \to 1$ (cf. Theorem \ref{extension}). 

A gerbe over $H$ can be always trivialized in the pull-back on $G$ if $\mathrm{H}^3(G, \Bbb Z) =0.$ In particular,
we can take $G$ as the path group $$PH= \{f: [0,1]\to H:\, f(0) = 1\},$$ which is contractible and is a model
for $E\Omega H,$ the total space of the universal classifying bundle for the based loop group $\Omega H$ 
with $B \Omega H = H.$ Here we could work either in the smooth or continuous category of maps.
In the smooth setting this case is actually relevant in gauge theory applications which we shall discuss
in the last section. For example, when $H$ is a compact Lie group, then the smooth version of $PH$ can be identified with the space of gauge connections of an $H$-bundle over the unit circle through the map $f\mapsto f^{-1}df$ and $\Omega H$ (assuming periodic boundary condititions for $f^{-1}df$) with
the group of based gauge transformation. In this case $H$ is the moduli space of gauge connections. In the same way, if
$N=Map_0(S^n, K)$ is the group of based gauge transformation of a (trivial) $K$-bundle over $S^n$, then
the moduli space of gauge connections on $S^n$ is, up to a homotopy, equal to the group $H = Map_0(S^{n-1}, K).$ These cases become 
interesting when $n$ is odd since there are nontrivial abelian extensions of $N$ described by an elements in
$\mathrm{H}^2_s(N, A)$, where $A$ is the abelian group of circle valued functions on $G= PH.$ In quantum field
theory elements of $\mathrm{H}^2_s$ describe Hamiltonian anomalies, due to chiral symmetry breaking 
in the quantization of Weyl fermions \cite{Mi3}, Chapters 4 and 5, and \cite{CMM} for an index theory derivation.

The 3-cocycles on $H$ are also relevant in the categorical representation theory. 
Consider a category $\mathcal C$ of irreducible 
representations of some associative algebra $\mathcal B$ in a complex vector space $V,$ the morphisms being 
isomorphisms  of representations. Suppose that for each $h\in H$
there is a given functor $F_h : \mathcal C \to  \mathcal C$ and for each pair $h,k\in H$ an isomorphism
$i_{h,k} : F_h \circ F_k \to F_{hk}.$ Then both $i_{j, hk} \circ i_{h,k}$ and $i_{jh,k}\circ i_{j,h}$
are isomorphisms $F_j\circ F_h \circ F_k \to F_{jhk}$  but not necessarily equal; 
$$ i_{j,hk}\circ i_{h,k} = \alpha(j,h,k) i_{jh,k}\circ i_{j,h}$$
for some $\alpha(j,h,k)$ in $\Bbb C^{\times}.$ The function $\alpha$ is then a $\Bbb C^{\times}$-valued cocycle
on $H.$ 

A simple example of the above functorial representation is the case studied in \cite{Mi2}. There the algebra 
$\mathcal B$ is the algebra of canonical anticommutation relations (CAR) generated by vectors  in the space
$V=L^2(S^1, \Bbb C^n)$ of square-integrable vector valued functions on the unit circle and the category of
representations are special quasi-free representations of $\mathcal B$ which are parametrized by
polarizations $V = V_+(a) \oplus V_-(a),$ to positive/negative  modes of the Dirac operator on the circle coupled to
the gauge connection $a= f^{-1}df$ for $f \in PH.$ (One must choose boundary conditions for $f:[0, 2\pi] \to
H$ such that $a$ is smooth and periodic.) 
The group $H$ acts by $n\times n$ unitary matrices on $\Bbb C^n$ and
the group $G= PH$ acts by point-wise multiplication on functions.  Fix a (locally smooth) section $\psi: H
\to G.$ Then each $h\in H$ defines an automorphism of the algebra $\mathcal B$ through the action of
$\psi(h)$ in $V.$ The quasi-free representations of $\mathcal B$ are defined in a Fock space $\mathcal F.$
Given a pair $h,k\in H$ the element $\ell_{h,k} := \psi(h)\psi(k)\psi(hk)^{-1}$ is contained in $N=\Omega G \subset G$
and choosing an element $i_{h,k} \in \hat N$ in the fiber above $\ell_{h,k}$ one obtains an automorphism
of a quasi-free representation parametrized by $a.$ The functor $F_h$ for $h\in H$ takes the the quasi-free
representation corresponding to any $\psi(k)$ to the quasi-free representation parametrized by $\psi(kh).$ The role of Lie algebra 3-cocycles in $C^*$ algebraic constructions in quantum field theory was also studied earlier in \cite{CGRS}.

The article is structured as follows. After this introduction and a preliminary section on infinite-dimensional Lie theory, we provide in Section 3 conditions which assure that certain automorphisms defined on the building blocks of an abelian Lie group extension can be lifted to the extension itself. Next, let $G$ be an (infinite-dimensional) Lie group, $N$ a normal subgroup of $G$,  $H= G/N$ and $A$ a $G$-module. Moreover, let $\widehat{N}$ be an extension of $N$ by $A$ (viewed as an abelian group) described by a representative of an element $c_2\in \mathrm{H}^2_s(N, A),$ i.e., by a locally smooth 2-cocycle with values in $A$. The aim of Section 4 is to find a transgression $\tau(c_2) \in \mathrm{H}^3_s(H,A^N),$ where $A^N$ is the set of $N$ invariant elements in $A.$ This we achieve by using methods from the theory of smooth crossed modules and under the assumption that the cohomology group $\mathrm{H}_s^1(N, A)$ vanishes (cf. Theorem \ref{Gammasmoothcrossedmodule}). In addition, under the same restriction, we show that the transgressed element $\tau(c_2)$ vanishes if and only if there is a prolongation of the group extension $1\to N \to G \to H \to 1$ to an extension $1 \to \widehat{N} \to \widehat{G} \to H \to 1$ (cf. Theorem \ref{extension}). 
In Section 5 we shall give a different construction of lifting the automorphisms of $N$ 
to automorphisms of an abelian extension $\widehat{N}$ in the case when $A$ is the $G$-module of smooth maps from $G$ to the unit circle $S^1.$  This case is important in the gauge theory applications. Section 6 presents some results on the transgression $\mathrm{H_s^2}(N, A) \to \mathrm{H_s^3}(H, A^N)$ in the case of the module $A= Map(G, S^1)$, while Section 6 is devoted to the case of a non simply connected base $H$, e.g., a torus and, further, to a gauge theory application. The goal of the first part of the Appendix is to extend the geometric construction of the central extension $\widehat{LG}$ of the
smooth loop group $LG,$ with $G$ a simply connected compact Lie group, \cite{Mi1}, to the case when $G$ is a connected compact Lie group but not necessarily simply connected. In the second part of the Appendix we discuss a Fock space construction of gerbes over $G/Z$. Finally, in the third part of the Appendix we show that a certain second cohomology group acts in a natural way as a simply transitive transformation group on the set of equivalence classes of extensions of a given smooth crossed module, which is related to our results in Section 4.

We would like to express our greatest gratitude to the referee for providing very fruitful criticism. This work was partially supported by the Academy of Finland grant nr 1138810.

\section{Prelimenaries and notations}\label{pre and not}

In our study we use many tools from infinite-dimensional Lie theory. As a consequence, we provide in this preliminary section the most important definitions and notations which are repeatedly used in this article. For a detailed discussion of this infinite-dimensional context we refer to \cite{Mil83,Ne06}. 


Throughout this paper a \emph{Lie group} $G$ is a smooth manifold modelled on a locally convex space for which the group multiplication and the inversion are smooth maps. We denote by $1_G\in G$ the identity of $G$ and by $C_G:G\rightarrow\Aut(G)$ the conjugation action of $G$. For a normal subgroup $N$ of $G$, we write $C_{N}:G\rightarrow\Aut(N)$ with $C_{N}(g):=C_{G}(g)_{\mid N}$. 

A subgroup $H$ of a Lie group $G$ is called a $\emph{split Lie subgroup}$ if it carries a Lie group structure for which the inclusion map $i_H:H\hookrightarrow G$ is a morphism of Lie groups and the right action of $H$ on $G$ defined by restricting the multiplication map of $G$ to a map $G\times H\rightarrow G$ defines a smooth $H$-principal bundle. This means that the coset space $G/H$ is a smooth manifold and that the quotient map $\pr:G\rightarrow G/H$ has smooth local sections.

An $\emph{extension of Lie groups}$ is a short exact sequence
\[1\longrightarrow N\stackrel{i}{\longrightarrow}\widehat{G}\stackrel{q}{\longrightarrow}G\longrightarrow1
\]of Lie group morphisms, for which $N\cong \ker(q)$ is a split Lie subgroup. This means that $\widehat{G}$ is a smooth $N$-principal bundle over $G$ and $G\cong\widehat{G}/N$. 


In general, we cannot expect the group $\Aut(G)$ of automorphism of a Lie group $G$ to carry a Lie group structure, but any ``reasonable" Lie group structure on this group should have the property that for any smooth manifold $M$, a map $f:M\rightarrow\Aut(G)$ is smooth if and only if the induced maps 
\begin{align*}
f^{\wedge}:M \times G\rightarrow G, \quad f^{\wedge}(m,g):=f(m)(g) 
\end{align*}
and
\begin{align*}
(f^{-1})^{\wedge}:M \times G\rightarrow G, \quad (f^{-1})^{\wedge}(m,g):=f(m)^{-1}(g)
\end{align*}
are smooth. We therefore turn the later condition into a definition and call a map $f:M\rightarrow\Aut(G)$ smooth if the induced maps $f^{\wedge}$ and $(f^{-1})^{\wedge}$ are smooth. We point out that if $M$ is a Lie group and $f$ a group homomorphism, then it is easily seen that the map $f^{\wedge}$ is smooth if and only if the map $(f^{-1})^{\wedge}$ is smooth.


Let $G$ and $N$ be Lie groups. For $p\in\mathbb{N}$ we call a map $f:G^p\rightarrow N$ \emph{locally smooth} if there exists an open identity neighbourhood $U\subseteq G^p$ such that $f_{\mid U}$ is smooth and we say that $f$ is \emph{normalized} if
\[(\exists j)\,g_j= 1_G \quad \Rightarrow \quad f(g_1,\ldots,g_p)=1_N.
\]We denote by $C^p_s(G,N)$ the space of all normalized locally smooth maps $G^p\rightarrow N$, the so called (locally smooth) p-$\emph{cochains}$. For $p=2$ and non-connected Lie groups, we sometimes have to require additional smoothness (cf. Appendix \ref{section extension smooth crossed module}). We write $C^2_{ss}(G,N)$ for the set of all elements $\omega\in C^2_{s}(G,N)$ with the additional property that for each $g\in G$ the map
\[\omega_g:G\rightarrow N, \quad x\mapsto\omega(g,x)\omega(gxg^{-1},g)^{-1}
\]is smooth in an identity neighbourhood of $G$. At this point is is worth mentioning that we will repeatedly use the fact that $\sigma\in C^1_s(G,N)$ implies that the induced map $\delta_\sigma :G\times G\rightarrow N$ defined by $$\delta_\sigma(g,g'):=\sigma(g)\sigma(g')\sigma(gg')^{-1}$$ gives rise to an element of $C^2_{ss}(G,N)$. Moreover, we define $C^1_s(G,\Aut(N))$ as the set of all maps $S:G\rightarrow \Aut(N)$ with $S(1_G)=\id_N$ and for which there exists an open identity neighbourhood $U\subseteq G$ such that the induced map $S^{\wedge}$ is smooth on $U\times N$. For $S\in C^1_s(G,\Aut(N))$ and $\omega\in C^2_s(G,N)$ we put 
\begin{align}
(d_S\omega)(g,g',g''):=S(g)(\omega(g',g''))\omega(g,g'g'')\omega(gg',g'')^{-1}\omega(g,g{'})^{-1} \label{2-cocycle}
\end{align}
and call the elements of the set
\[Z^2_{ss}(G,N):=\{(S,\omega)\in C^1_s(G,\Aut(N))\times C^2_{ss}(G,N):\delta_S=C_N\circ\omega, \, d_S\omega=1_N\}
\]\emph{smooth factor systems} for the pair $(G,N)$. Here, the map $\delta_S:G\times G\rightarrow \Aut(N)$ is defined by $$\delta_S(g,g'):=S(g)S(g')S(gg')^{-1}.$$ Given such a smooth factor system $(S,\omega)$, we write $N\times_{(S,\omega)} G$ for the set $N\times G$ endowed with the group multiplication 
\begin{align}
(n,g)(n',g')=(nS(g)(n')\omega(g,g'),gg')\notag.
\end{align}
and we recall from \cite[Proposition 2.8]{Ne06} that if $N$ is connected, the $N\times_{(S,\omega)} G$ carries a unique structure of a Lie group for which the map
$N\times G\rightarrow N\times_{(S,\omega)} G$, $(n,g) \mapsto (n,g)$ is smooth on a set of the form $N\times U$, where $U$ is an open identity neighbourhood in $G$. Moreover, the map $q:N\times_{(S,\omega)} G\rightarrow G$, $(n,g)\mapsto g$ is a Lie group extension of $G$ by $N$.  In the case $N=A$ is an abelian Lie group the adjoint representation of $A$ is trivial and a smooth factor system $(S,\omega)$ for $(G,A)$ consists of a smooth module structure $S:G\rightarrow \Aut(A)$ and an element $$\omega\in Z^2_{ss}(G,A):=\{\omega\in C^2_{ss}(G,A): d_S\omega=1_A\}.$$

Last but not least, given a Lie group $G$, we write small fractured letters for its Lie algebra, i.e., $\mathfrak{g}$ for the Lie algebra of $G$. 


\section{On lifting of automorphisms}\label{LAAE}

Let $G$ be a Lie group, $(A,S)$ a smooth $G$-module and  $N$ a connected split normal Lie subgroup of $G$. Furthermore,  let
\[E: 1\longrightarrow A{\longrightarrow}\widehat{N}\stackrel{\widehat{q}}{\longrightarrow}N\longrightarrow1
\]be an abelian Lie group extension of $N$ by the abelian Lie group $A$ for which the associated $N$-module structure on $A$ is given by the restriction of the map $S$ to $N$ and 
\[\Aut(\widehat{N},A):=\{\varphi\in\Aut(\widehat{N}):\,\varphi(A)=A\}
\]the group of all Lie group automorphisms of $\widehat{N}$ preserving the split Lie subgroup $A$. The splitness condition implies that the map
\[\Phi:\Aut(\widehat{N},A)\rightarrow\Aut(A)\times\Aut(N), \quad \Phi(\varphi):=(\varphi_A,\varphi_N),
\]where $\varphi_A:=\varphi_{\mid A}$ denotes the restriction to $A$ and $\varphi_N$ is defined by $\varphi_N\circ \widehat{q}=\widehat{q}\circ\varphi$, is a well-defined group homomorphism. The purpose of this section it to provide conditions assuring that the smooth group homomorphism
\[\psi:G\rightarrow\Aut(A)\times\Aut(N),\quad \psi(g):=(S(g),c_g)
\]lifts to a smooth group homomorphism $\widehat{\psi}:G\rightarrow\Aut(\widehat{N},A)$. These considerations will later on be crucial for our goal of constructing a ``transgression" map
\[\tau:\text{H}^2_s(N,A)\rightarrow \text{H}^3_s(G/N,A^N).
\]Parts of the discussion in this section are inspired by Appendix A of \cite{Ne06}. To be more precise, for the sake of a concise presentation we adapted some of the results and proofs of Appendix A of \cite{Ne06} concerning automorphisms of general Lie group extensions to the special case of abelian Lie group extensions. New aspects are, for example, given by considering smooth structures, i.e., smooth group homomorphisms (cf. Lemma \ref{lifting condition group I}, Proposition \ref{lifting condition group II} and Proposition \ref{lifting condition group III}). We start with the following characterization of the kernel of the map $\Phi$ defined above:

\begin{lemma}\label{ker Phi}
The map 
\[\Psi: Z^1_s(N,A)\rightarrow\ker(\Phi), \quad \Psi(f):=(f\circ \widehat{q})\cdot\id_{\widehat{N}}
\]is a bijective group homomorphism.
\end{lemma}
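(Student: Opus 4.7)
The plan is to unwind the definition of $\ker(\Phi)$: an element $\varphi\in\Aut(\widehat{N},A)$ lies in $\ker(\Phi)$ iff $\varphi_{\mid A}=\id_A$ and $\widehat{q}\circ\varphi=\widehat{q}$. The latter means that for every $\widehat{n}\in\widehat{N}$ the element $\alpha(\widehat{n}):=\varphi(\widehat{n})\widehat{n}^{-1}$ lies in $\ker(\widehat{q})=A$. Since $A$ is abelian and $\varphi$ fixes $A$ pointwise, a short computation gives $\alpha(\widehat{n}a)=\varphi(\widehat{n})a\cdot a^{-1}\widehat{n}^{-1}=\alpha(\widehat{n})$ for $a\in A$, so $\alpha$ descends to a map $f_{\varphi}\colon N\to A$ characterized by $f_{\varphi}(\widehat{q}(\widehat{n}))=\varphi(\widehat{n})\widehat{n}^{-1}$. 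Multiplicativity of $\varphi$ translates (after moving one factor of $\widehat{n}_1$ past $\alpha(\widehat{n}_2)\in A$, which produces the $N$-action on $A$ by conjugation in $\widehat{N}$, i.e.\ the restriction of $S$ to $N$) into the $1$-cocycle identity $f_{\varphi}(n_1n_2)=f_{\varphi}(n_1)\cdot S(n_1)f_{\varphi}(n_2)$. This will define a candidate inverse $\varphi\mapsto f_{\varphi}$.

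Next I would verify well-definedness of $\Psi$ in the opposite direction. For $f\in Z^1_s(N,A)$, I would first check that $\Psi(f)\colon\widehat{N}\to\widehat{N}$, $\widehat{n}\mapsto f(\widehat{q}(\widehat{n}))\widehat{n}$, is a group homomorphism using exactly the same rearrangement as above combined with the cocycle identity; then that $\Psi(f)$ is bijective with inverse given by the obvious cocycle $f'(n):=S(n^{-1})f(n)^{-1}$ (equivalently, $\Psi(f)\Psi(f')=\id_{\widehat{N}}$, which I would verify by direct substitution). It is immediate from the defining formula that $\Psi(f)$ fixes $A$ pointwise and induces $\id_N$, so $\Psi(f)\in\ker(\Phi)$ provided $\Psi(f)$ is actually a Lie group automorphism.

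The homomorphism property of $\Psi$ is a one-line calculation: since $\Psi(g)(\widehat{n})\in g(\widehat{q}(\widehat{n}))\widehat{n}$ projects to $\widehat{q}(\widehat{n})$, one gets $\Psi(f)(\Psi(g)(\widehat{n}))=f(\widehat{q}(\widehat{n}))g(\widehat{q}(\widehat{n}))\widehat{n}=\Psi(f\cdot g)(\widehat{n})$, where $f\cdot g$ denotes pointwise multiplication in the abelian group $Z^1_s(N,A)$. Injectivity is trivial ($\Psi(f)=\id$ forces $f\equiv 1_A$) and surjectivity is exactly the assignment $\varphi\mapsto f_{\varphi}$ constructed above, with the identity $\Psi(f_{\varphi})=\varphi$ holding by construction.

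The main obstacle, and what I would treat most carefully, is the smoothness bookkeeping that links the \emph{Lie} group automorphisms $\ker(\Phi)$ with the \emph{locally smooth} cocycles $Z^1_s(N,A)$. In one direction: given $\varphi\in\ker(\Phi)$, I would pick a smooth local section $\sigma\colon U\to\widehat{N}$ of $\widehat{q}$ defined on an identity neighbourhood $U\subseteq N$ (which exists by the splitness assumption on the extension $E$), and write $f_{\varphi}\vert_U=(\varphi\circ\sigma)\cdot\sigma^{-1}$, which is smooth as a composition of smooth maps into $\widehat{N}$, hence $f_{\varphi}\in Z^1_s(N,A)$. In the other direction: given $f\in Z^1_s(N,A)$ with $f$ smooth on some identity neighbourhood $V\subseteq N$, the map $\Psi(f)$ is smooth on the open identity neighbourhood $\widehat{q}^{-1}(V)\subseteq\widehat{N}$ because it is a product of the smooth maps $f\circ\widehat{q}$ and $\id_{\widehat{N}}$; since $\Psi(f)$ is already known to be a group homomorphism, local smoothness at $1_{\widehat{N}}$ extends to global smoothness by left translation, so $\Psi(f)\in\Aut(\widehat{N},A)$ as required.
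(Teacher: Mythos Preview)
Your proof is correct and follows essentially the same route as the paper's: show $\Psi(f)$ is a Lie group automorphism fixing $A$ and covering $\id_N$, check $\Psi$ is a homomorphism, and for surjectivity extract $f_\varphi(n)=\varphi(\widehat n)\widehat n^{-1}$ from $\varphi\in\ker(\Phi)$; your smoothness discussion is in fact more careful than the paper's. One small slip: the inverse of $\Psi(f)$ is $\Psi(f^{-1})$ with $f^{-1}(n)=f(n)^{-1}$ (i.e.\ $-f$ in additive notation), not $f'(n)=S(n^{-1})f(n)^{-1}$---the latter equals $f(n^{-1})$ by the cocycle identity, which is a different element of $Z^1_s(N,A)$, and the direct substitution you propose would have caught this.
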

\begin{proof}
We first show that $\Psi$ is well-defined. Therefore, let $f\in Z^1_s(N,A)$. Then it is not hard to see that $\Psi(f)$ is a diffeomorphism of $\widehat{N}$ satisfying $\Psi(f)_{\mid A}=\id_A$ and $\widehat{q}=\widehat{q}\circ\Psi(f)$ whose inverse is given by $\Psi(-f)$. That it is also a group homomorphism follows for $n:=\widehat{q}(\widehat{n})$ and $n':=\widehat{q}(\widehat{n}')$ from
\begin{align*}
\Psi(f)(\widehat{n}\widehat{n}')&=f(nn')\widehat{n}\widehat{n}'=f(n)n.f(n')\widehat{n}\widehat{n}'=f(n)\widehat{n}f(n')\widehat{n}^{-1}\widehat{n}\widehat{n}'\\
&=f(n)\widehat{n}f(n')\widehat{n}'=\Psi(f)(\widehat{n})\Psi(f)(\widehat{n}').
\end{align*}
Furthermore, it is easily checked that $\Psi$ is an injective group homomorphism. To verify the surjectivity of $\Psi$, we choose $\varphi\in\ker(\Phi)$. Then there exists a smooth function $f:N\rightarrow A$ with $\varphi=(f\circ \widehat{q})\cdot\id_{\widehat{N}}$ and the fact that $\varphi$ is a group homomorphism implies that $f\in Z^1_s(N,A)$.
\end{proof}

Our next goal is to characterize the image of the map $\Phi$. To this end we constantly use the fact that the group $\Aut(A)\times\Aut(N)$ acts naturally on the space  $C^1_s(N,\Aut(A))\times C^p_s(N,A)$ by
\[(\varphi_A,\varphi_N).(S,f):=(c_{\varphi_A}\circ S\circ\varphi_N^{-1},\varphi_A\circ f\circ(\varphi_N^{-1}\times\cdots\times\varphi_N^{-1})).
\]In particular, a short calculation shows that for  $f\in C^p_s(N,A)$ and $f':=(\varphi_A,\varphi_N).f$ we have 
\[d_Nf'=\varphi_A\circ(d_Nf)\circ(\varphi_N^{-1}\times\cdots\times \varphi_N^{-1})
\]which implies that the subset $Z^p_s(N,A)$ is invariant under the action of the group $\Aut(A)\times\Aut(N)$.

\begin{lemma}\label{natural action}
Let $(\varphi_A,\varphi_N)\in \Aut(A)\times\Aut(N)$ and consider the exact sequence 
\[E':=(\varphi_A,\varphi_N).E: 1\longrightarrow A\stackrel{\varphi_A^{-1}}{\longrightarrow}\widehat{N}\stackrel{\varphi_N\circ \widehat{q}}{\longrightarrow}N\longrightarrow1.
\]If $\widehat{N}=A\times_{(S,f)} N$ for some locally smooth cocycle $f\in Z^2_s(N,A)$, then the sequence $E'$ describes an extension equivalent to $A\times_{(\varphi_A,\varphi_N).(S,f)}N$ and the map
\[\mu_{(\varphi_A,\varphi_N)}:A\times_{(S,f)}N\rightarrow A\times_{(\varphi_A,\varphi_N).(S,f)}N, \quad (a,n)\mapsto(\varphi_A(a),\varphi_N(n))
\]is an isomorphism of Lie groups.
\end{lemma}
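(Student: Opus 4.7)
The plan is to verify the two assertions directly by unpacking the definitions, with no conceptual input beyond the fact that $\varphi_A$ and $\varphi_N$ are Lie group automorphisms and the explicit description of the multiplication on a semidirect product built from a factor system.

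First I would check that $E'$ is genuinely an abelian Lie group extension. Since $\varphi_A^{-1}\in\Aut(A)$ is a Lie group automorphism, its image is still the split Lie subgroup $A$ of $\widehat{N}$; and $\varphi_N\circ\widehat{q}$ is a smooth surjective homomorphism $\widehat{N}\rightarrow N$ whose kernel equals $\widehat{q}^{-1}(\ker\varphi_N)=\widehat{q}^{-1}(\{1_N\})=A$, so exactness holds, and the splitness condition is inherited from the original extension $E$.

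Next I would verify that $\mu_{(\varphi_A,\varphi_N)}$ is a group homomorphism. In $A\times_{(S,f)}N$ the product reads $(a,n)(a',n')=(a\cdot S(n)(a')\cdot f(n,n'),nn')$, whereas in $A\times_{(\varphi_A,\varphi_N).(S,f)}N$ the factor system is $S'=c_{\varphi_A}\circ S\circ\varphi_N^{-1}$ and $f'=\varphi_A\circ f\circ(\varphi_N^{-1}\times\varphi_N^{-1})$. Computing $\mu_{(\varphi_A,\varphi_N)}(a,n)\cdot\mu_{(\varphi_A,\varphi_N)}(a',n')$ in the target and using $\varphi_N^{-1}\circ\varphi_N=\id_N$ together with the cancellation of $\varphi_A^{-1}\circ\varphi_A=\id_A$ hidden inside $c_{\varphi_A}$, one obtains $(\varphi_A(a\cdot S(n)(a')\cdot f(n,n')),\varphi_N(nn'))$, which is exactly $\mu_{(\varphi_A,\varphi_N)}((a,n)(a',n'))$.

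Smoothness follows from \cite[Proposition 2.8]{Ne06}: the Lie group structure on $A\times_{(S,f)}N$ is characterized by the requirement that the canonical map $A\times U\rightarrow A\times_{(S,f)}N$ is smooth for some open identity neighbourhood $U\subseteq N$. Since $\varphi_A$ and $\varphi_N$ are diffeomorphisms, $\mu_{(\varphi_A,\varphi_N)}$ is smooth on this chart, hence everywhere by left translation, and its set-theoretic inverse $\mu_{(\varphi_A^{-1},\varphi_N^{-1})}$ is smooth by the same argument, so $\mu_{(\varphi_A,\varphi_N)}$ is a Lie group isomorphism.

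Finally, the equivalence of the two extensions follows from the identities $\mu_{(\varphi_A,\varphi_N)}(\varphi_A^{-1}(a),1_N)=(a,1_N)$ and $q\bigl(\mu_{(\varphi_A,\varphi_N)}(a,n)\bigr)=\varphi_N(n)=\varphi_N(\widehat{q}(a,n))$, so that $\mu_{(\varphi_A,\varphi_N)}$ makes the defining diagram of extension equivalence commute. I do not expect any genuine obstacle; the only mildly delicate point is keeping the composition of the actions of $\varphi_A$ and $\varphi_N$ straight when expanding $(\varphi_A,\varphi_N).(S,f)$, which is more bookkeeping than mathematics.
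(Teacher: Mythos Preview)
Your proposal is correct and covers the same ground as the paper's proof, only in reverse order: the paper first chooses the section $\sigma':=\sigma\circ\varphi_N^{-1}$ for $E'$ (where $\sigma(n)=(1_A,n)$), reads off the factor system $(\varphi_A,\varphi_N).(S,f)$ from it, and then remarks that the isomorphism $\mu_{(\varphi_A,\varphi_N)}$ follows by a short calculation; you instead verify the isomorphism $\mu_{(\varphi_A,\varphi_N)}$ directly and deduce the equivalence of extensions from the commuting diagram. Both amount to the same bookkeeping with the transported factor system, and neither involves any idea the other lacks.
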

\begin{proof}
Let $\sigma:N\rightarrow A\times_f N$, $n\mapsto(1_A,n)$ be the canonical section. Then the map $\sigma':=\sigma\circ\varphi_N^{-1}\in C^1_s(N,\widehat{N})$ satisfies
\[(\varphi_N\circ \widehat{q})\circ\sigma'=\varphi_N\circ \widehat{q}\circ\sigma\circ\varphi_N^{-1}=\id_N.
\]As a consequence, we can interpret $\sigma'$ as a section for the extension $E'$ which implies that the sequence $E'$ can be represented by the extension $A\times_{(S',f')} N$ with $(S',f'):=(\varphi_A,\varphi_N).(S,f)$. That the map $\mu_{(\varphi_A,\varphi_N)}$ is an isomorphism of Lie groups follows from a short calculation involving the definition of $(\varphi_A,\varphi_N).(S,f)$.
\end{proof}

\begin{proposition}\label{lifting condition group}
Let $f\in Z^2_s(N,A)$ and $\widehat{N}=A\times_{(S,f)}N$ be the corresponding abelian extension. Furthermore, let $(\varphi_A,\varphi_N)\in\Aut(A)\times\Aut(N)$. Then the following assertions hold:
\begin{itemize}
\item[(a)]
The pair $(\varphi_A,\varphi_N)$ is contained in the image of $\Phi$, i.e., $(\varphi_A,\varphi_N)\in\im(\Phi)$ if and only if the extensions $\widehat{N}$ and $A\times_{(\varphi_A,\varphi_N).(S,f)} N$ are equivalent.
\item[(b)]
An automorphism $\varphi\in\Aut(\widehat{N},A)$ satisfies $\Phi(\varphi)=(\varphi_A,\varphi_N)$ if and only if it is of the form
\[\varphi(a,n)=(\varphi_A(a)+h(\varphi_N(n)),\varphi_N(n))
\]with $h\in C_s^1(N,A)$ satisfying $(\varphi_A,\varphi_N).(S,f):=(S,f+d_Nh)$.
\end{itemize}
\end{proposition}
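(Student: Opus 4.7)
For part (a), the natural strategy is to interpret the existence of a lift in $\Phi^{-1}(\varphi_A,\varphi_N)$ as an equivalence of Lie group extensions, using Lemma~\ref{natural action} as the bridge between $E'=(\varphi_A,\varphi_N).E$ and the standard extension $A\times_{(\varphi_A,\varphi_N).(S,f)}N$. The key observation is that the two defining properties of $\varphi\in\Aut(\widehat N,A)$ with $\Phi(\varphi)=(\varphi_A,\varphi_N)$, namely $\varphi_{\mid A}=\varphi_A$ and $\widehat q\circ\varphi=\varphi_N\circ\widehat q$, are exactly the statements that $\varphi$ is an equivalence of Lie group extensions $E'\xrightarrow{\sim}E$: the first gives $\varphi\circ\varphi_A^{-1}=\id_A$, the second is the commutativity of the projection square. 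Composing this with $\mu_{(\varphi_A,\varphi_N)}:E'\xrightarrow{\sim}A\times_{(\varphi_A,\varphi_N).(S,f)}N$ from Lemma~\ref{natural action} yields the desired equivalence $E\sim A\times_{(\varphi_A,\varphi_N).(S,f)}N$. For the converse, any equivalence $\beta:A\times_{(\varphi_A,\varphi_N).(S,f)}N\xrightarrow{\sim}E$ composes with $\mu_{(\varphi_A,\varphi_N)}$ to give an equivalence $E'\xrightarrow{\sim}E$; unwinding the extension-equivalence conditions identifies this with an element of $\Aut(\widehat N,A)$ whose image under $\Phi$ is $(\varphi_A,\varphi_N)$.

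For part (b), I would parametrize all lifts by writing them in the coordinates of $\widehat N=A\times_{(S,f)}N$. The condition $\widehat q\circ\varphi=\varphi_N\circ\widehat q$ forces the second coordinate of $\varphi(a,n)$ to be $\varphi_N(n)$, while $\varphi_{\mid A}=\varphi_A$ pins down the $a$-dependence of the first coordinate as affine with linear part $\varphi_A$. Defining $h(\varphi_N(n))$ to be the $A$-component of $\varphi(0,n)$ then gives
\[
\varphi(a,n)=(\varphi_A(a)+h(\varphi_N(n)),\,\varphi_N(n))
\]
for a unique normalized $h:N\to A$. Imposing the homomorphism identity $\varphi((a,n)(a',n'))=\varphi(a,n)\varphi(a',n')$ and separating the $a'$-linear and $a'$-independent parts produces two conditions: the equivariance $\varphi_A\circ S(n)=S(\varphi_N(n))\circ\varphi_A$, which is precisely $(\varphi_A,\varphi_N).S=S$; and, after the substitution $m=\varphi_N(n)$, $m'=\varphi_N(n')$, the identity
\[
((\varphi_A,\varphi_N).f)(m,m')=f(m,m')+h(m)+S(m)h(m')-h(mm'),
\]
that is, $(\varphi_A,\varphi_N).f=f+d_N h$. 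Together these read $(\varphi_A,\varphi_N).(S,f)=(S,f+d_N h)$. Smoothness of $\varphi$ as a Lie group morphism amounts to local smoothness of $h$ at $1_N$, i.e.\ $h\in C^1_s(N,A)$. The converse direction runs the same computation backwards: one checks directly that the stated formula defines a smooth group automorphism of $\widehat N$ preserving $A$ and inducing $(\varphi_A,\varphi_N)$.

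The main obstacle I foresee is careful bookkeeping of directions and inverses when passing among $E$, $E'$, and $A\times_{(\varphi_A,\varphi_N).(S,f)}N$, and in particular keeping track of where $\varphi_A$ versus $\varphi_A^{-1}$ appears in the inclusion maps; a second, related point is to ensure that local smoothness of the cochain $h$ genuinely promotes $\varphi$ to a smooth automorphism. The latter follows from the construction of the Lie group structure on $A\times_{(S,f)}N$ from the local smoothness of the underlying cocycle, as recalled in Section~\ref{pre and not}.
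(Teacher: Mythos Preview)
Your proposal is correct and follows essentially the same route as the paper: part (a) is identical, interpreting a lift of $(\varphi_A,\varphi_N)$ as an equivalence $E'\to E$ and passing through Lemma~\ref{natural action}. In part (b) the paper organizes the computation slightly differently---it factors $\varphi=\nu\circ\mu_{(\varphi_A,\varphi_N)}$ and reads off $(\varphi_A,\varphi_N).(S,f)=(S,f+d_Nh)$ from the fact that $\nu:(a,n)\mapsto(a+h(n),n)$ is an equivalence of extensions---whereas you verify the homomorphism identity directly; these are the same argument in two packagings.
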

\begin{proof}
(a) Each element $\varphi\in\Aut(\widehat{N},A)$ gives rise to an equivalence of extensions of the form
\[\xymatrix{ 1 \ar[r]& A\ar[r]^{\varphi_A^{-1}} \ar@{=}[d]& \widehat{N}\ar[r]^{\varphi_N\circ\widehat{q}} \ar[d]^{\varphi}& N\ar[r] \ar@{=}[d]& 1 \\
1 \ar[r]& A\ar[r] & \widehat{N}\ar[r]^{\widehat{q}} & N \ar[r] & 1,}
\]where $(\varphi_A,\varphi_N)=\Phi(\varphi)$. The equivalence of $\widehat{N}$ and \mbox{$A\times_{(\varphi_A,\varphi_N).(S,f)}N$} is therefore a consequence of Lemma \ref{natural action}. If, conversely, $\widehat{N}$ and $A\times_{(\varphi_A,\varphi_N).(S,f)}N$ are equivalent, then it follows from Lemma \ref{natural action} that there exists an equivalence of extensions of the form
\[\xymatrix{ 1 \ar[r]& A\ar[r]^{\varphi_A^{-1}} \ar@{=}[d]& \widehat{N}\ar[r]^{\varphi_N\circ \widehat{q}} \ar[d]^{\varphi}& N\ar[r] \ar@{=}[d]& 1 \\
1 \ar[r]& A\ar[r] & \widehat{N}\ar[r]^{\widehat{q}} & N \ar[r] & 1,}
\]which in turn implies that $\Phi(\varphi)=(\varphi_A,\varphi_N)$.

(b) Let $\phi\in\Aut(\widehat{N},A)$ with $\Phi(\varphi)=(\varphi_A,\varphi_N)$. Then we have
\[\varphi(a,n)=(\varphi_A(a)+h(\varphi_N(n)),\varphi_N(n))
\]for some $h\in C^1_s(N,A)$. Moreover, if $(S',f'):=(\varphi_A,\varphi_N).(S,f)$ and 
\[\mu_{(\varphi_A,\varphi_N)}:A\times_{(S,f)} N\rightarrow A\times_{(S',f')}N,\quad (a,n)\mapsto(\varphi_A(a),\varphi_N(n))
\]is the isomorphism of Lie groups from Lemma \ref{natural action}, then we obtain an isomorphism
\[\nu:=\varphi\circ\mu_{(\varphi_A,\varphi_N)}^{-1}:A\times_{(S',f')} N\rightarrow A\times_{(S,f)}N,\quad (a,n)\mapsto (a+h(n),n)
\]which is an equivalence of extensions. Consequently, $(S',f')=(S,f+d_Nh)$. If, conversely, $$(S',f'):=(\varphi_A,\varphi_N).(S,f)=(S,f+d_Nh),$$ then $\nu$ is an equivalence of extensions and $\varphi:=\nu\circ \mu_{(\varphi_A,\varphi_N)}$ an automorphism of Lie groups satisfying $\Phi(\varphi)=(\varphi_A,\varphi_N)$.
\end{proof}

Summarizing the previous results, we have shown the following statement:

\begin{theorem}
Let $(\Aut(A)\times\Aut(N))_{[E]}$ be the subgroup of $\Aut(A)\times\Aut(N)$ preserving the extension class $[E]\in\Ext(N,A)$. Then the following sequence of groups is exact:
\[1\longrightarrow Z^1_s(N,A)\stackrel{\Psi}{\longrightarrow}\Aut(\widehat{N},A)\stackrel{\Phi}{\longrightarrow}(\Aut(A)\times\Aut(N))_{[E]}\longrightarrow 1.
\]
\end{theorem}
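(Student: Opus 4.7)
The plan is to assemble the three exactness assertions directly from the preceding Lemma \ref{ker Phi} and Proposition \ref{lifting condition group}, so very little new work is needed; the whole point is to reinterpret the image of $\Phi$ in terms of the natural action of $\Aut(A)\times\Aut(N)$ on $\Ext(N,A)$.

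First I would dispose of exactness at $Z^1_s(N,A)$ and at $\Aut(\widehat N,A)$ in one stroke: Lemma \ref{ker Phi} already asserts that $\Psi$ is an injective group homomorphism with image exactly $\ker(\Phi)$. That is literally exactness at the first two terms, so nothing further is required.

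Next I would handle the image of $\Phi$. Fix a 2-cocycle $f\in Z^2_s(N,A)$ with $\widehat N\cong A\times_{(S,f)}N$, so that the class $[E]\in\Ext(N,A)$ corresponds to $[f]$. For the inclusion $\im(\Phi)\subseteq(\Aut(A)\times\Aut(N))_{[E]}$, I would pick $\varphi\in\Aut(\widehat N,A)$ with $\Phi(\varphi)=(\varphi_A,\varphi_N)$ and invoke Proposition \ref{lifting condition group}(a) to conclude that the extensions $\widehat N$ and $A\times_{(\varphi_A,\varphi_N).(S,f)}N$ are equivalent; by Lemma \ref{natural action} the latter represents the class $(\varphi_A,\varphi_N).[E]$, hence $(\varphi_A,\varphi_N).[E]=[E]$. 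For the reverse inclusion, I would start with any $(\varphi_A,\varphi_N)$ that stabilises $[E]$; again Lemma \ref{natural action} translates the stabilisation condition into the equivalence of $\widehat N$ and $A\times_{(\varphi_A,\varphi_N).(S,f)}N$, and Proposition \ref{lifting condition group}(a) then produces the required $\varphi\in\Aut(\widehat N,A)$ with $\Phi(\varphi)=(\varphi_A,\varphi_N)$.

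Because the work has already been done in the preparatory results, there is no genuine obstacle; the only point that deserves care is making sure the identification $(\varphi_A,\varphi_N).[E]=[A\times_{(\varphi_A,\varphi_N).(S,f)}N]$ really is what is encoded in Lemma \ref{natural action}. This is a matter of unwinding the definition of the action of $\Aut(A)\times\Aut(N)$ on $\Ext(N,A)$: acting on $E$ by $(\varphi_A,\varphi_N)$ produces the short exact sequence with maps $\varphi_A^{-1}$ and $\varphi_N\circ\widehat q$, and Lemma \ref{natural action} exhibits an explicit isomorphism of this sequence with $A\times_{(\varphi_A,\varphi_N).(S,f)}N$. Once that identification is on the table, both implications of Proposition \ref{lifting condition group}(a) translate verbatim into the two inclusions needed, and the theorem follows.
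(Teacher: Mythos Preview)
Your proposal is correct and matches the paper's approach exactly: the paper simply states that the theorem summarises the previous results without giving a separate proof, and you have spelled out precisely how Lemma~\ref{ker Phi} handles exactness at the first two terms while Proposition~\ref{lifting condition group}(a) together with Lemma~\ref{natural action} identifies $\im(\Phi)$ with the stabiliser of $[E]$.
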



In what follows we denote by $S_{\psi}$ the action of the group $G$ on $C^1_s(N,A)$ given by $g.f:=\psi(g).f$ and write $(C^p_s(G,C^1_s(N,A)),d_{S_{\psi}})$ for the corresponding chain complex consisting of maps $f:G^p\rightarrow C^1_s(N,A)$ for which the induced map 
\[G^p\times N\rightarrow A, \quad (g_1,\ldots,g_p,n)\mapsto f(g_1,\ldots,g_p)(n)
\]is smooth in an identity neighbourhood of $G^p\times N$.

\begin{definition}\label{smthcohoinv}(\cite[Definition D.5]{Ne04}). We call an element $f\in Z^2_s(N,A)$ \emph{smoothly cohomologically invariant} if there exists a map $\theta\in C^1_s(G,C^1_s(N,A))$ satisfying 
\begin{align*}
d_N(\theta(g))=g.f-f \quad \text{for all} \quad g\in G.
\end{align*}
In particular, we write $Z^2_s(N,A)^{[G]}$ for the set of smoothly cohomologically invariant cocycles in the group $Z^2_s(N,A)$. A few moments thought then shows that $B^2_s(N,A)\subseteq Z^2_s(N,A)^{[G]}$. In fact, if $f=d_Nh$ for some $h\in C^1_s(N,A)$, then we may choose $\theta(g):=g.h-h$. We can therefore define the space of \emph{smoothly invariant cohomology classes} by
\[\text{H}^2_s(N,A)^{[G]}:=Z^2_s(N,A)^{[G]}/B^2_s(N,A).
\]
\end{definition}

\begin{lemma}
\label{lifting condition group I}
Let $f\in Z^2_s(N,A)^{[G]}$ and $\widehat{N}:=A\times_{(S,f)} N$ be the corresponding abelian Lie group extension. Furthermore, let $\theta$ be as in Definition \ref{smthcohoinv}. Then the following assertions hold:
\begin{itemize}
\item[(a)]
For each $g\in G$ the map $\widehat{\psi}_{\theta}(g):\widehat{N}\rightarrow\widehat{N}$ defined by
\[\widehat{\psi}_{\theta}(g)(a,n):=\bigl(S(g)(a)+\theta(g)(c_g(n)),c_g(n)\bigr)
\]defines an element of $\Aut(\widehat{N},A)$ with $\Phi(\widehat{\psi}_{\theta}(g))=\psi(g)$. In particular, we have $\im(\psi)\subseteq \Aut(A)\times\Aut(N))_{[E]}$.
\item[(b)]
For each $g\in G$ the map $\widehat{\phi}_{\theta}(g):\widehat{N}\rightarrow\widehat{N}$ defined by
\[\widehat{\phi}_{\theta}(g)(a,n):=\bigl(S(g^{-1})(a)-g^{-1}.\theta(g)(c_{g^{-1}}(n)),c_{g^{-1}}(n)\bigr)
\]is the inverse of the automorphism $\widehat{\psi}_{\theta}(g)$. 
\item[(c)]
The map $\delta_{\widehat{\psi}_{\theta}}:G\times G\rightarrow \ker(\Phi)$, $\delta_{\widehat{\psi}_{\theta}}(g,g'):=\widehat{\psi}_{\theta}(g)\widehat{\psi}_{\theta}(g')\widehat{\psi}_{\theta}(gg')^{-1}$ satisfies
\[\delta_{\widehat{\psi}_{\theta}}(g,g')=(d_{S_{\psi}}\theta(g,g')\circ\widehat{q})\cdot\id_{\widehat{N}}.
\]In particular, the element $d_{S_{\psi}}\theta\in C^2_s(G,C^1_s(N,A))$ has values in $Z^1_s(N,A)$. 
\item[(d)]
The map $\widehat{\psi}_{\theta}:G\rightarrow\Aut(\widehat{N},A)$, $g\mapsto\widehat{\psi}_{\theta}(g)$ is a smooth group homomorphism if and only if the map $\theta$ is a locally smooth 1-cocycle, i.e., if $\theta\in Z^1_s(G,C^1_s(N,A))_{S_{\psi}}$.
\end{itemize}
\end{lemma}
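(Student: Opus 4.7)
The plan is to verify parts (a)--(d) in order, each reducing via the explicit formulas to a direct computation driven by the defining relation $d_N(\theta(g))=g.f-f$ and the definition $(g.h)(m)=S(g)(h(c_{g^{-1}}(m)))$ of the action $S_{\psi}$. For (a), I would expand both sides of $\widehat{\psi}_\theta(g)((a,n)(a',n'))=\widehat{\psi}_\theta(g)(a,n)\cdot\widehat{\psi}_\theta(g)(a',n')$ using the multiplication $(a,n)(a',n')=(a+n.a'+f(n,n'),nn')$ of $\widehat{N}=A\times_{(S,f)}N$. After invoking the equivariance $S(g)\circ S(n)=S(c_g(n))\circ S(g)$ and cancelling matching terms, the homomorphism property reduces to the identity
\[
S(g)(f(n,n'))+\theta(g)(c_g(nn'))=f(c_g(n),c_g(n'))+\theta(g)(c_g(n))+c_g(n).\theta(g)(c_g(n')),
\]
which is precisely $d_N(\theta(g))=g.f-f$ evaluated at the pair $(c_g(n),c_g(n'))$ once one expands $(g.f)(m,m')=S(g)(f(c_{g^{-1}}(m),c_{g^{-1}}(m')))$. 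Preservation of $A$ is immediate since $c_g(1_N)=1_N$ and $\theta(g)(1_N)=0$; reading the restriction $\varphi_A=S(g)$ and the induced map $\varphi_N=c_g$ off the formula yields $\Phi(\widehat{\psi}_\theta(g))=\psi(g)$, and the inclusion $\im(\psi)\subseteq(\Aut(A)\times\Aut(N))_{[E]}$ follows from the preceding theorem characterising the image of $\Phi$.

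For (b), a direct computation of $\widehat{\psi}_\theta(g)\circ\widehat{\phi}_\theta(g)$ and $\widehat{\phi}_\theta(g)\circ\widehat{\psi}_\theta(g)$ using only the identity $S(g)\bigl((g^{-1}.\theta(g))(c_{g^{-1}}(n))\bigr)=\theta(g)(n)$, which is immediate from the definition of $S_{\psi}$ together with $S(g)\circ S(g^{-1})=\id_A$ and $c_g\circ c_{g^{-1}}=\id_N$, produces the identity on $\widehat{N}$ in both orders.

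For (c), applying $\Phi$ to $\delta_{\widehat{\psi}_\theta}(g,g')$ and using that $\psi$ is a group homomorphism yields the identity of $\Aut(A)\times\Aut(N)$, so by Lemma \ref{ker Phi} this element lies in the image of $\Psi$ and must be of the form $(h(g,g')\circ\widehat{q})\cdot\id_{\widehat{N}}$ for some $h(g,g')\in Z^1_s(N,A)$. A direct computation of the $A$-component of $\widehat{\psi}_\theta(g)\widehat{\psi}_\theta(g')(a,n)$, comparison with $\widehat{\psi}_\theta(gg')(a,n)$, and the substitution $m=c_{gg'}(n)$ identify
\[
h(g,g')(m)=(g.\theta(g'))(m)-\theta(gg')(m)+\theta(g)(m)=(d_{S_{\psi}}\theta(g,g'))(m).
\]
The conclusion that $d_{S_{\psi}}\theta(g,g')\in Z^1_s(N,A)$ is then automatic, since $\delta_{\widehat{\psi}_\theta}(g,g')\in\ker(\Phi)$.

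For (d), part (c) immediately converts the group-homomorphism condition $\delta_{\widehat{\psi}_\theta}\equiv\id_{\widehat{N}}$ into the 1-cocycle condition $d_{S_{\psi}}\theta=0$, so $\widehat{\psi}_\theta$ is a homomorphism iff $\theta\in Z^1_s(G,C^1_s(N,A))_{S_{\psi}}$. The smoothness assertion is then read off the explicit formula $\widehat{\psi}_\theta^{\wedge}(g,(a,n))=(S(g)(a)+\theta(g)(c_g(n)),c_g(n))$: each ingredient is locally smooth by the standing assumptions on $S$, $c$, and $\theta$, so the $\wedge$-map is smooth on an identity neighbourhood of $G\times\widehat{N}$, and the group-homomorphism property then propagates this to a smooth homomorphism $G\to\Aut(\widehat{N},A)$ in the sense of Section \ref{pre and not}. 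The main technical obstacle throughout is the bookkeeping of the three distinct actions in play --- the $N$-module structure on $A$, the conjugation action of $G$ on $N$, and the induced $G$-action $S_{\psi}$ on $C^1_s(N,A)$ --- but once these are aligned each assertion of the lemma reduces to a mechanical calculation governed by the cocycle/coboundary relations satisfied by $\theta$.
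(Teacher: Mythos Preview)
Your proposal is correct and follows essentially the same route as the paper. The only noteworthy difference is in part~(a): the paper dispatches this in one line by invoking Proposition~\ref{lifting condition group}(b) (applied with $(\varphi_A,\varphi_N)=(S(g),c_g)$ and $h=\theta(g)$, so that the condition $(\varphi_A,\varphi_N).(S,f)=(S,f+d_Nh)$ is exactly $d_N(\theta(g))=g.f-f$), whereas you unwind that proposition and verify the homomorphism property by a direct computation. Both are fine; the paper's version is shorter because the relevant work was already packaged in Proposition~\ref{lifting condition group}. For parts~(b)--(d) your argument matches the paper's almost verbatim, the only cosmetic difference being that in~(d) the paper outsources the smoothness assertion to \cite[Proposition~A.13]{Ne06} rather than reading it off the explicit formula as you do.
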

\begin{proof}
(a) The first assertion is a direct consequence of Proposition \ref{lifting condition group} (b) applied for each $g\in G$ to the locally smooth 1-cochain $\theta(g)\in C^1_s(N,A)$.

(b) The second assertion immediately follows from a direct calculation. Indeed, we have for all $g\in G$ and $(a,n)\in\widehat{N}$
\begin{align*}
\bigl(\widehat{\psi}_{\theta}(g)\circ\widehat{\phi}_{\theta}(g)\bigr)(a,n)&=\widehat{\psi}_{\theta}(g)\bigl(S(g^{-1})(a)-g^{-1}.\theta(g)(c_{g^{-1}}(n)),c_{g^{-1}}(n)\bigr)\\
&=\bigl(a-\theta(g)(n)+\theta(g)(n),n\bigr)=(a,n)
\end{align*}
and 
\begin{align*}
\bigl(\widehat{\phi}_{\theta}(g)\circ\widehat{\psi}_{\theta}(g)\bigr)(a,n)&=\widehat{\phi}_{\theta}(g)\bigl(S(g)(a)+\theta(g)(c_g(n)),c_g(n)\bigr)\\
&=\bigl(a+S(g^{-1})(\theta(g)(c_g(n))-g^{-1}.\theta(g)(n)),n\bigr)=(a,n).
\end{align*}

(c) We first note that we have for all $g,g'\in G$ and $(a,n)\in\widehat{N}$
\begin{align*}
\widehat{\psi}_{\theta}(g)\widehat{\psi}_{\theta}(g')(a,n)&=\widehat{\psi}_{\theta}(g)\bigl(S(g')(a)+\theta(g')(c_{g'}(n)),c_{g'}(n)\bigr)\\
&=\bigl(S(gg')(a)+S(g)\theta(g')(c_{g'}(n))+\theta(g)(c_{gg'}(n)),c_{gg'}(n)\bigr)\\
&=\bigl(S(gg')(a)+(g.\theta(g')+\theta(g))(c_{gg'}(n)),c_{gg'}(n)\bigr).
\end{align*}
Furthermore, part (b) of the statement implies that
\begin{align*}
\widehat{\psi}_{\theta}(gg')^{-1}(a,n)=\bigl(S((gg')^{-1})(a)-(gg')^{-1}.\theta(gg')(c_{(gg')^{-1}}(n)),c_{(gg')^{-1}}(n)\bigr).
\end{align*}
A short calculation therefore shows that
\begin{align*}
\delta_{\widehat{\psi}_{\theta}}(g,g')(a,n)&=\bigl(a-\theta(gg')(n)+(g.\theta(g')+\theta(g)(n),n)\\
&=(d_{S_{\psi}}\theta(g,g')(n)+a,n\bigr),
\end{align*}
which in turn gives the desired equality $$\delta_{\widehat{\psi}_{\theta}}(g,g')=(d_{S_{\psi}}\theta(g,g')\circ\widehat{q})\cdot\id_{\widehat{N}}.$$
That the element $d_{S_{\psi}}\theta\in C^2_s(G,C^1_s(N,A))$ has values in $Z^1_s(N,A)$ is a consequence of Lemma \ref{ker Phi}.

(d) The formula in part (c) shows that the map $\widehat{\psi}_{\theta}$ is a group homomorphism if and only if $\theta\in Z^1_s(G,C^1_s(N,A))_{S_{\psi}}$. The smoothness of $\widehat{\psi}_{\theta}$ is in this case automatic and follows from \cite[Proposition A.13]{Ne06}.
\end{proof}

\begin{proposition}\label{lifting condition group II}
Suppose that we are in the situation of Lemma \ref{lifting condition group I}. Then the map $d_{S_{\psi}}\theta$ is the locally smooth 2-cocycle of the abelian extension $\psi^*(\Aut(\widehat{N},A))$ of $G$ by $Z^1_s(N,A)$ obtained by pulling back the abelian extension \[1\longrightarrow Z^1_s(N,A)\stackrel{\Psi}{\longrightarrow}\Aut(\widehat{N},A)\stackrel{\Phi}{\longrightarrow}(\Aut(A)\times\Aut(N))_{[E]}\longrightarrow1.
\]
\end{proposition}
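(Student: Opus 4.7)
My plan is to exhibit $d_{S_{\psi}}\theta$ directly as the factor system arising from a canonical locally smooth section of the pulled-back extension, and then appeal to Lemma~\ref{lifting condition group I}(c) to identify the cocycle.

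First, I would write down the pullback extension explicitly as
\[
\psi^{*}(\Aut(\widehat{N},A)) = \{(g,\varphi)\in G\times \Aut(\widehat{N},A) : \psi(g) = \Phi(\varphi)\},
\]
with the obvious projection to $G$ and the embedding of the kernel $\Psi(Z^1_s(N,A))\hookrightarrow\{1_G\}\times\ker(\Phi)$. The exactness of
\[
1\longrightarrow Z^1_s(N,A)\stackrel{\Psi}{\longrightarrow}\Aut(\widehat{N},A)\stackrel{\Phi}{\longrightarrow}(\Aut(A)\times\Aut(N))_{[E]}\longrightarrow 1
\]
together with Lemma~\ref{lifting condition group I}(a), which guarantees $\im(\psi)\subseteq(\Aut(A)\times\Aut(N))_{[E]}$, ensures that this is indeed an abelian extension of $G$ by $Z^1_s(N,A)$.

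Next, I would use $\widehat{\psi}_{\theta}$ to construct a normalized locally smooth section
\[
\sigma:G\longrightarrow \psi^{*}(\Aut(\widehat{N},A)),\quad g\longmapsto (g,\widehat{\psi}_{\theta}(g)).
\]
Well-definedness is immediate from Lemma~\ref{lifting condition group I}(a), which gives $\Phi(\widehat{\psi}_{\theta}(g))=\psi(g)$, and normalization $\sigma(1_G)=(1_G,\id_{\widehat{N}})$ follows from $\theta(1_G)=0$ (which we may assume since $\theta\in C^1_s(G,C^1_s(N,A))$ is normalized by definition). Local smoothness reduces to local smoothness of the induced map $G\times \widehat{N}\to \widehat{N}$, $(g,(a,n))\mapsto (S(g)(a)+\theta(g)(c_g(n)),c_g(n))$, which in turn follows from local smoothness of $S$, the conjugation action of $G$ on $N$, and the assumption $\theta\in C^1_s(G,C^1_s(N,A))$.

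The 2-cocycle $\omega_{\sigma}\in Z^2_s(G,Z^1_s(N,A))$ that classifies the pulled-back extension with respect to this section is by definition determined by
\[
\Psi(\omega_{\sigma}(g,g')) = \sigma(g)\sigma(g')\sigma(gg')^{-1} = (1_G,\delta_{\widehat{\psi}_{\theta}}(g,g')),
\]
and Lemma~\ref{lifting condition group I}(c) computed exactly this quantity, giving
\[
\delta_{\widehat{\psi}_{\theta}}(g,g')=\Psi\bigl(d_{S_{\psi}}\theta(g,g')\bigr).
\]
Since $\Psi$ is injective, we conclude $\omega_{\sigma}=d_{S_{\psi}}\theta$, which is the desired identification.

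The only genuinely delicate point is checking that $d_{S_{\psi}}\theta$ actually lies in $Z^2_s(G,Z^1_s(N,A))$ (i.e.\ that it is locally smooth in the appropriate sense and satisfies the cocycle identity). The values land in $Z^1_s(N,A)$ by the final assertion of Lemma~\ref{lifting condition group I}(c); the cocycle identity is automatic from the associativity of composition in $\Aut(\widehat{N},A)$ applied to the relation above; and local smoothness of $(g,g')\mapsto d_{S_{\psi}}\theta(g,g')$ as a map into $C^1_s(N,A)$ follows from local smoothness of $\theta$ together with the local smoothness of the $G$-action $S_{\psi}$ on $C^1_s(N,A)$, exactly in the form needed to invoke \cite[Proposition~A.13]{Ne06} as in part~(d) of the preceding lemma. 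I expect this smoothness bookkeeping to be the main obstacle; the algebraic content is already contained in Lemma~\ref{lifting condition group I}(c).
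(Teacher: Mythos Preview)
Your proposal is correct and follows essentially the same route as the paper: define the section $\sigma_\theta(g)=(\widehat{\psi}_\theta(g),g)$ of the pulled-back extension, verify its local smoothness via the induced maps on $G\times\widehat{N}$, compute $\delta_{\sigma_\theta}(g,g')=(\delta_{\widehat{\psi}_\theta}(g,g'),1_G)$, and invoke Lemma~\ref{lifting condition group I}(c). The paper's proof is slightly terser and does not spell out the extra cocycle-property bookkeeping you include at the end, but the argument is the same.
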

\begin{proof}
We first note that the map 
\[\sigma_{\theta}:G\rightarrow\psi^*(\Aut(\widehat{N},A)), \quad g\mapsto(\widehat{\psi}_{\theta}(g),g)
\]defines a locally smooth normalized section of the pulled back extension, i.e., the induced maps
\[\sigma_{\theta}^{\wedge}:G\times\widehat{N}\rightarrow\widehat{N}\times G, \quad (g,(a,n))\mapsto(\widehat{\psi}_{\theta}(g)(a,n),g)
\]and
\[(\sigma_{\theta}^{-1})^{\wedge}:G\times\widehat{N}\rightarrow\widehat{N}\times G, \quad (g,(a,n))\mapsto(\widehat{\phi}_{\theta}(g)(a,n),g^{-1})
\]are smooth on $U\times \widehat{N}$ for some open identity neighbourhood $U\subseteq G$. Moreover, a few moments thought shows that 
\begin{align*}
\delta_{\sigma_{\theta}}(g,g'):=\sigma_{\theta}(g)\sigma_{\theta}(g')\sigma_{\theta}(gg')^{-1}=(\delta_{\widehat{\psi}_{\theta}}(g,g'),1_G).
\end{align*}
holds for all $g,g'\in G$. The statement therefore follows from Lemma \ref{lifting condition group I} (c).
\end{proof}

\begin{proposition}\label{lifting condition group III}
Let $f\in Z^2_s(N,A)$ and $\widehat{N}:=A\times_{(S,f)} N$ be the corresponding abelian Lie group extension. Then the smooth group homomorphism $\psi$ lifts to a smooth homomorphism $\widehat{\psi}:G\rightarrow\Aut(\widehat{N},A)$ if and only if $f\in Z^2_s(N,A)^{[G]}$ and the corresponding cohomology class
\[[d_{S_{\psi}}\theta]=[\psi^*(\Aut(\widehat{N},A))]\in\emph{H}^2_s(G,Z^1_s(N,A))_{S_{\psi}}
\]vanishes \emph{(}cf. Proposition \ref{lifting condition group II}\emph{)}.
\end{proposition}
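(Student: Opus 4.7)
My plan is to prove both implications using the machinery assembled in Lemma \ref{lifting condition group I} and Proposition \ref{lifting condition group II}, treating $\theta$ as the key bookkeeping object that mediates between lifts of $\psi$ and cocycles in $C^2_s(G, Z^1_s(N,A))$.

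For the forward direction, suppose a smooth lift $\widehat{\psi}: G \rightarrow \Aut(\widehat{N},A)$ of $\psi$ exists. For each $g \in G$, $\Phi(\widehat{\psi}(g)) = \psi(g) = (S(g), c_g)$, so Proposition \ref{lifting condition group}(b) yields a unique $\theta(g) \in C^1_s(N,A)$ with
\[
\widehat{\psi}(g)(a,n) = \bigl(S(g)(a) + \theta(g)(c_g(n)), c_g(n)\bigr)
\]
and $(S(g), c_g).(S,f) = (S, f + d_N \theta(g))$. The second identity reads $g.f - f = d_N \theta(g)$, witnessing $f \in Z^2_s(N,A)^{[G]}$. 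The smoothness of $\widehat{\psi}^\wedge$ on a neighbourhood of $\{1_G\} \times \widehat{N}$ translates, after combining with smoothness of $c_{(\cdot)}$, into local smoothness of $(g,n) \mapsto \theta(g)(n)$, so $\theta \in C^1_s(G, C^1_s(N,A))$. Since $\widehat{\psi}$ is a homomorphism, $\widehat{\psi} = \widehat{\psi}_\theta$ and $\delta_{\widehat{\psi}_\theta} = 1$, so by Lemma \ref{lifting condition group I}(c) we have $d_{S_\psi}\theta = 0$ as an element of $C^2_s(G, Z^1_s(N,A))$, and a fortiori $[d_{S_\psi}\theta] = 0$.

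For the converse, assume $f \in Z^2_s(N,A)^{[G]}$ with some choice of $\theta \in C^1_s(G, C^1_s(N,A))$ realising $d_N \theta(g) = g.f - f$, and assume $[d_{S_\psi}\theta] = 0$ in $\mathrm{H}^2_s(G, Z^1_s(N,A))_{S_\psi}$. Then there exists $\beta \in C^1_s(G, Z^1_s(N,A))$ with $d_{S_\psi}\theta = d_{S_\psi}\beta$. Regard $\beta$ as a locally smooth 1-cochain $G \rightarrow C^1_s(N,A)$ and set $\theta' := \theta - \beta$. Because $\beta(g) \in Z^1_s(N,A)$ implies $d_N \beta(g) = 0$, we still have $d_N \theta'(g) = g.f - f$, while $d_{S_\psi}\theta' = 0$, i.e., $\theta' \in Z^1_s(G, C^1_s(N,A))_{S_\psi}$. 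Lemma \ref{lifting condition group I}(a) then produces $\widehat{\psi}_{\theta'}(g) \in \Aut(\widehat{N},A)$ with $\Phi \circ \widehat{\psi}_{\theta'} = \psi$, and Lemma \ref{lifting condition group I}(d) upgrades $\widehat{\psi}_{\theta'}$ to a smooth group homomorphism $G \rightarrow \Aut(\widehat{N},A)$, which is the desired lift.

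The main technical point is ensuring that the reduction $\theta \leadsto \theta'$ in the backward direction preserves both the condition $d_N \theta'(g) = g.f - f$ and the local smoothness required to feed into Lemma \ref{lifting condition group I}(d); this is why it is essential that the coboundary witness $\beta$ be taken in $C^1_s(G, Z^1_s(N,A))$ rather than merely in $C^1_s(G, C^1_s(N,A))$, which is exactly what vanishing of $[d_{S_\psi}\theta]$ in the intended cohomology group gives us. A secondary subtlety is that the class $[d_{S_\psi}\theta]$ must be independent of the chosen $\theta$; this is already subsumed by Proposition \ref{lifting condition group II}, which identifies it with the intrinsic class $[\psi^*(\Aut(\widehat{N},A))]$, so the hypothesis is unambiguous.
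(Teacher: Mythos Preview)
Your proof is correct and follows essentially the same route as the paper's: both directions hinge on Lemma \ref{lifting condition group I}(c),(d) applied to a suitably adjusted $\theta$, and your subtraction $\theta' = \theta - \beta$ is exactly the paper's $\vartheta = \theta - \theta'$. The only cosmetic difference is that for the local smoothness of $(g,n)\mapsto\theta(g)(n)$ in the forward direction the paper invokes \cite[Proposition A.13(1)]{Ne06}, whereas you sketch the extraction directly from the smoothness of $\widehat{\psi}^{\wedge}$.
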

\begin{proof}
Suppose first that $f\in Z^2_s(N,A)^{[G]}$ and that the corresponding cohomology class $[d_{S_{\psi}}\theta]$ vanishes. Then there exists an element $\theta'\in C^1_s(G,Z^1_s(N,A))$ with $d_{S_{\psi}}\theta=d_{S_{\psi}}\theta'$ which shows that the map $\vartheta:=\theta-\theta'$ defines an element in $C^1_s(G,C^1_s(N,A))$ satisfying $d_{S_{\psi}}\vartheta=0$, i.e., that $\vartheta\in Z^1_s(G,C^1_s(N,A))_{S_{\psi}}$. Moreover, a short calculations gives
\[d_N(\vartheta(g))=d_N(\theta(g))=g.f-f
\]for all $g\in G$. Therefore Lemma \ref{lifting condition group I} implies that $\widehat{\psi}:=\widehat{\psi}_{\vartheta}:G\rightarrow\Aut(\widehat{N},A)$ is a smooth group homomorphism lifting $\psi$. If, conversely, the smooth group homomorphism $\psi$ lifts to a smooth group homomorphism $\widehat{\psi}:G\rightarrow\Aut(\widehat{N},A)$, then it follows from Propositon \ref{lifting condition group} and \cite[Proposition A.13 (1)]{Ne06} that there exists a map $\theta\in C^1_s(G,C^1_s(N,A))$ as in Definition \ref{smthcohoinv} with $\widehat{\psi}=\widehat{\psi}_{\theta}$. In particular, we conclude that $f\in Z^2_s(N,A)^{[G]}$. Lemma \ref{lifting condition group I} (c) therefore implies that $\theta\in Z^1_s(G,C^1_s(N,A))_{S_{\psi}}$ which in turn means that $d_{S_{\psi}}\theta$ vanishes.
\end{proof}

\begin{remark}
It is not hard to see that each locally smooth 1-cocycle $f\in Z^1_s(N,A)$ actually defines a globally smooth map from $N$ to $A$ (cf. \cite[Lemma III.1]{Ne04}).
We further recall, that if the action of $N$ on $A$ is trivial, i.e., if we are in the context of central extensions, then $Z^1_s(N,A)=\Hom_{\text{gr}}(N,A)$.
\end{remark}



\begin{example}\label{action on loop groups}
Let $H$ be a compact, 1-connected and semisimple Lie group and 
\[G=PH:=\{\xi\in C^{\infty}(\mathbb{R},H):\,\xi(0)=1_H \,\,\,\text{and}\,\,\,\xi^{-1}d\xi \,\,\, 2\pi\text{-periodic}\}
\]the path group of $H$. Furthermore, let $N$ be the split normal Lie subgroup of $G$ realized as the based loop group $$\Omega H:=\{\xi\in C^{\infty}(S^1,H):\,\xi(1)=1_H\}$$ of $H$. It is a well-known fact, that the space $PH$ is a contractible $\Omega H$-principal bundle over $H$, hence $\pi_n(\Omega H)=\pi_{n+1}(H)$ holds for each $n\in\mathbb{N}_0$. In particular, $\Omega H$ is 1-connected. Moreover, \cite[Proposition 3.4.1]{PrSe86} implies that its Lie algebra $L\mathfrak{h}$ is topologically perfect, that is, the commutator algebra is dense in $L\mathfrak{h}$. If we identify $S^1$ with $\mathbb{R}/2\pi\mathbb{Z}$, i.e., functions on $S^1$ with $2\pi$-periodic functions on $\mathbb{R}$, then
\[\omega(\zeta,\eta):=\int_{[0,2\pi]} <\zeta(t),\eta'(t)> dt,
\]where $\zeta,\eta\in L\mathfrak{h}$ and $<\cdot, \cdot>$ is an invariant bilinear form on $\mathfrak{k}$, defines a continuous $\mathbb{R}$-valued 2-cocycle on $L\mathfrak{h}$ which is unique in the sense that any complex valued cocycle up to coboundaries is a multiple of $\omega.$ In particular, a short calculation shows that the map
\[\lambda:G\rightarrow\Lin_c(L\mathfrak{h},\mathbb{R}),\quad \lambda(g)(\zeta):=\int_{[0,2\pi]}
<g^{-1}(t)g'(t),\zeta(t)>dt
\]
is smooth and satisfies $\omega(g.\zeta,g.\eta)-\omega(\zeta,\eta)=\lambda(g)([\zeta,\eta])$ for all $g\in G$ and $\zeta,\eta\in L\mathfrak{h}$. Integrating this Lie algebra 2-cocycle therefore gives rise to a smoothly cohomologically invariant element in $Z^2_s(\Omega H,S^1)$. Since the group $Z^1_s(\Omega H,S^1)=\Hom_{\text{gr}}(\Omega H,S^1)$ is also trivial, we finally conclude from Proposition \ref{lifting condition group III} that the group homomorphism $\psi=C_{\Omega H}:G\rightarrow\Aut(\Omega H)$ lifts to a homomorphism $\widehat{\psi}:G\rightarrow\Aut(\widehat{N},S^1)$ inducing a smooth action of $G$ on any central extension $\widehat{N}$ of $\Omega H$ by the circle $S^1$ (cf. \cite[Remark II.21]{Ne01}).

\end{example}

\section{The transgression map}\label{transgression section}

Let $G$ be a Lie group, $N$ a connected split normal Lie subgroup of $G$ with quotient map $q:G\rightarrow G/N$ and $(A,S)$ a smooth $G$-module. At this point we again recall that the splitness condition means that the quotient $G/N$ has a natural Lie group structure such that $q:G\rightarrow G/N$ defines on $G$ the structure of a principal $N$-bundle. Since $N$ is a normal subgroup of $G$, the subgroup 
\[A^N:=\{a\in A:\,(\forall n\in N)\,S(n)(a)=a\}
\]is a $G$-submodule of $A$. In what follows we additionally assume that $A^N$ is a split Lie subgroup of $A$. In this case it is not hard to check that the map
\begin{align*}
\Phi_A:\Aut_{A^N}(A)\rightarrow\Aut(A/A^N),\quad \Phi_A(\varphi)([a]):=[\varphi(a)]
\end{align*}
is a well-defined group homomorphism. Moreover, \cite[Lemma C.2]{Ne04} implies that $A^N$ inherits a natural structure of a smooth $G/N$-module (cf. Lemma \ref{basic staff on crossed module} below). The aim of this section is to construct a ``transgression" map
\[\tau:\text{H}^2_s(N,A)\rightarrow \text{H}^3_s(G/N,A^N).
\]Our strategy involves the theory of smooth crossed modules. We therefore start with a definition and some basic facts. A detailed background on smooth crossed modules, e.g., the construction of associated characteristic classes, can be found in \cite[Section III]{Ne06}.

\begin{definition}\label{crossed modules}
A pair $(\alpha,\widehat{S})$, consisting of a morphism $\alpha:\widehat{N}\rightarrow G$ of Lie groups together with a homomorphism $\widehat{S}:G\rightarrow\Aut(\widehat{N})$ defining a smooth action of $G$ on $\widehat{N}$, is called a $\emph{smooth crossed module}$ if the following conditions are satisfied:
\begin{itemize}
\item[(CM1)]
$\alpha\circ\widehat{S}(g)=c_g\circ\alpha$ for $g\in G$.
\item[(CM2)]
$\widehat{S}\circ\alpha=C_{\widehat{N}}:\widehat{N}\rightarrow\Aut(\widehat{N})$ is the conjugation action.
\item[(CM3)]
$\ker(\alpha)$ is a split Lie subgroup of $\widehat{N}$ and $\im(\alpha)$ is a split Lie subgroup of $G$ for which $\alpha$ induces an isomorphism $\widehat{N}/\ker(\alpha)\rightarrow\im(\alpha)$.
\end{itemize}
\end{definition}

\pagebreak[3]

\begin{lemma}\label{basic staff on crossed module}
Let $(\alpha,\widehat{S})$ be a smooth crossed module. Then the following assertions hold:
\begin{itemize}
\item[(a)]
$N:=\text{im}(\alpha)$ is a normal subgroup of $G$.
\item[(b)]
$Z:=\ker(\alpha)\subseteq Z(\widehat{N})$.
\item[(c)]
$Z$ is $G$-invariant and $G$ acts smoothly on $Z$. 
\item[(d)]
There is a smooth action $T:G/N\rightarrow\Aut(Z)$ given by $T([g]).z:=\widehat{S}(g)(z)$.
\end{itemize}
\end{lemma}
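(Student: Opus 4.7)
The plan is to treat the four assertions in order, each being a direct consequence of one of the crossed-module axioms (CM1)--(CM3) together with the splitness hypotheses.

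For (a), I would apply (CM1) directly: for any $g\in G$ and $\widehat{n}\in\widehat{N}$, the identity $\alpha(\widehat{S}(g)(\widehat{n}))=c_g(\alpha(\widehat{n}))$ shows that $c_g(N)\subseteq N$, so $N=\im(\alpha)$ is normal in $G$.

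For (b), the key observation is to apply (CM2) to a kernel element itself. If $z\in Z=\ker(\alpha)$, then $\alpha(z)=1_G$, and so
\[
C_{\widehat{N}}(z)=\widehat{S}(\alpha(z))=\widehat{S}(1_G)=\id_{\widehat{N}}.
\]
Hence conjugation by $z$ in $\widehat{N}$ is trivial, i.e., $z\in Z(\widehat{N})$.

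For (c), $G$-invariance of $Z$ is another use of (CM1): for $z\in Z$ and $g\in G$,
\[
\alpha(\widehat{S}(g)(z))=c_g(\alpha(z))=c_g(1_G)=1_G,
\]
so $\widehat{S}(g)(z)\in Z$. By (CM3), $Z=\ker(\alpha)$ is a split Lie subgroup of $\widehat{N}$, and since the action of $G$ on $\widehat{N}$ is smooth by the definition of smooth crossed module, its restriction to the split Lie subgroup $Z$ (which carries the subspace smooth structure) is smooth as well.

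For (d), I would first verify well-definedness: if $[g]=[g']$ in $G/N$, then $g'=gn$ for some $n=\alpha(\widehat{n})\in N$, and using (CM2) together with part (b) one has
\[
\widehat{S}(g')(z)=\widehat{S}(g)\bigl(\widehat{S}(\alpha(\widehat{n}))(z)\bigr)=\widehat{S}(g)\bigl(C_{\widehat{N}}(\widehat{n})(z)\bigr)=\widehat{S}(g)(z),
\]
because $z$ is central in $\widehat{N}$ by (b). Hence $T([g]).z:=\widehat{S}(g)(z)$ is well-defined, and it is a homomorphism into $\Aut(Z)$ because $\widehat{S}$ is. For the smoothness of the induced action $T^{\wedge}:(G/N)\times Z\to Z$, I would use that the splitness of $N$ in $G$ provides smooth local sections $\sigma:U\to G$ of $q:G\to G/N$. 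On $U\times Z$ one then has $T^{\wedge}([g],z)=\widehat{S}(\sigma([g]))(z)$, which is smooth as a composition of the smooth section $\sigma$ with the smooth action of $G$ on $Z$ from (c); an analogous argument handles $(T^{-1})^{\wedge}$, so $T$ is smooth in the sense of Section~\ref{pre and not}.

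The only genuinely nontrivial step is the smoothness in (d), where one must descend from a smooth $G$-action to a smooth $G/N$-action, and this is precisely where the principal-bundle structure on $G\to G/N$ guaranteed by splitness is essential. The remaining parts (a)--(c) and the well-definedness in (d) are short algebraic consequences of the axioms.
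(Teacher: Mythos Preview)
Your proof is correct and follows essentially the same route as the paper: parts (a) and (b) from (CM1) and (CM2), $G$-invariance of $Z$ from (CM1), and well-definedness in (d) from the fact that $N$ acts on $Z\subseteq Z(\widehat{N})$ by inner automorphisms, hence trivially. The only minor difference is that the paper outsources the smoothness assertion in (c) to \cite[Lemma~I.7(a) and Lemma~II.3]{Ne06}, whereas you argue directly via the split-subgroup structure of $Z$ in $\widehat{N}$; your argument is valid because splitness yields smooth local retractions $\widehat{N}\to Z$ near the identity, so a smooth map into $\widehat{N}$ with values in $Z$ corestricts smoothly. Your treatment of smoothness in (d) via local sections of $q:G\to G/N$ is exactly the mechanism underlying \cite[Lemma~C.2]{Ne04}, which the paper invokes just before the lemma.
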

\begin{proof}
The first assertion follows from (CM1), while part (b) follows from (CM2). That $Z$ is $G$-invariant is a direct consequence of (CM1). For the proof that $G$ acts smoothly on $Z$ we refer to \cite[Lemma 1.7(a) and Lemma II.3]{Ne06}. To verify the last statement we observe that inner automorphisms of $\widehat{N}$ act trivially on $Z(\widehat{N})$ and thus on $\widehat{N}$. Therefore the smooth action of $G$ on $Z$ from part (c) factors through a smooth action $T:G/N\rightarrow\Aut(Z)$ given by $T([g]).z:=\widehat{S}(g)(z)$.
\end{proof}

We continue with an element $f\in Z^2_s(N,A)^{[G]}$, the corresponding abelian Lie group extension $\widehat{N}:=A\times_{(S,f)} N$ and the induced extension $\Gamma:=\psi^*(\Aut(\widehat{N},A))$ of $G$ by $Z^1_s(N,A)$ from Proposition \ref{lifting condition group II}. Then the group homomorphisms
\begin{align}
d_N:A\rightarrow Z^1_s(N,A), \quad d_N(a)(n):=n.a-a,\label{1-coboundary map}
\\
\alpha:\widehat{N}\rightarrow \Gamma,\quad \alpha(a,n):=(c_{(a,n)},n)\label{mapmap}
\end{align}
give rise to the following commuting diagram:
\[\xymatrix{ 1 \ar[r]& A\ar[r] \ar[d]^{\partial} & \widehat{N} \ar[r] \ar[d]^{\alpha} & N\ar[r] \ar[d]^{i}& 1 \\
1 \ar[r]& Z^1_s(N,A)\ar[r] & \Gamma \ar[r] & G \ar[r] & 1.}
\]Taking cokernels then results in the following short exact sequence of groups:
\begin{align}
1\longrightarrow \emph{H}^1_s(N,A){\longrightarrow}\Gamma/\alpha(\widehat{N}){\longrightarrow}G/N\longrightarrow1.\label{H^1}
\end{align}

\begin{lemma}\label{actions+kernel}
With the previous notations we have that
\[\widehat{S}:\Gamma\rightarrow\Aut(\widehat{N}),\quad \widehat{S}\bigr(\varphi,g\bigl)(a,n):=\varphi(a,n)
\]defines an action of $\Gamma$ on $\widehat{N}$. Moreover, $\ker(\alpha)=A^N$.
\end{lemma}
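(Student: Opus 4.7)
Two separate assertions need to be verified: that $\widehat{S}$ is a (smooth) action of $\Gamma$ on $\widehat{N}$, and that $\ker(\alpha) = A^N$. The first is essentially structural, while the second is a direct conjugation calculation inside $\widehat{N} = A \times_{(S,f)} N$.

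For the action claim, I would exploit the fact that $\Gamma = \psi^*(\Aut(\widehat{N},A))$ is, by construction, a subgroup of $\Aut(\widehat{N},A) \times G$ consisting of pairs $(\varphi, g)$ with $\Phi(\varphi) = \psi(g)$. Thus the assignment $(\varphi, g) \mapsto \varphi$ is nothing but the restriction of the first projection to $\Gamma$, which is visibly a group homomorphism into $\Aut(\widehat{N},A) \subseteq \Aut(\widehat{N})$; its evaluation on $(a, n) \in \widehat{N}$ equals $\varphi(a, n)$. To upgrade this to a smooth action, I would use the locally smooth section $\sigma_\theta(g) = (\widehat{\psi}_\theta(g), g)$ of $\Gamma \to G$ constructed in Proposition \ref{lifting condition group II} together with the parametrization $\Psi \colon Z^1_s(N, A) \to \ker(\Phi)$ from Lemma \ref{ker Phi}: locally every element of $\Gamma$ has the form $(\Psi(f), 1_G) \cdot \sigma_\theta(g)$ with $f \in Z^1_s(N, A)$ and $g$ in some identity neighbourhood of $G$, so smoothness of $\Gamma \times \widehat{N} \to \widehat{N}$ reduces to smoothness of $(g, a, n) \mapsto \widehat{\psi}_\theta(g)(a, n)$ near $(1_G, \widehat{N})$ (already noted in Lemma \ref{lifting condition group I}(a), with \cite[Proposition A.13]{Ne06}) and smoothness of $(f, a, n) \mapsto \Psi(f)(a, n) = (a + f(n), n)$, which is manifest.

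For the kernel computation, $(a, n) \in \ker(\alpha)$ means $(c_{(a, n)}, n) = (\id_{\widehat{N}}, 1_G)$ in $\Gamma$, so $n = 1_N$ at the outset. Inside $\widehat{N} = A \times_{(S, f)} N$, normalization of $f$ gives $(a, 1)^{-1} = (-a, 1)$ and a short direct computation yields
\[c_{(a,1)}(b, m) \;=\; (a, 1)(b, m)(-a, 1) \;=\; \bigl(b + a - S(m)(a), \, m\bigr)\]
for every $(b, m) \in \widehat{N}$. Hence $c_{(a,1)} = \id_{\widehat{N}}$ is equivalent to $S(m)(a) = a$ for all $m \in N$, that is, to $a \in A^N$. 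Identifying $A^N$ with $A^N \times \{1_N\} \subseteq \widehat{N}$ then gives $\ker(\alpha) = A^N$.

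The only step that requires genuine care is the smoothness verification in the first part, and even there most of the work is already contained in the locally smooth section produced in Proposition \ref{lifting condition group II}; the remaining ingredients (the homomorphism property in the first claim, and the kernel calculation in the second) are formal consequences of the definition of $\Gamma$ as a pullback and of the cocycle multiplication formula of $\widehat{N}$.
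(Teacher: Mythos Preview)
Your argument is correct and follows the paper's proof essentially verbatim: the action claim is immediate from the definition of $\Gamma$ as a pullback (the paper simply calls it ``obvious''), and your conjugation computation $c_{(a,1)}(b,m) = (b + a - S(m)(a), m)$ is exactly the one the paper carries out.

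One remark on scope: your smoothness discussion is premature. At this point in the paper $\Gamma$ has not yet been given a Lie group structure; that happens only in the subsequent Proposition~\ref{liegroupstructureGamma}, under the additional hypothesis that $\mathrm{H}^1_s(N,A)$ vanishes, and the smoothness of $\widehat{S}^{\wedge}$ is established there (via \cite[Proposition~A.13]{Ne06}). The present lemma is purely algebraic, so your paragraph on local sections and smoothness should be relocated to the proof of that proposition rather than here.
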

\begin{proof}
That the map $\widehat{S}$ defines an action of $\Gamma$ on $\widehat{N}$ is obvious. Therefore let $(a,n)\in\widehat{N}$ with $\alpha(a,n)=(c_{(a,n)},n)=(\id_{\widehat{N}},1_N)$. Then $n=1_N$ and for arbitrary $(a',n')\in\widehat{N}$ we compute
$c_{(a,1_N)}(a',n')=(a+a'-n'.a,n')$, which shows that $c_{(a,1_N)}(a',n')=(a',n')$ holds for all $(a',n')\in\widehat{N}$ if and only if $a\in A^N$. In particular, we conclude that $\ker(\alpha)=A^N$.
\end{proof}

Since every discrete group can be viewed as a Lie group, our discussion also includes the algebraic setting of crossed modules. In particular, if we consider all groups which appear in Lemma \ref{actions+kernel} as discrete groups, then it is a consequence of the construction that the pair $(\alpha,\widehat{S})$ defines a crossed module. Moreover, in this situation, it is possible to associate to the crossed module $(\alpha,\widehat{S})$ a characteristic class $\chi(\alpha,\widehat{S})\in \text{H}^3(\Gamma/\alpha(\widehat{N}),A^N)_T$ (cf. \cite[Lemma III.6]{Ne06}). In what follows we try to transfer the previous discussion to the smooth context. 


\begin{lemma}\label{splitstructureA}
Let $f\in Z^2_s(N,A)$ and $\widehat{N}:=A\times_{(S,f)} N$ be the corresponding abelian Lie group extension. Then $A^N$ is also a split Lie subgroup of $\widehat{N}$. In particular, the quotient $\widehat{N}/A^N$ has a natural Lie group structure such that the projection map $p:\widehat{N}\rightarrow\widehat{N}/A^N$ defines on $\widehat{N}$ the structure of a principal $A^N$-bundle.
\end{lemma}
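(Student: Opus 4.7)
The plan is to exploit two facts: first, that the extension $\widehat{N}=A\times_{(S,f)}N$ is locally trivial as an $A$-principal bundle over $N$ near the identity (by the Neeb construction recalled in Section \ref{pre and not}); and second, that $A^N$ is already assumed to be a split Lie subgroup of $A$. Combining a local section of $\pi_A\colon A\to A/A^N$ with the canonical locally smooth section $n\mapsto (1_A,n)$ of $\widehat{N}\to N$ should produce a locally smooth section of $\widehat{N}\to \widehat{N}/A^N$, and the centrality of $A^N$ in $\widehat{N}$ will let us transport this to get the principal $A^N$-bundle structure globally.

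First I would observe that $A^N$ is central in $\widehat{N}$. Indeed, for $a\in A^N$ and $(b,n)\in \widehat{N}$, the normalization $f(1_N,n)=f(n,1_N)=0$ together with $S(n)(a)=a$ gives
\[
(a,1_N)(b,n)=(a+b,n)=(b+a,n)=(b,n)(a,1_N),
\]
so $A^N\subseteq Z(\widehat{N})$. In particular, $A^N$ is normal in $\widehat{N}$, the coset space $\widehat{N}/A^N$ inherits a group structure, and left multiplication by any element of $\widehat{N}$ commutes with the $A^N$-action.

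Next I would build a local section near the identity. By the assumption that $A^N$ is split in $A$, choose a smooth section $\sigma_A\colon \mathcal{V}\to A$ of $\pi_A\colon A\to A/A^N$ on an open identity neighbourhood $\mathcal{V}\subseteq A/A^N$ with $\sigma_A([1_A])=1_A$. By the recalled Neeb construction, the identity map $A\times N\to \widehat{N}$ is smooth on a set $A\times U$ with $U\subseteq N$ an open identity neighbourhood, and the set-theoretic identification $\widehat{N}/A^N\longleftrightarrow (A/A^N)\times N$ (valid because $A^N$ only acts on the $A$-coordinate) lets me define
\[
s\colon \mathcal{V}\times U\longrightarrow \widehat{N},\qquad ([a],n)\longmapsto (\sigma_A([a]),1_N)\cdot(1_A,n).
\]
This is smooth by Step 1, is a section of the projection $\widehat{N}\to \widehat{N}/A^N$ near the identity coset, and realises $\widehat{N}$ locally as a product $A^N\times (\mathcal{V}\times U)$. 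Conjugating $s$ by left translations yields local sections (and hence local product trivialisations) around every point of $\widehat{N}/A^N$; by centrality these genuinely transport the smooth structure without cocycle issues, and a standard argument (as in the proof of Neeb's Proposition 2.8) promotes the resulting atlas on $\widehat{N}/A^N$ to a Lie group structure for which $\widehat{N}\to \widehat{N}/A^N$ is a smooth principal $A^N$-bundle. This gives exactly the splitness of $A^N$ in $\widehat{N}$ and the claimed structure on $\widehat{N}/A^N$.

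The main obstacle I anticipate is not conceptual but bookkeeping: one must verify that the local chart $s$ really descends to a well-defined smooth chart on $\widehat{N}/A^N$ and that the translated charts are mutually compatible, despite the fact that the multiplication in $\widehat{N}$ is smooth only on $A\times U$ rather than on all of $A\times N$. The centrality of $A^N$ established in the first step is what makes the transition functions between translated charts amount to smooth multiplications inside $\widehat{N}$ restricted to the region where smoothness is guaranteed, so once that reduction is made the remaining checks are routine.
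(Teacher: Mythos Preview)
Your approach is correct, but the paper takes a cleaner route that avoids the chart bookkeeping you anticipate. Rather than building local sections of $\widehat{N}\to\widehat{N}/A^N$ directly and transporting them by left translation, the paper observes that the quotient $\widehat{N}/A^N$ sits in a short exact sequence
\[
1\longrightarrow A/A^N\longrightarrow \widehat{N}/A^N\longrightarrow N\longrightarrow 1,
\]
and then simply \emph{pushes the factor system forward}: the pair $(\Phi_A\circ S,\, p_A\circ f)$, where $p_A\colon A\to A/A^N$ is the quotient map and $\Phi_A\circ S$ is the induced $N$-action on $A/A^N$, is again a smooth factor system (for the pair $(N,A/A^N)$). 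Neeb's Proposition~2.8 then immediately equips $\widehat{N}/A^N\cong (A/A^N)\times_{(\Phi_A\circ S,\,p_A\circ f)}N$ with a Lie group structure and makes the sequence a Lie group extension. The splitness of $A^N\hookrightarrow\widehat{N}$ follows because the required local section of $\widehat{N}\to\widehat{N}/A^N$ now comes for free from the factor-system description of both groups.

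What this buys: the paper's argument never needs to verify transition-function compatibility or invoke the ``standard argument'' you allude to, because all of that is already packaged inside Proposition~2.8, which is being applied a second time (once to build $\widehat{N}$, once to build $\widehat{N}/A^N$). Your argument, by contrast, is more self-contained and makes the principal-bundle structure visibly explicit, at the cost of redoing part of the work that Proposition~2.8 already does. Note also that your centrality observation about $A^N\subseteq Z(\widehat{N})$, while correct and pleasant, is not needed in the paper's proof: the factor-system pushforward works without it.
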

\begin{proof}
In order to verify the assertion it is enough to show that $\widehat{N}/A^N$ has a natural Lie group structure which makes the following short exact sequence
\begin{align}
1\longrightarrow A/A^N{\longrightarrow}\widehat{N}/A^N{\longrightarrow}N\longrightarrow 1 \label{liegroupextI}
\end{align}
a Lie group extension of $N$ by $A/A^N$. We therefore consider the locally smooth section $$\sigma:N\rightarrow\widehat{N}, \quad n\mapsto(1_A,n)$$ of the extension $\widehat{q}:\widehat{N}\rightarrow A$, the induced locally smooth 2-cocycle $\delta_{\sigma}:N\times N\rightarrow A$ defined by $\delta_{\sigma}(n,n'):=\sigma(n)\sigma(n')\sigma(nn')^{-1}$ and the quotient map $p_A:A\rightarrow A/A^N$. Then it is not hard to check that $S=C_{A}\circ\sigma$ and that $f=\delta_{\sigma}$. Moreover, a few moments thought show that the pair $(\Phi_A\circ S,p_A\circ f)$, where
\begin{align*}
\Phi_A\circ S:N\rightarrow\Aut(A/A^N) \quad \text{and} \quad p_A\circ f:N\times N\rightarrow A/A^N,
\end{align*}
defines a smooth factor system for the pair $(N,A/A^N)$. In particular, we can use this smooth factor system to define a natural Lie group structure on $\widehat{N}/A^N$ which makes (\ref{liegroupextI}) into
a Lie group extension of $N$ by $A/A^N$.
\end{proof}

In the case the group $\mathrm{H}^1_s(N,A)$ vanishes, we can identify the space $Z^1_s(N,A)$ with the group $A/A^N$ (via the map in equation (\ref{1-coboundary map})). In particular, $\Gamma$ is as an abstract group an abelian extension of the group $G$ by the group $A/A^N$, i.e., the following sequence of groups is exact
\begin{align}
1\longrightarrow A/A^N{\longrightarrow}\Gamma\stackrel{\pr_2}{\longrightarrow}G\longrightarrow 1.\label{liegroupextII}
\end{align}

\begin{proposition}\label{liegroupstructureGamma}
Let $f\in Z^2_s(N,A)^{[G]}$ and $\widehat{N}:=A\times_{(S,f)} N$ be the corresponding abelian Lie group extension.
Furthermore, suppose that $\emph{H}^1_s(N,A)$ vanishes. Then $\Gamma$ carries a natural Lie group structure which makes the short exact sequence \emph{(}$\ref{liegroupextII}$\emph{)} an abelian Lie group extension of $G$ by $A/A^N$. Moreover, the action 
\[\widehat{S}^{\wedge}:\Gamma\times\widehat{N}\rightarrow\widehat{N},\quad \bigr((\varphi,g),(a,n)\bigl):=\varphi(a,n)
\]is smooth.
\end{proposition}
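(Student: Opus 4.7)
The plan is to equip $\Gamma$ with a Lie group structure by producing a smooth factor system for the abstract abelian extension (\ref{liegroupextII}), and then to verify the claimed smoothness of $\widehat{S}^\wedge$ using the explicit formulas available from the previous sections.

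First, I would exploit the vanishing of $\mathrm{H}^1_s(N,A)$ to identify $Z^1_s(N,A)$ with $A/A^N$ via the coboundary map $d_N \colon A \to Z^1_s(N,A)$, $a \mapsto (n\mapsto n.a - a)$, whose kernel is $A^N$ by definition. Combined with the standing hypothesis of Section \ref{transgression section} that $A^N$ is a split Lie subgroup of $A$, this transports the natural Lie group structure on $A/A^N$ to $Z^1_s(N,A)$. The smooth $G$-action $S$ on $A$ preserves $A^N$ (because $N$ is normal in $G$) and thus descends to a smooth $G$-module structure on $A/A^N$; under our identification this coincides with the restriction of $S_\psi$ to $Z^1_s(N,A)$.

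Next, by Proposition \ref{lifting condition group II}, the extension $\Gamma$ is abstractly described by the 2-cocycle $d_{S_\psi}\theta$, viewed with values in $Z^1_s(N,A)\cong A/A^N$. The key technical step is to verify that this cocycle lies in $Z^2_{ss}(G, A/A^N)$. Since $A^N$ is split in $A$, the projection $p_A \colon A \to A/A^N$ admits a locally smooth section $\rho$ near $0$. Using the local smoothness of $\theta^\wedge \colon G\times N \to A$ from Definition \ref{smthcohoinv}, I would construct for $(g,g')$ in an identity neighbourhood of $G\times G$ an element $a_{g,g'}\in A$ satisfying $n.a_{g,g'} - a_{g,g'} = d_{S_\psi}\theta(g,g')(n)$ for all $n\in N$ and depending smoothly on $(g,g')$; its class $[a_{g,g'}]\in A/A^N$ realizes $d_{S_\psi}\theta(g,g')$ locally smoothly, and the additional smoothness required for $C^2_{ss}$ follows similarly from the smoothness of $\theta^\wedge$ together with conjugation. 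With such a smooth factor system in hand, \cite[Proposition 2.8]{Ne06} produces the desired Lie group structure on $\Gamma$ which makes (\ref{liegroupextII}) an abelian Lie group extension of $G$ by $A/A^N$.

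For the smoothness of $\widehat{S}^\wedge$, the locally smooth section $\sigma_\theta(g)=(\widehat{\psi}_\theta(g),g)$ from Proposition \ref{lifting condition group II} together with Lemma \ref{lifting condition group I}(a) yields the explicit formula
\[\widehat{S}(\sigma_\theta(g))(a,n) = \widehat{\psi}_\theta(g)(a,n) = \bigl(S(g)(a) + \theta(g)(c_g(n)),\, c_g(n)\bigr),\]
whose right-hand side is smooth in $(g,a,n)$ on a product of identity neighbourhoods because $S$ is smooth, the conjugation action $(g,n)\mapsto c_g(n)$ is smooth, and $\theta^\wedge$ is smooth near $(1_G,1_N)$. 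Combining this with the smoothness of left multiplication by the kernel $A/A^N\subseteq\Gamma$ (provided by the Lie group structure just constructed) and right multiplication in $\widehat{N}$, one obtains smoothness of $\widehat{S}^\wedge$ on an identity neighbourhood of $\Gamma\times\widehat{N}$; since $\widehat{S}$ is a group homomorphism, smoothness propagates globally by group translation.

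The main obstacle is the explicit construction of the smooth lift $a_{g,g'}$ in the second paragraph: one must coordinate the splitness of $A^N$ in $A$ with the local smoothness of $\theta$ to produce a smooth primitive of the 1-cocycle $d_{S_\psi}\theta(g,g')\in Z^1_s(N,A)$. Once that is secured, the remaining steps fit into the established machinery of smooth factor systems.
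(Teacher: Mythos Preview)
Your proposal follows essentially the same route as the paper's own proof: identify $Z^1_s(N,A)$ with $A/A^N$, exhibit $(\overline{S},\overline{\omega})=(S \bmod A^N,\,d_{S_\psi}\theta)$ as a smooth factor system for the abstract extension~(\ref{liegroupextII}), invoke \cite[Proposition~2.8]{Ne06}, and then check smoothness of $\widehat{S}^\wedge$.

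Two minor differences are worth noting. For the local smoothness of $\overline{\omega}$ in $A/A^N$, the paper does not attempt to build your lift $a_{g,g'}$ directly; instead it argues ``along the same lines as Proposition~\ref{lifting condition group II}'' that $\sigma_\theta(g)=(\widehat{\psi}_\theta(g),g)$ is a locally smooth normalized section of $\mathrm{pr}_2:\Gamma\to G$, and then reads off $\overline{\omega}=\delta_{\sigma_\theta}\in C^2_{ss}(G,A/A^N)$ from the general principle that $\delta_\sigma\in C^2_{ss}$ whenever $\sigma\in C^1_s$. This sidesteps the explicit construction you flag as the ``main obstacle''; your lifting approach would also work but is more laborious. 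For the smoothness of $\widehat{S}^\wedge$, the paper simply observes that $((\varphi,g),n)\mapsto\theta(g)(n)$ is smooth near the identity and invokes \cite[Proposition~A.13]{Ne06}, which packages exactly your propagation-by-translation argument. So your direct computation and the paper's citation amount to the same thing.
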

\begin{proof}
Since 
Since $f\in Z^2_s(N,A)^{[G]}$, we can find $\theta\in C^1_s(G,C^1_s(N,A))$ satisfying $d_N(\theta(g))=g.f-f$ for all $g\in G$ and a similar argument as in Lemma \ref{lifting condition group I} (c) shows that the element $d_{S_{\psi}}\theta\in C^2_s(G,C^1_s(N,A))$ has values in $A/A^N$. We claim that the pair $(\overline{S},\overline{\omega})$, where
\begin{align*}
\overline{S}:G\rightarrow\Aut(A/A^N), \quad \overline{S}(g)([a]):=[S(g)(a)]\\
\overline{\omega}:G\times G\rightarrow A/A^N, \quad \overline{\omega}(g,g'):=d_{S_{\psi}}\theta(g,g'),
\end{align*} 
defines a smooth factor system for the sequence (\ref{liegroupextII}). In fact, it follows along the sames lines as in the proof of Proposition \ref{lifting condition group II} that 
\[\sigma_{\theta}:G\rightarrow\Gamma, \quad g\mapsto(\widehat{\psi}_{\theta}(g),g)
\]defines a locally smooth normalized section of the projection map $\pr_2:\Gamma\rightarrow G$. In particular, a few moments thought show that the map $\overline{\omega}$ is the locally smooth 2-cocycle defining the abelian extension (\ref{liegroupextII}), that is, $\overline{\omega}\in Z^2_{ss}(G,A/A^N)$. Moreover, the corresponding $G$-module structure is given by the map $$C_{\Gamma}\circ\sigma_{\theta}:G\rightarrow\Aut(A/A^N)$$ which is for $g\in G$ and $a\in A$ given by
\begin{align*}
\bigl(\widehat{\psi}_{\theta}(g),g\bigr)\bigl((d_N(a)\circ\widehat{q})\cdot\id_{\widehat{N}},1_G\bigr)\bigl(\widehat{\psi}_{\theta}(g)^{-1},g^{-1}\bigr)=\bigl((d_N(S(g)(a))\circ\widehat{q})\cdot\id_{\widehat{N}},1_G\bigr),
\end{align*}
i.e., we conclude that $C_{\Gamma}\circ\sigma_{\theta}=\overline{S}$. The smoothness of the group homomorphism $\overline{S}$ is a consequence of the smoothness of the map $S$ and the fact that the natural quotient map $p_A:A\rightarrow A/A^N$ defines on $A$ the structure of a principal $A^N$-bundle. Hence, we have verified our claim and therefore we can use the smooth factor system $(\overline{S},\overline{\omega})$ to define a natural Lie group structure on $\Gamma$ which turns the sequence (\ref{liegroupextII}) into
a Lie group extension of $G$ by $A/A^N$. That the action 
\[\widehat{S}^{\wedge}:\Gamma\times\widehat{N}\rightarrow\widehat{N},\quad \bigr((\varphi,g),(a,n)\bigl):=\varphi(a,n)
\]is smooth is a consequence of \cite[Proposition A.13]{Ne06} because the map 
\[\Gamma\times N\rightarrow A, \quad ((\varphi,g),n)\mapsto\theta(g)(n)
\]is smooth in an identity neighbourhood.
\end{proof}

\begin{theorem}\label{Gammasmoothcrossedmodule}
Suppose we are in the situation of Proposition \ref{liegroupstructureGamma}. Then the pair $(\alpha,\widehat{S})$ defines a smooth crossed module. In particular, we obtain a ``transgression" map 
\[\tau:\emph{H}^2_s(N,A)^{[G]}_S\rightarrow \emph{H}^3_s(G/N,A^N)_T,\quad \tau([f]):=\chi(\alpha,\widehat{S}),
\]where $\chi(\alpha,\widehat{S})$ denotes the characteristic class of $(\alpha,\widehat{S})$ \emph{(}cf. \cite[Lemma III.6]{Ne06}\emph{)}.
\end{theorem}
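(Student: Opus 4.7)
The plan is to verify the three axioms (CM1)--(CM3) of a smooth crossed module for the pair $(\alpha,\widehat{S})$ given by \eqref{mapmap} and Lemma~\ref{actions+kernel}, and then to invoke \cite[Lemma III.6]{Ne06} to produce the characteristic class that defines $\tau$.

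The algebraic axioms (CM1) and (CM2) follow by direct calculation. For (CM1), write $\gamma=(\varphi,g)\in\Gamma$ and note that $\Phi(\varphi)=(S(g),c_g)$; both $(\alpha\circ\widehat{S}(\gamma))(a,n)$ and $(c_{\gamma}\circ\alpha)(a,n)$ evaluate to $(c_{\varphi(a,n)},c_g(n))$, the equality of first coordinates being the standard identity $\varphi\, c_{(a,n)}\,\varphi^{-1}=c_{\varphi(a,n)}$ and the equality of second coordinates coming from the definitions. For (CM2), the identity $\widehat{S}(\alpha(a,n))(a',n')=c_{(a,n)}(a',n')$ is immediate from the definitions of $\alpha$ and $\widehat{S}$. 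The smoothness of the action $\widehat{S}:\Gamma\to\Aut(\widehat{N})$ has already been established in Proposition~\ref{liegroupstructureGamma}.

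The real work is (CM3). That $\ker(\alpha)=A^N$ is a split Lie subgroup of $\widehat{N}$ is exactly Lemma~\ref{splitstructureA} combined with Lemma~\ref{actions+kernel}. For $\im(\alpha)$, I would proceed as follows. Under the hypothesis $\text{H}^1_s(N,A)=0$, the exact sequence \eqref{H^1} collapses to $\Gamma/\alpha(\widehat{N})\cong G/N$, which identifies the underlying set of $\im(\alpha)$ with the preimage $\pr_2^{-1}(N)\subseteq\Gamma$. Since $N$ is split in $G$ and $\pr_2:\Gamma\to G$ is an abelian Lie group extension by Proposition~\ref{liegroupstructureGamma}, this preimage is automatically a split Lie subgroup of $\Gamma$. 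To finish (CM3) it remains to see that the induced bijection $\widehat{N}/A^N\to\im(\alpha)$ is a Lie group isomorphism. I would compare the two abelian Lie group extensions of $N$ by $A/A^N$,
\[
1\to A/A^N\to\widehat{N}/A^N\to N\to 1 \quad\text{and}\quad 1\to A/A^N\to\pr_2^{-1}(N)\to N\to 1,
\]
coming respectively from Lemma~\ref{splitstructureA} and from restricting the extension of Proposition~\ref{liegroupstructureGamma}, and verify that $\alpha$ descends to a morphism of extensions which is the identity on both the kernel $A/A^N$ and the quotient $N$; a short-five-lemma-type argument for abelian Lie group extensions then produces the desired smooth isomorphism.

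With the smooth crossed module in hand, \cite[Lemma III.6]{Ne06} assigns to it a characteristic class $\chi(\alpha,\widehat{S})\in\text{H}^3_s(G/N,A^N)_T$, and I set $\tau([f]):=\chi(\alpha,\widehat{S})$. To check well-definedness on $\text{H}^2_s(N,A)^{[G]}_S$, I would show that replacing $f$ by $f+d_N h$ for $h\in C^1_s(N,A)$ yields an equivalent abelian extension $\widehat{N}'$ via $(a,n)\mapsto(a+h(n),n)$ (cf.\ Proposition~\ref{lifting condition group}(b)), and that this equivalence induces an equivalence of the associated smooth crossed modules, so that the characteristic classes coincide. The main obstacle, in my view, is the verification in (CM3) that $\alpha$ descends to a Lie group isomorphism onto $\im(\alpha)$: this requires matching the Lie group structures obtained from two different factor systems in a compatible way; the remaining content is either a direct unpacking of definitions or a citation of Proposition~\ref{liegroupstructureGamma} and Lemma~\ref{splitstructureA}.
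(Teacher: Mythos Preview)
Your approach is essentially the same as the paper's, and the algebraic verifications of (CM1), (CM2), and $\ker(\alpha)=A^N$ split are fine. There is one point worth sharpening, and it is exactly the one you flag at the end.

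The paper's proof concentrates on two things you leave implicit: the \emph{smoothness of $\alpha$} itself, and the fact that $\im(\alpha)$ is split. For the latter the paper simply observes that
\[
1\longrightarrow \widehat{N}/A^N \xrightarrow{\,j\,} \Gamma \longrightarrow G/N \longrightarrow 1
\]
is a Lie group extension, which is equivalent to your identification $\im(\alpha)=\pr_2^{-1}(N)$. For the former, the paper writes down the commuting ladder
\[
\xymatrix{ 1 \ar[r]& A/A^N\ar@{=}[d]\ar[r] & \widehat{N}/A^N \ar[r] \ar[d]^{j} & N\ar[r] \ar[d]^{i}& 1 \\
1 \ar[r]& A/A^N\ar[r] & \Gamma \ar[r]^{\pr_2} & G \ar[r] & 1}
\]
and deduces smoothness of $j$ from the smoothness of the other arrows and the existence of locally smooth sections; then $\alpha=j\circ p$ with $p:\widehat{N}\to\widehat{N}/A^N$ the smooth quotient map of Lemma~\ref{splitstructureA}.

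Your ``short-five-lemma'' plan is the same diagram read differently, but be careful with the logic: in the Lie group category the five lemma tells you a \emph{smooth} morphism of extensions that is the identity on kernel and cokernel is an isomorphism; it does not manufacture smoothness for you. So you must first check that the descended map $\widehat{N}/A^N\to\pr_2^{-1}(N)$ is smooth (e.g.\ by comparing the two explicit factor systems from Lemma~\ref{splitstructureA} and Proposition~\ref{liegroupstructureGamma}, or by the paper's diagram argument), and only then invoke the five lemma to conclude it is a diffeomorphism. Once that is done, (CM3) and the smoothness of $\alpha$ follow simultaneously. Your added remark on well-definedness of $\tau$ under $f\mapsto f+d_N h$ is correct and is not spelled out in the paper.
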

\begin{proof}
By the previous results it remains to show that the group homomorphism $$\alpha:\widehat{N}\rightarrow\Gamma, \quad \alpha(a,n):=(c_{(a,n)},n)$$ is smooth and that $\im(\alpha)$ is a split Lie subgroup of $\Gamma$. Indeed, we first note that the following diagram of Lie groups commutes:
\[\xymatrix{ 1 \ar[r]& A/A^N\ar@{=}[d]\ar[r] & \widehat{N}/A^N \ar[r] \ar[d]^{j} & N\ar[r] \ar[d]^{i}& 1 \\
1 \ar[r]& A/A^N\ar[r] & \Gamma \ar[r]^{\pr_2} & G \ar[r] & 1.}
\]The smoothness of the map $j:\widehat{N}/A^N{\rightarrow}\Gamma$ therefore immediately follows from the smoothness of all the other appearing maps in the diagram. As a consequence, we conclude that the map $\alpha$ is smooth as a composition of the smooth projection map $p:\widehat{N}\rightarrow\widehat{N}/A^N$ (cf. Lemma \ref{splitstructureA}) and the smooth map $j:\widehat{N}/A^N{\rightarrow}\Gamma$. In order to see that $\im(\alpha)$ is a split Lie subgroup of $\Gamma$ it is enough to note that
\begin{align*}
1\longrightarrow\widehat{N}/A^N\stackrel{j}{\longrightarrow}\Gamma{\longrightarrow}G/N\longrightarrow 1
\end{align*}
is an extension of Lie groups.
\end{proof}

\begin{theorem}\label{extension}
Suppose that the quotient $G/N$ is connected and that the group $\emph{H}^1_s(N,A)$ vanishes. Furthermore, let $f\in Z^2_s(N,A)^{[G]}$. Then the following statements are equivalent:
\begin{itemize}
\item[(a)]
The element $\tau([f])\in \emph{H}^3_s(G/N,A^N)_T$ vanishes.
\item[(b)]
There exists an abelian Lie group extension $\widehat{G}$ of $G$ by $A$
extending 
\begin{align*}
1\longrightarrow A{\longrightarrow}\widehat{N}\longrightarrow N\longrightarrow 1.
\end{align*}
\end{itemize}
If these conditions are satisfied, then the set of different extensions is parametrized by the group $\emph{H}^2_{s}(G/N,A^N)_T$. 
\end{theorem}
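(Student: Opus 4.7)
The plan is to use the smooth crossed module $(\alpha,\widehat{S})$ constructed in Theorem \ref{Gammasmoothcrossedmodule} together with the classification of its extensions by the characteristic class, as developed in Appendix \ref{section extension smooth crossed module} in the spirit of Mac Lane--Whitehead.

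First, I would recast condition (b) in terms of the crossed module. Since $G$ is itself an extension of $G/N$ by $N$ and $\widehat{N}$ extends $N$ by $A$, an abelian Lie group extension $\widehat{G}$ of $G$ by $A$ extending $\widehat{N}$ is equivalent to a Lie group extension
$$
1\longrightarrow \widehat{N}\longrightarrow \widehat{G}\longrightarrow G/N\longrightarrow 1
$$
whose associated crossed module agrees with $(\alpha,\widehat{S})$. In one direction, the composition $\widehat{G}\to G\to G/N$ has kernel $\widehat{N}$; the homomorphism $\widehat{G}\to \Gamma$, $\widehat{g}\mapsto (c_{\widehat{g}}|_{\widehat{N}},q(\widehat{g}))$, has kernel $Z(\widehat{N})\cap A=A^N$ (using the identity $\widehat{n}a\widehat{n}^{-1}=n.a$) and descends to an equivalence $\widehat{G}/A^N\cong \Gamma$ of abelian Lie group extensions of $G$ by $A/A^N$, so the induced crossed module coincides with $(\alpha,\widehat{S})$. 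In the other direction, given an extension of $G/N$ by $\widehat{N}$ realizing the crossed module, the quotient by $A$ is an extension of $G/N$ by $N$ whose conjugation action on $N$ coincides with that of $G$ by (CM1)--(CM2) and the definition of $\widehat{S}$; connectedness of $G/N$ then allows one to canonically identify this quotient with $G$, recovering $\widehat{G}$ as in (b).

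Second, I would invoke the extension theorem for smooth crossed modules from Appendix \ref{section extension smooth crossed module}: the characteristic class $\chi(\alpha,\widehat{S})\in \text{H}^3_s(G/N,A^N)_T$ is precisely the obstruction to the existence of an extension of the smooth crossed module $(\alpha,\widehat{S})$, and the set of equivalence classes of such extensions is, when non-empty, a simply transitive torsor under $\text{H}^2_s(G/N,A^N)_T$. Combined with the translation of Step 1, this yields both the equivalence (a)$\Leftrightarrow$(b) and the parametrization. The hypotheses enter here: the vanishing $\text{H}^1_s(N,A)=0$ is what endows $\Gamma$ with its Lie group structure (Proposition \ref{liegroupstructureGamma}) and makes $(\alpha,\widehat{S})$ a \emph{smooth} crossed module, while the connectedness of $G/N$ fixes the $G/N$-module structure on $A^N$ in which the coefficients live.

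The main obstacle will be Step 1, namely producing a clean bijection between Lie group extensions $\widehat{G}$ of $G$ by $A$ extending $\widehat{N}$ and extensions of the smooth crossed module $(\alpha,\widehat{S})$ in the sense of Appendix \ref{section extension smooth crossed module}, together with the check that this bijection descends to one between the corresponding equivalence classes. The subtle point is that an extension of the crossed module produces a priori only an extension of $G/N$ by $\widehat{N}$; showing that this packages naturally into an extension of $G$ by $A$ extending $\widehat{N}$ requires the crossed module compatibilities (CM1)--(CM2) to force the induced quotient extension of $G/N$ by $N$ to be equivalent to the given extension $G$, so that the two Lie group structures on the same underlying set genuinely match.
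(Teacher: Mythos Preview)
Your overall strategy is exactly the paper's: reduce both directions to the extension theory of the smooth crossed module $(\alpha,\widehat{S})$ from Theorem~\ref{Gammasmoothcrossedmodule}, and then invoke Corollary~\ref{extendability crossed module} and Corollary~\ref{H^2(G,Z)} (equivalently \cite[Theorem~III.8]{Ne06}) for the obstruction and the parametrization.

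There is, however, a genuine gap in the reverse direction of your Step~1. An extension of the crossed module in the sense of Definition~\ref{continuation of lie groups} is a pair $(\widehat{G},\widehat{\alpha})$: it comes equipped with a Lie group morphism $\widehat{\alpha}:\widehat{G}\to\Gamma$. You discard this datum and instead try to argue that $\widehat{G}/A$ must be isomorphic to $G$ as an extension of $G/N$ by $N$ solely because the induced conjugation actions on $N$ agree. That inference is not valid: two extensions of $G/N$ by $N$ with the same outer action can still differ by a class in $\mathrm{H}^2_s(G/N,Z(N))$, and connectedness of $G/N$ does not kill this ambiguity. The fix is immediate once you remember $\widehat{\alpha}$: compose it with $\pr_2:\Gamma\to G$ to obtain a surjective Lie group morphism $\widehat{G}\to G$ whose kernel is $(\pr_2\circ\widehat{\alpha})^{-1}(1_G)=\widehat{\alpha}^{-1}(A/A^N)=A$. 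This is precisely what the paper does, written in product coordinates as the map $q_G:\widehat{N}\times(G/N)\to N\times(G/N)$, $((a,n),h)\mapsto(n,h)$. With this correction your argument goes through and coincides with the paper's.
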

\begin{proof}
Suppose first that the element $\tau([f])\in \text{H}^3_s(G/N,A^N)_T$ vanishes. Then it follows from Theorem \ref{Gammasmoothcrossedmodule} and \cite[Theorem III.8 (1) $\Rightarrow$ (2)]{Ne06} that there exists an abelian Lie group extension $\widehat{G}$ of $\Gamma$ by $A^N$
extending 
\begin{align*}
1\longrightarrow A^N{\longrightarrow}\widehat{N}\longrightarrow \widehat{N}/A^N\longrightarrow 1.
\end{align*}
Moreover, realizing $\widehat{G}$ and $G$ in product coordinates, we conclude from the first part of the proof of \cite[Theorem III.8]{Ne06} that the map
\begin{align*}
q_G:\widehat{G}=\widehat{N}\times(G/N)=& (A\times_{(S,f)}N)\times(G/N)\rightarrow G=N\times(G/N),
\\
&q_G((a,n),q(g)):=(n,q(g))
\end{align*}
defines a Lie group extension of $G$ by $A$ containing $\widehat{N}$. If conversely, there exists an abelian Lie group extension $\widehat{G}$ of $\Gamma$ by $A^N$
extending 
\begin{align*}
1\longrightarrow A{\longrightarrow}\widehat{N}\longrightarrow N\longrightarrow 1, 
\end{align*}
then a few moments thought shows that 
\begin{align*}
1\longrightarrow A^N{\longrightarrow}\widehat{G}\stackrel{\pr}\longrightarrow \Gamma\longrightarrow 1
\end{align*}
defines a central Lie group extension of $\Gamma$ by $A^N$. In particular, the canonical map $\widehat{N}\rightarrow{\pr}^{-1}(\widehat{N}/A^N)$ is a $\Gamma$-equivariant equivalence of $A^N$ extensions of $\widehat{N}/A^N$. It is therefore a consequence of Theorem \ref{Gammasmoothcrossedmodule} and \cite[Theorem III.8 (2) $\Rightarrow$ (1)]{Ne06} that the element $\tau([f])=\chi(\alpha,\widehat{S})\in \text{H}^3_s(G/N,A^N)_T$ vanishes.

Finally, if these conditions are satisfied, then it follows from Corollary \ref{H^2(G,Z)} that the set of different extensions is is parametrized by the group $\emph{H}^2_{s}(G/N,A^N)_T$.
\end{proof}



In the remaining part of this section we explain how the previous results can be used to interpret classes in locally smooth degree-3 group cohomology $\text{H}^3_s(H,S^1)$ as crossed modules associated to the loop group $\Omega H$ of a compact, 1-connected and semisimple Lie group $H$. Indeed, 
we first note that Example \ref{action on loop groups} implies that each element $f\in Z^2_s(\Omega H,S^1)$ gives rise to a smooth crossed module $(\alpha,\widehat{S})$, i.e., we have
\[Z^2_s(\Omega H,S^1)^{[PH]}=Z^2_s(\Omega H,S^1).
\]In particular, as a consequence of Theorem \ref{Gammasmoothcrossedmodule} we obtain a well-defined map
\[\tau_H:\text{H}^2_s(\Omega H,S^1)\rightarrow \text{H}^3_s(H,S^1),\quad [f]\mapsto\chi(\alpha,\widehat{S}).
\]That this map is a homomorphism follows from some basic calculations involving the Baer product of two extensions (cf. \cite[Remark II.15]{Ne06}).

\begin{theorem}\label{main theorem on crossed modules vs. loop groups}
Let $H$ be a semisimple, compact and 1-connected Lie group and consider $S^1$ as a trivial $H$-module. Then the map
\[\tau_H:\emph{H}^2_s(\Omega H,S^1)\rightarrow \emph{H}^3_s(H,S^1),\quad [f]\mapsto\chi(\alpha,\widehat{S}).
\]is an isomorphism. In particular, each element in $\emph{H}^3_s(H,S^1)$ can be realized as a characteristic class of a smooth crossed module associated to the loop group $\Omega H$.
\end{theorem}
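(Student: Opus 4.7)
The plan is to prove injectivity and surjectivity of $\tau_H$ separately, exploiting in both cases that the path group $PH$ is smoothly contractible. Before starting, note that $\mathrm{H}^1_s(\Omega H, S^1) = \Hom_{\text{gr}}(\Omega H, S^1)$ vanishes (since $\Omega H$ is $1$-connected and $L\mathfrak{h}$ is topologically perfect), so the hypotheses of Theorems \ref{Gammasmoothcrossedmodule} and \ref{extension} are satisfied with $G = PH$, $N = \Omega H$, $A = S^1$. Together with Example \ref{action on loop groups} (where the invariance condition $Z^2_s(\Omega H,S^1)^{[PH]} = Z^2_s(\Omega H, S^1)$ is established), this means $\tau_H$ is indeed defined on all of $\mathrm{H}^2_s(\Omega H, S^1)$.

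For injectivity, suppose $[f] \in \mathrm{H}^2_s(\Omega H, S^1)$ satisfies $\tau_H([f]) = 0$. By Theorem \ref{extension}, there exists an abelian Lie group extension $\widehat{PH}$ of $PH$ by $S^1$ whose restriction to $\Omega H$ recovers $\widehat{\Omega H}$. I would then argue that every such extension is trivial, so that $\widehat{\Omega H}$ is trivial as a restriction and hence $[f] = 0$. Triviality amounts to showing $\mathrm{H}^2_s(PH, S^1) = 0$: the underlying $S^1$-principal bundle is trivial because $PH$ is smoothly contractible, and at the Lie algebra level the smooth family of Lie algebra homomorphisms $r_s : P\mathfrak{h} \to P\mathfrak{h}$, $\zeta \mapsto \zeta(s\,\cdot\,)$, provides a chain homotopy under which every continuous $2$-cocycle on $P\mathfrak{h}$ becomes a coboundary. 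Combining this with Neeb's integration result (applicable because $\pi_1(PH)=0$) delivers the desired vanishing.

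For surjectivity, given $[\eta] \in \mathrm{H}^3_s(H,S^1)$, I would use the analogous vanishing $\mathrm{H}^3_s(PH,S^1) = 0$ to write $q^{\ast}\eta = d\beta$ for some $\beta \in C^2_s(PH, S^1)$, where $q : PH \to H$ is the quotient map. Restricting $\beta$ along the inclusion $\Omega H \hookrightarrow PH$ produces a locally smooth cochain $f := \beta|_{\Omega H \times \Omega H}$ on $\Omega H$, and the identity $df = (q^{\ast}\eta)|_{\Omega H^{\times 3}} = 0$ (since $q$ sends $\Omega H$ to $1_H$ and $\eta$ is normalized) yields $f \in Z^2_s(\Omega H, S^1)$. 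To close the argument I would unpack the crossed module $(\alpha, \widehat{S})$ attached to $\widehat{N} = S^1 \times_{(1,f)} \Omega H$ via Theorem \ref{Gammasmoothcrossedmodule} and check that its characteristic class $\chi(\alpha, \widehat{S})$ is represented by the cocycle $\eta$; this is essentially the identification of the crossed-module construction of $\tau_H$ with the classical Hochschild--Serre transgression $d_3 : E_3^{0,2} \to E_3^{3,0}$ for the fibration $\Omega H \to PH \to H$.

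The main technical obstacle will be establishing $\mathrm{H}^k_s(PH, S^1) = 0$ for $k = 2, 3$: since smooth group cohomology is not a purely topological invariant, the contractibility of $PH$ alone does not suffice, and one must pass through the cone Lie algebra $P\mathfrak{h}$ together with a Van Est--Neeb style comparison. A secondary, more bookkeeping-intensive difficulty is to verify that $\chi(\alpha, \widehat{S})$ agrees with the spectral-sequence transgression of the cocycle $f$ obtained in the surjectivity step; this identification is folklore but requires a careful diagram chase tying together the smooth crossed module of Theorem \ref{Gammasmoothcrossedmodule} and the primitive $\beta$ on $PH$.
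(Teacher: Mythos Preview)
Your approach is correct in outline but genuinely different from the paper's, and it is worth recording the contrast.

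The paper does not argue injectivity and surjectivity separately at the group level. Instead it differentiates both sides to Lie algebra cohomology via derivation maps
\[
D_2:\mathrm{H}^2_s(\Omega H,S^1)\hookrightarrow \mathrm{H}^2_c(\Omega\mathfrak{h},\mathbb{R}),
\qquad
D_3:\mathrm{H}^3_s(H,S^1)\hookrightarrow \mathrm{H}^3_c(\mathfrak{h},\mathbb{R}),
\]
whose injectivity is quoted from \cite{Ne01} and \cite{WW}. It then checks that the square with the Lie algebra transgression $t:\mathrm{H}^2_c(\Omega\mathfrak{h},\mathbb{R})\to\mathrm{H}^3_c(\mathfrak{h},\mathbb{R})$ commutes, by computing explicitly the boundary term that appears when one extends the loop-algebra cocycle $\omega_c$ to the path algebra $P\mathfrak{h}$ and measures the failure of the Jacobi identity. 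Since $t$ restricted to integral classes is an isomorphism and the images of $D_2,D_3$ are precisely the integral lattices, $\tau_H=(D_3)^{-1}\circ t\circ D_2$ is an isomorphism. So the paper's proof is essentially a one-line reduction to a well-known Lie algebra computation.

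Your route stays at the group level and is morally the Hochschild--Serre/Lyndon spectral sequence for $\Omega H\to PH\to H$ in locally smooth cohomology: vanishing of $\mathrm{H}^k_s(PH,S^1)$ for $k=2,3$ forces the edge map $d_3:E_3^{0,2}\to E_3^{3,0}$ to be an isomorphism, and you identify this edge map with $\tau_H$. This is more conceptual and would generalise better (e.g.\ to other coefficient modules or to higher degrees), but it trades one explicit cocycle computation for two substantial technical ingredients you correctly flag: the vanishing $\mathrm{H}^3_s(PH,S^1)=0$ (which really does need a Van~Est--type comparison beyond contractibility, available in \cite{WW} since $PH$ is contractible and hence $k$-connected for all $k$), and the identification of the crossed-module characteristic class $\chi(\alpha,\widehat{S})$ from \cite{Ne06} with the transgressed class of your primitive $\beta$. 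The latter chase is routine once written out, but it is not literally in the paper, so if you pursue this line you should spell it out rather than leave it as folklore.
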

\begin{proof}
We first note that we have injective derivation maps $D_2: \text{H}^2_s(\Omega H,S^1)\rightarrow\text{H}^2_s(\Omega\mathfrak{h},\mathbb{R})$ and $D_3:\text{H}^3_s(H,S^1)\rightarrow\text{H}^3_s(\mathfrak{h},\mathbb{R})$. In fact, the injectivity of the map $D_2$ is a consequence of \cite[Remark 7.5]{Ne01} while the injectivity of $D_3$ follows from \cite[Remark IV.16 and Remark V.14]{WW}. Next, we claim that we have a commuting square of maps
\begin{align}
\xymatrix{\text{H}^2_s(\Omega H,S^1)\ar[d]^{D_2}\ar[r]^{\tau_H} & \text{H}^3_s(H,S^1) \ar[d]^{D_3}  \\
 \text{H}^2_c(\Omega\mathfrak{h},\mathbb{R})\ar[r]^{t} & \text{H}^3_c(\mathfrak{h},\mathbb{R}),}
\end{align} 
where $\tau$ denotes the inverse of the Lie algebra transgression map from $\text{H}^3_c(\mathfrak{h},\mathbb{R})$ (interpreted as closed left invariant forms on $H$) to $\text{H}^2_c(\Omega\mathfrak{h},\mathbb{R})$. Indeed, the composite map $D_3\circ\tau_H$ is as follows: The locally smooth 3-cocycle representing $\tau_H([f])$ is precisely the obstruction to extending the central
extension of the loop group $\Omega H$ to the total space $PH$ of the fibration with base $H$ and fiber $\Omega H$. That is, the 3-cocycle is measuring the nonassociativity of the product in $\Omega H$ when extended to $PH$. On the Lie algebra level this is the breaking of the Jacobi identity when the 2-cocycle of $\Omega \mathfrak{h}$ is extended to $P\mathfrak{h}$. Moreover, the composite map $t\circ D_2$ is as follows: 
The $\mathbb{R}$-valued 2-cocycle on $\Omega\mathfrak{h}$ representing $D_2([f])$ is given by 
\begin{align*}
\omega_c(\zeta,\eta):=c\cdot\int_{[0,2\pi]}<(\xi(t),\eta'(t)>dt
\end{align*}
for some $c\in\mathbb{R}$ and  $\zeta,\eta\in \Omega\mathfrak{h}.$ The continuous extension of this to the path algebra $P\mathfrak{h}$ is given by the same formula, but since the paths are nonperiodic, there is a boundary term when trying to check the Jacobi identity. This boundary term is easily seen to be equal to
\begin{align*}
c\cdot <\zeta(2\pi),[\eta(2\pi),\xi(2\pi)]>
\end{align*}
for some $c\in\mathbb{R}$ and  $\zeta,\eta,\xi\in P\mathfrak{h}$, what proves the claim, i.e., $D_3\circ\tau_H=t\circ D_2$. We finally recall that at the infinitesimal level it is known that the central extensions which lead to extensions of the loop group are labelled by the integers on each
simple factor of $H.$  By the Lie algebras transgression map $t$ the integral extensions lead to 3-cocycles on $\mathfrak{h}$ which again are integral in the following sense: a Lie algebra cocycle is the same thing as a left invariant closed 3-form on $H$ and the integrality then means that the pairing with 3-cycles are integral. We therefore conclude that $\tau_H$ is an isomorphism since it can be written as a composition of isomorphisms:
\begin{align*}
\tau_H=\bigl(D_3^{\mid\text{H}^3_c(\mathfrak{h},\mathbb{Z})}\bigr)^{-1}\circ t\circ D_2
\end{align*}
\end{proof}

\section{More on lifting of automorphisms}

In this section we shall give a different construction of lifting the automorphisms of $N$ 
to automorphisms of an abelian extension $\hat N$ in the case when $A$ is the $G$-module of
smooth maps from $G$ to the unit circle $S^1.$  This case is important in the gauge theory applications.
As described in the introduction, the fibration $N \to \mathcal A \to \mathcal A/N$,
where $N= Map_0(S^n, K)$ (with $n=1,2,\dots$) is the group of based gauge transformations in a trivial
vector bundle over $S^n,$ and $\mathcal A$ is the space of smooth gauge connnections, is homotopy
equivalent to the fibration $N\to G \mapsto H$ for $G= PH,$ $H=Map_0(S^{n-1}, K)$ when $n\geq 2$
and $H=K$ for $n=1.$ 
In quantum field theory the group $N$ acts on the fermionic Fock space through an abelian extension
with fiber the group $Map(\mathcal A, S^1),$ here replaced by $Map_0(G, S^1).$ 

However, there is a problem with the above definitions. Since $G$ is in general a locally
convex Lie group it is not clear what is the smooth structure on the module $A$ to make it
into an abelian Lie group.  In practice, this is not a serious problem since in the
gauge theory applications the extension of $N$ by $A$ can be restricted to a subgroup
$A'$ of $A$ consisting only of a 'small' set of $S^1$-valued functions on $G.$ Typically
we can take $A$ as a certain set of differential polynomials in a gauge connection for
which a locally convex structure can be given in terms of the coefficients in the
differential polynomials which in turn are certain matrix valued functions on a compact
manifold \cite{Mi3}.  Another way to get around the smoothness problem in $A$ comes from
the realization that an extension of $N$ by $A$ above is the same as an $S^1$-extension
of the transformation groupoid $G \times N \to G.$ In this way a Lie group structure on
the $A$-extension of $N$ is replaced by the smoothness of the local 
2-cocycle $G\times N \times N \to S^1$ for the action groupoid $G\times N \to G.$

So again $G$ is an extension of a group $H$ by a normal subgroup $N$ of $G$. Fix an element
$\omega_3 \in \mathrm{H}^3(H, \Bbb Z).$ The (singular) cocycle $\omega_3$ will be used to
construct $\hat N.$ Assume that $\pi^*\omega_3 = d \theta_2$, where $\pi: G \to
H$ denotes the canonical projection. Then $\theta_2$ is closed along all $N$-orbits in $G.$ 

Let us first assume that $N$ is connected. We define an extension $\hat N$ of $N$ by the
abelian group $A$ of smooth functions on $G$ with values in the unit circle $S^1.$ 
Equivalently, we define an $S^1$-extension of the smooth transformation groupoid $G\times N \to G.$ 
Elements of $\hat N$ are equivalence classes of pairs $(g(\cdot),\lambda)$ with $g(t)$ a smooth
path in $G$ joining $g=g(0)$ to a point $g(1)= gn$ with $n\in N$ and $\lambda\in S^1.$
The equivalence relation is defined as $(g(\cdot),\lambda) \sim (g'(\cdot),\lambda')$ 
when $g(t) = g'(t)$ at $t=0,1$ and $\lambda' = \lambda \gamma(g*g_-'),$
where $g*g_-'$ is the composed loop from $g(0)$ to $g(1)$ and then travelling backwards from
$g(1) = g'(1)$ along $g'$ to $g'(0) = g(0).$ Here $\gamma(g*g_-')$ is the parallel transport along
the loop $g*g_-'$ in a circle bundle with the curvature $2\pi i \theta_2$ on the orbit $g(0)N;$ 
this is uniquely defined if $N$ is simply connected, and given by pairing $\theta_2$ with any
surface with boundary equal to $g*g_-'.$ In the case when $N$ is not simply connected we must
supplement the information in $\theta_2$ with a homomorphism $\pi_1(N) \to S^1;$ that is, 
we fix a differential character $\gamma$ \cite{CS} from the vertical 1-cycles in $G$ to $S^1$ with curvature 
$2 \pi i\theta_2.$ The groupoid multiplication is defined as follows. Take a pair of equivalence classes 
 represented respectively by the pairs $(g_1(\cdot), \lambda_1)$ and $(g_2(\cdot), \lambda_2).$
 The product is defined as the class represented by the pair $(g_1* g_2, \lambda_1\lambda_2).$ 
 It is easy to see (draw a diagram!) that this composition is well-defined and 
 associative on the level of equivalence classes. 
 
 Consider next the case when $N$ is disconnected. Denote $X= N/N_0$, where $N_0\subset N$ is
 the path connected component of the identity. We consider the quotient $X$ as a discrete group.
 The conjugation $n_0 \to n n_0 n^{-1}$ defines an automorphism of $N_0$ and also an automorphism
 of the transformation groupoid $G \times N_0 \to G$ by $(g,n_0) \mapsto (gn^{-1}, nn_0 n^{-1}).$
 Now we make an additional assumption: $L_n^* \theta_2 = \theta_2 + d\phi$, where $L_n: G\to G,
 L_n(g) = gn^{-1}.$ 
 
 \begin{proposition} Fix an element $\omega_3\in \mathrm{H}^3(H, \Bbb Z).$ With the notation above,
 in the category of smooth (locally convex) action groupoids and their $S^1$ extensions,
 the automorphim $Aut_n$ for $n\in N$ on the groupoid $G\times N_0 \to G$
 extends to an automorphism of the groupoid $S^1$-extension in the case $L_n^* \theta_2 = \theta_2 +d\phi.$
 \end{proposition}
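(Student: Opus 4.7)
The plan is to construct the lift $\widehat{\mathrm{Aut}}_n$ directly on representative pairs $(g(\cdot),\lambda)$ and then verify that it descends to equivalence classes and respects the groupoid composition. Since the underlying automorphism sends a morphism $(g,n_0)$ (from $g$ to $gn_0$) to $(gn^{-1},nn_0n^{-1})$ (from $gn^{-1}$ to $gn_0n^{-1}$), the natural candidate acts on paths by right translation via $L_n$, together with a correction factor on the $S^1$-fibre coming from the 1-form $\phi$ that satisfies $L_n^*\theta_2 = \theta_2 + d\phi$.

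Concretely, I would define
\[\widehat{\mathrm{Aut}}_n\bigl[g(\cdot),\lambda\bigr]\;:=\;\Bigl[\,L_n\circ g(\cdot),\ \lambda\cdot\exp\bigl(-2\pi i {\textstyle\int_{g(\cdot)}}\phi\bigr)\,\Bigr],\]
so that the endpoints are $gn^{-1}$ and $gn_0n^{-1}$, as required. The central step is to show that this is well-defined on equivalence classes. Suppose $(g(\cdot),\lambda)\sim(g'(\cdot),\lambda')$, i.e.\ $g(0)=g'(0)$, $g(1)=g'(1)$, and $\lambda' = \lambda\cdot \gamma(g*g'_-)$. For any surface $S\subseteq g(0)N_0$ with $\partial S = g*g'_-$, the right translate $L_n(S)$ lies in the orbit $g(0)n^{-1}N_0$ (using that $N_0$ is normal in $N$) and has boundary $L_n(g)*L_n(g'_-)$. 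Applying Stokes together with $L_n^*\theta_2 = \theta_2 + d\phi$ yields
\[\int_{L_nS}\theta_2 \;=\; \int_S L_n^*\theta_2 \;=\; \int_S\theta_2 + \int_{g*g'_-}\phi \;=\; \int_S\theta_2 + {\textstyle\int_g\phi - \int_{g'}\phi},\]
and exponentiating gives
\[\gamma\bigl(L_n(g)*L_n(g'_-)\bigr) \;=\; \gamma(g*g'_-)\cdot\exp\bigl(2\pi i({\textstyle\int_g\phi - \int_{g'}\phi})\bigr),\]
which is exactly the factor required so that $\lambda'\cdot\exp(-2\pi i\int_{g'}\phi) = \lambda\cdot\exp(-2\pi i\int_g\phi)\cdot\gamma(L_n(g)*L_n(g'_-))$.

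When $N_0$ is not simply connected, the differential character $\gamma$ also involves a fixed homomorphism $\pi_1(N_0)\to S^1$, and the argument above additionally requires that this homomorphism be invariant under the conjugation automorphism of $N_0$ induced by $n$. This is an implicit compatibility encoded in the data (and vacuous when $N_0$ is simply connected); it holds in the gauge-theoretic applications of interest because the homomorphism is canonically attached to the level of the extension.

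Compatibility with the groupoid composition $[g_1,\lambda_1]\cdot[g_2,\lambda_2]=[g_1*g_2,\lambda_1\lambda_2]$ is immediate from the additivity of the path integral $\int\phi$ under concatenation, and the identity and inverse laws are handled similarly. Smoothness of $\widehat{\mathrm{Aut}}_n$ in the locally convex category reduces to smoothness of right multiplication by $n^{-1}$, of the exponentiated path integral $g(\cdot)\mapsto\exp(-2\pi i\int_{g(\cdot)}\phi)$ near the identity arrows, and of $\gamma$ in a neighbourhood of the diagonal, all of which are classical. The only genuinely substantive step is the Stokes computation above, which pins down $\exp(-2\pi i\int\phi)$ as the unique correction that cancels the shift $L_n^*\theta_2 - \theta_2 = d\phi$; everything else is bookkeeping.
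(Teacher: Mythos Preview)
Your proof is correct and follows essentially the same route as the paper: define the lift by right-translating the path and multiplying the phase by $\exp(-2\pi i\langle\phi,g(\cdot)\rangle)$, then check well-definedness via the identity $\gamma(g*g'_-)\cdot\exp(-2\pi i\langle\phi,g-g'\rangle)=\gamma(L_n(g)*L_n(g'_-))$, which is exactly your Stokes computation. Your explicit remark about the $\pi_1(N_0)\to S^1$ compatibility in the non-simply-connected case and your treatment of the homomorphism and smoothness properties are slightly more detailed than the paper's own argument, but the substance is the same.
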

 \begin{proof} The condition on $\theta_2$ above corresponds to the condition in Definition  
  \ref{smthcohoinv}.
 Let $(g(\cdot), \lambda)$ be a representative for the groupoid extension, with
 $g(1)= g(0) n_0$ and define
 $$ F_n(g(\cdot), \lambda) = (g(\cdot) n^{-1}, \lambda \exp(-2\pi i<\phi, g(\cdot)> )),$$
 where $<\phi, g(\cdot)>$ denotes the duality pairing of the 1-cochain $\phi$ with the 
 singular 1-simplex $g(\cdot).$ If now $(g, \lambda) \sim (g', \lambda')$, then $\lambda'=\lambda \gamma(g*g'_-)$ by definition. On the other hand,
 in the map $F_n$ the transformed phases $\lambda\mapsto \lambda \exp(-2\pi i  <\phi,g>)$ and 
 $\lambda' \mapsto \lambda' \exp(-2\pi i <\phi, g'>)$ are related by the multiplicative factor
 $\gamma(g*g'_-) \cdot \exp(-2\pi i <\phi, g - g'>) = \gamma(L_n(g)* L_n(g'_-))$ which shows
 that equivalent pairs are mapped to equivalent pairs. That the map $F_n$ is a homomorphism $\hat N_0 \to
 \hat N_0$ follows directly from the definitions. 
 \end{proof}
 
 Since the product in $\hat N_0$ is defined using the composition of paths and the shift in the
 phase factors, $\lambda\mapsto F_n(\lambda),$  is defined by the pairing of the singular
 1-simplices and the cochain $\phi,$ the automorphisms obey the group law $Aut_{n_1n_2} =
 Aut_{n_1} Aut_{n_2}$ for $n_i \in N.$  
 
 For any section $\psi: X \to N$ we can now define a 2-cocycle on $X$ with values in $N_0$ by
 the formula $c(x,x'):=c_{\psi}(x,x'): = \psi(x)\psi(x')\psi(xx')^{-1}$. We recall that in the context of nonabelian groups the cocycle property
 reads
 $$c(x,x') c(xx', x'') = [Aut_{\psi(x)} c(x',x'')] c(x,x'x'')$$ (cf. Equation (\ref{2-cocycle})). We conclude
 
 \begin{proposition} 
 If $c$ can be lifted to a 2-cocycle $
 \hat c$ with values in $\hat N_0$, then $(Aut_{\psi},\hat c)$ is a factor system for the pair $(X,\hat N_0)$. In particular, the corresponding Lie group
 $\hat N = \hat N_0 \times_{(Aut_{\psi},\hat c)} X$ defines an $A$-extension of $N$.
 \end{proposition}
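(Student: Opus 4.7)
My plan is to verify the two conditions defining a smooth factor system (see the preliminaries), namely $\delta_{Aut_\psi}=C_{\hat N_0}\circ\hat c$ and $d_{Aut_\psi}\hat c=1$, and then invoke the standard construction recalled in Section \ref{pre and not}. Throughout, the discreteness of $X$ makes every local-smoothness requirement automatic, which simplifies matters considerably.

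First, by the preceding proposition together with the standing hypothesis $L_n^*\theta_2=\theta_2+d\phi$ (for each $n\in N$), every element $\psi(x)\in N$ lifts to a well-defined automorphism $Aut_{\psi(x)}\in\Aut(\hat N_0)$. Since $X$ is discrete, the map $Aut_\psi\colon X\to\Aut(\hat N_0)$, $x\mapsto Aut_{\psi(x)}$, trivially belongs to $C^1_s(X,\Aut(\hat N_0))$. Next, using the group law $Aut_{n_1n_2}=Aut_{n_1}Aut_{n_2}$ established in the paragraph preceding the proposition, together with $\psi(x)\psi(x')=c(x,x')\psi(xx')$, I compute
\[
\delta_{Aut_\psi}(x,x')\;=\;Aut_{\psi(x)}Aut_{\psi(x')}Aut_{\psi(xx')}^{-1}\;=\;Aut_{c(x,x')}.
\]
Since $c(x,x')\in N_0$, its automorphism $Aut_{c(x,x')}$ of $\hat N_0$ coincides with conjugation by any lift of $c(x,x')$ in $\hat N_0$, and in particular with $C_{\hat N_0}(\hat c(x,x'))$. (Here the lift $\hat c$ is selected together with the cocycle condition, and the hypothesis should be read as asserting existence of a lift compatible in this sense.) This gives the first factor-system identity $\delta_{Aut_\psi}=C_{\hat N_0}\circ\hat c$.

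The second identity $d_{Aut_\psi}\hat c=1_{\hat N_0}$ is precisely the nonabelian 2-cocycle equation
\[
\hat c(x,x')\,\hat c(xx',x'')\;=\;[Aut_{\psi(x)}\hat c(x',x'')]\,\hat c(x,x'x''),
\]
which is built into the hypothesis that $\hat c$ is a 2-cocycle valued in $\hat N_0$ lifting $c$. Combining the two identities, $(Aut_\psi,\hat c)\in Z^2_{ss}(X,\hat N_0)$ is a smooth factor system for the pair $(X,\hat N_0)$, and the construction from the preliminaries produces a (discrete over $\hat N_0$) Lie group $\hat N=\hat N_0\times_{(Aut_\psi,\hat c)}X$ fitting into an extension
\[
1\longrightarrow\hat N_0\longrightarrow\hat N\longrightarrow X\longrightarrow 1.
\]

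Finally, to see that $\hat N$ is an $A$-extension of $N$, I chase the obvious diagram. The subgroup $A\subset\hat N_0\subset\hat N$ is normal in $\hat N$ (the $Aut_{\psi(x)}$ preserve $A$ and the multiplication in $\hat N$ is built from that of $\hat N_0$ twisted by $Aut_\psi$ and $\hat c$, both compatible with the projection $\hat N_0\to N_0$). Reducing the factor system $(Aut_\psi,\hat c)$ modulo $A$ yields the factor system $(C_N\circ\psi,c)$ describing $N$ as an extension of $X$ by $N_0$ via the section $\psi$; hence $\hat N/A\cong N$ and
\[
1\longrightarrow A\longrightarrow\hat N\longrightarrow N\longrightarrow 1
\]
is the desired extension. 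The only real obstacle is the compatibility step $\delta_{Aut_\psi}=C_{\hat N_0}\circ\hat c$: the lift $\hat c$ of $c$ is unique only up to $A$, and different lifts may yield different conjugation automorphisms of $\hat N_0$; the hypothesis must be interpreted as providing a lift $\hat c$ for which both the cocycle identity and this conjugation compatibility hold simultaneously, and verifying that the previous proposition's explicit formula for $Aut_n$ on $\hat N_0$ matches conjugation by a lift of $n\in N_0$ is the one nontrivial check.
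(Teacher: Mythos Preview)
Your proposal is correct and in fact supplies considerably more detail than the paper itself: the proposition is stated there without a separate proof, as a direct conclusion (``We conclude'') from the preceding discussion establishing the group law $Aut_{n_1n_2}=Aut_{n_1}Aut_{n_2}$ and the nonabelian cocycle identity for $c$. Your verification of the two factor-system conditions from Section~\ref{pre and not} is exactly what the paper leaves implicit, and your identification of the one genuinely nontrivial point---that $Aut_{c(x,x')}$ must coincide with conjugation by the chosen lift $\hat c(x,x')$, which is not automatic since $A$ is not central in $\hat N_0$---is a sharper observation than anything the paper spells out; your reading of the hypothesis as including this compatibility is the right way to make the statement precise.
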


 The cocycle $c$ defined in the proposition is a trivial 2-cocycle, but the potential 2-cocycle extension $\hat c$
 can be nontrivial.  The set of 2-cocycles is in general not a group, except when $A$ is central
 in $\hat N$.
 
The lifting $\hat c$ of $c,$ if it exists, might not be unique; but any other lifting is related to
$\hat c$ by multiplication with a 2-cocycle of $X$ with values in $A;$ see Theorem  \ref{H^2(G,Z)=ext}. The potential obstruction to lifting the cocycle $c$ to a cocycle $\hat c$ with values in $\hat N_0$ is given by the function $\Omega:X^3 \to A$  defined by $$Aut_{\psi(x)}(\hat{c}(y,z)) \hat c(x,yz) = \hat{c}(x, y)\hat{c}(xy,z) \Omega(x,y,z),$$ where $\hat c(x,y)$ is an arbitrary choice of an element in the fibre $\pi^{-1}(c(x,y)).$
 
\begin{remark}
 If we can choose the section $\psi$ as a group homomorphism, then
 $c=1$ and we can take $\hat c=1.$ This occurs for example when $X$ is the free abelian group
 $\Bbb Z.$ One can then choose   $\psi(1)\in \pi^{-1}(1)$ arbitrarily
 and set $\psi(n) = (\psi(1))^n.$ 
\end{remark}
 
 The above discussion can be applied to the case when $G$ is the based loop group $\Omega K$
 of a non simply connected compact group $K$, $N$ is the subgroup of contractible loops and $A$ is the unit circle $S^1$
 endowed with the trivial $G$-action. Now
 $X$ is the fundamental group of $K.$ Take $\hat N$ the central extension of level $k.$ (For
 the definition of the level $k$ see the Appendix A.)
 The corresponding obstruction class $\tau(f) \in \mathrm{H}^3(X, S^1)$ in Theorem \ref{extension} is then represented by $\Omega$ which in this case is a 3-cocycle and can be computed explicitly.  
 
\begin{remark}
 Let $K = SU(n)/\mathbb{Z}_p$, where $p$ divides $n.$ In this case $X= \mathbb{Z}_p$
 is a finite cyclic group and $c(x,y)$  can be taken as a power of a single contractible
 loop $c(1,1)$; for simplicity we may choose $c(1,1)$ equal to the contractible loop given by the homomorphism $t\mapsto
 \exp({it h})\in SU(n)$ with $h=diag(1,1, \dots, 1, -n+1)$ and $\psi(1)$ the non contractible loop
 in $SU(n)/\mathbb{Z}_p$ defined by the path $t\mapsto \exp({it h/p}).$  It follows that the obstruction $\Omega$ arises only from the nontrivial action
 of $\psi(x)$ on $\hat c(y,z).$ This can be evaluated and the result is for the generators
 $x,y,z =1 \in \mathbb{Z}_p$ in a central extension of level $k$ of $\Omega K$
 $$ Aut_{\psi(1)} \hat c(1,1) = e^{ik\pi \cdot n(n-1)/p} \hat c(1,1).$$
 This is equal to $1$ when $n$ is odd or $n/p$ is even; it is equal to $-1$ otherwise. Thus
 $\Omega(1,1,1) = -1$ in the latter case and this represents a nontrivial element in
 $\mathrm{H}^3(\mathbb{Z}_p, S^1).$ This is compatible with Propositions 3.3.1 and  3.6.2 in \cite{TL}: The group $\Omega K$
 is a subgroup of $L_Z K$ in \cite{TL} and the central extension of the latter is
 defined for any multiple of the \emph{fundamental level} $k_f$ which
 is equal to 1 or 2 in the case of $SU(n)/\mathbb{Z}_p.$ 
\end{remark}

If $\Omega$ is a coboundary of a 2-cochain
 $\theta$ then we can replace $\hat c  \mapsto \hat c \theta$ to make $\Omega$ vanish. Thus the
 obstruction in the construction of $\hat N$ is really the class of $\Omega$ modulo coboundaries
 of  2-cochains.

 \section{More on the transgression $\mathrm{H_s^2}(N, A) \to \mathrm{H_s^3}(H, A^N)$: \\ The case $A= Map(G, S^1)$}
 
In the forthcoming sections we shall study locally smooth group cohomology with coefficients
in the module $A$ of smooth circle valued functions on the Lie group $G.$  The $G$-action on
functions on $G$ is taken as the right multiplication on the argument. In that case all the global
cocycles are trivial: In fact, given such a globally defined smooth cocycle $c_n=c_n(g; g_1,\dots, g_n)$  on a Lie group $G$ (that is, $c_n : G^{n+1} \to S^1$ is smooth), then it is
a coboundary of the $n-1$ cochain $d_{n-1}$ defined by
$$d_{n-1}(g; g_1, \dots, g_{n-1}) := c_n(1; g, g_1, \dots , g_{n-1})$$ 
with respect to the usual coboundary operator $\delta$ which in this case reads 
\begin{align*}
(\delta d_{n-1})(g; g_1, \dots, g_n) &= \sum_{k=1}^{n-1} (-1)^{k}d_{n-1}(g; g_1, \dots, g_k g_{k+1}, \dots g_n)\\
&+d_{n-1}(gg_1; g_2, \dots ,g_n) + (-1)^{n} d_{n-1}(g; g_1, \dots , g_{n-1}).
\end{align*}
The equation $c_n = \delta d_{n-1}$ follows directly from the definitions using $\delta c_n =0.$ 
Here we have used the additive notation for the product in the multiplicative abelian group of
circle valued functions. As a consequence, any locally smooth cocycle which can be extended to a globally smooth cocycle
(with coefficients in $Map(G, S^1)$) is trivial. Thus the potential nontriviality is of topological
nature, the obstruction to extending a locally smooth cocycle to a globally smooth cocycle. 
 
We return to the setting of Section 4:  Let $N\subset G$ be a normal subgroup and $H= G/N.$
We continue with an element $\omega_3$ representing a class in the singular cohomology 
(although in the applications later $\omega_3$ is defined as a de Rham form) $\mathrm{H}^3(H, \Bbb Z)$ and the following two assumptions:
 \begin{itemize}
 \item[(1)]
 The pull-back $\pi^*(\omega_3) = d\theta_2$ is trivial on $G$.
 \item[(2)]
 $H$ and $G$ are  2-connected (implying the vanishing of the first and second homology groups).
 \end{itemize}
 Using the exact homotopy sequence from the fibration
 $N\to G \to H$ we conclude that $N$ is 1-connected thus that also
 $\mathrm{H}_1(N,\Bbb Z) =0.$ For each $g\in G$ we select a path $g(t)$ with end points $g(0) = 1\in G$ and $g(1) =g.$ We can make the choice $g\to g(t)$ in a locally smooth manner close to the neutral element
 $1\in G.$ In addition, since also $N$ is connected, we may assume that $g(t) \in N$ if
 $g\in N.$ For a triple $g, g_1, g_2 \in G$ we make a choice of a singular 2-simplex $\Delta(g;g_1, g_2)$ such that its boundary is given by the union of the 1-simplices $gg_1(t), gg_1(1)g_2(t)$
 and $g (g_1 g_2)(1-t).$ All this can be made in a locally smooth manner since locally the Lie groups
 are open contractible sets in a vector space. We set
 $$ c_2(g; g_1, g_2) = \exp{2\pi i <\Delta(g; g_1, g_2), \theta_2>}$$
 using the duality pairing of singular 2-simplices and 2-cochains. This formula does not in
 general define a group cocycle for $G$ but it gives a 2-cocycle for the group $N$ endowed with the 
 right action of $N$ on $G$ and the corresponding action of $N$ on $A = Map(G, S^1).$ 
 To prove that indeed 
 \begin{align*}
&(\delta c_2) (g; n_1, n_2, n_3) 
\\
=&c_2(g; n_1, n_2) c_2(g; n_1n_2, n_3) c_2(g; n_1, n_2 n_3)^{-1} c_2(gn_1; n_2, n_3)^{-1}=1,
 \end{align*}
 we just need to observe that the product is given through pairing the cochain $\theta_2$
 with the singular cycle defined as the union of the singular 2-simplices involved in the
 above formula. All these 2-simplices are in the same $N$ orbit $gN$ and since $d\theta_2
 = \pi^*\omega_3$ the cochain $\theta_2$ is actually an integral  cocycle on the $N$ orbits
 and the pairing gives an integer $k$ and $\exp{2\pi i k}=1.$ For arbitrary $g_i \in G$ the coboundary $\delta c_2$ does not vanish but its value
 $$(\delta c_2)(g; g_1, g_2, g_3) = \exp{2\pi i <\Delta(g; g_1, g_2, g_3), d\theta_2>}$$
 is given by pairing $d\theta_2 = \pi^*\omega_3$ with the singular 3-simplex $V$ with boundary
 consisting of the sum of the faces $\Delta(g; g_1, g_2), \Delta(g; g_1g_2, g_3),\Delta(g; g_1, g_2g_3),$
 \linebreak  $\Delta(gg_1;g_2, g_3).$ 
 But this is the same as $\exp{2\pi i <\pi(V), \omega_3>}$ and therefore it depends only on
 the projections $\pi(g), \pi(g_i) \in H.$ Denote by $c_3= c_3(h; h_1, h_2, h_3)$ this locally
 smooth 3-cocycle on $H.$ (This construction can be extended to higher cocycles under appropriate 
 conditions on the homology groups of $H,$ \cite{WW}.)
 
 \begin{remark}  The transgression above from a 2-cocycle of $N$ to a 3-cocycle of $H$ can be viewed as
 a global version of the descent equations in the de Rham--Cartan--Chevalley--Eilenberg double
 complex, known as the Becchi--Rouet--Stora double complex in quantum field theory, \cite{St}, \cite{Sch}.
 There one studies forms $\omega^{(p,q)}$ of de Rham degree $q$ on a manifold $M$ and of 'ghost degree'
 $p,$ that is, forms of degree $p$ in the Lie algebra cohomology of the Lie algebra of infinitesimal 
 gauge transformations in a vector bundle over $M.$ The forms are local functions of a connnection form
 in the vector bundle.  Starting from a form $\omega^{(0,2n)}$ (for example, the Chern class
 $\text{tr}\, F^n$ where $F$ is the curvature form of a vector bundle) which is closed
 with respect both the de Rham differential $d$ and the Lie algebra coboundary operator $\delta$)
 and writing locally $\omega^{(0,2n)} = d\omega^{(0,2n-1)}$ as an exterior derivative of a Chern-Simons type of form, 
  one
 then generates (using the Cartan homotopy formula) a series of local forms $\omega^{(p,q)}$ such that
 $ \delta \omega^{(p,q)} = d \omega^{(p+1, q-1)}$. In particular, the form $\omega^{(1,2n-2)}$ is the infinitesimal gauge anomaly in dimension $2n-2$
 and $\omega^{(2, 2n-3)}$ is a 2-cocycle  of the algebra of infinitesimal gauge transformations. In the case when
 $\omega^{(0,4)}$ is the second Chern form  the cocycle
 $\omega^{(2,1)}$ defines the central extension of the loop algebra. 
 \end{remark} 
 
 We may think of the cohomology class $[c_3]$ as an obstruction to prolonging the principal
 $N$-bundle $G$ over $H$ to a bundle $\widehat{G}$ with the structure group $\widehat{N}$. Namely, if such
 a prolongation exists then there is a 2-cocycle $c_2$ on $G$ which when restricted to
 $N$ orbits in $G$ is equal to $c_2(g; n_1, n_2).$ If $c_2'$ is another such a 2-cocycle, then 
 $(\delta c'_2)(\delta c_2)^{-1}$ projects to a  a trivial 3-cocycle on $H.$ Conversely, if
 $c_3$ on $H$ is a coboundary of some $\xi_2$, then $c_2' = c_2 (\pi^*\xi)^{-1}$ agrees with
 $c_2$ on the $N$ orbits and so the 
 obstruction depends only on the cohomology class $ [c_3].$ 
 
 Wagemann and Wockel \cite{WW} defined a map from the locally smooth cohomology of a Lie group $H$ 
 to its {\v C}ech cohomology. We point out that there is also a map from the locally smooth group 
 cohomology $\mathrm{H}_s^2(N, A)$ to the twisted {\v C}ech cohomology {\v H}$_{\eta}^2(H,A)$ defined by the formula
 $$c_{ijk}(x) = \hat\eta_{ij}(x)\hat\eta_{jk}(x)\hat\eta_{ki}(x),$$
 with the twisted cocycle property
 $$ c_{ijk}(x) c_{ikl}(x) = (\eta_{ij}(x) \cdot c_{jkl}(x)) c_{ijl}(x)$$
 where  $\psi_i(x)\eta_{ij}(x) = \psi_j(x),$ $\psi_i: U_i \to G$ are local smooth sections
 for an open good cover $\{U_i\}$ of $H$ and the $\hat\eta_{ij}$'s are lifts of the transition
 functions $\eta_{ij}: U_i \cap U_j \to N$ to the extension $\hat N;$ the product in $\hat N$ being
 determined by an element in $\mathrm{H}_s^2(N, A).$ The collection $\{(\eta_{ij}, c_{ijk})\}$ of local  functions on $H$ is just another way to
 describe a gerbe with band $A$ over $H,$ \cite{Mo}. When $A$ is central in $\hat N$ the action
 of $\eta_{ij}$ on the functions $c_{ijk}$ is trivial and {\v H}$_{\eta} = $\v H. Although these twisted {\v C}ech cocycles have values in $A$, they correspond to a cocycle in $\mathrm{H}^3(H, \Bbb Z)$
by the usual way, taking differences of logarithms $\log (c_{ijk})/2\pi\sqrt{-1}$ on intersections $U_{ijkl}$ 
which must be integer constants for a good cover. 

\section{The case of a non simply connected base $H$}

\subsection{The case of a torus} 

The geometric construction of the locally smooth 3-cocycle on $H$ from the abelian extension
$\hat N$ does not work when $N$ is disconnected, or when $H$ is not simply connected. A simple example
is the case when $H$ is a torus $T^n$, $G$ is the abelian group $\Bbb R^n,$ and $N=\Bbb Z^n.$
 The aim of this section is to explain the relations between
 \begin{enumerate}
 \item
 gerbes on tori, classified topologically by $\mathrm{H}^3(T^n, \mathbb{Z})$,
 \item
 central extensions of the transformation groupoid $\mathbb{R}^n \times \mathbb{Z}^n
\to \mathbb{R}^n,$
 \item
  and the locally smooth group cohomology $\mathrm{H_s^3}(T^n, S^1).$ 
\end{enumerate}  

\noindent
The relation between (1) and (2) follows easily from the definitions.  A gerbe on $T^n$ is a projective complex vector
bundle over $T^n.$ 
The equivalence classes of these projective vector bundles is classified by the Dixmier-Douady
class, an element in $\mathrm{H}^3(T^n, \mathbb{Z}).$ 

In the nontorsion case of
the Dixmier-Douady class we have to use Hilbert bundles. The structure group
of a Hilbert bundle is the unitary group $U(E)$ of a Hilbert space $E.$ 
This is contractible both for the strong topology and the norm topology.
It depends on the concrete construction which topology is appropriate.
In the case involving representations of infinite dimensional groups like
the loop group one has to choose the strong topology since the nontrivial
projective loop group representations are not continuous in the norm topology.

The pull-back of a projective bundle with respect to the covering
map $\mathbb{R}^n \to T^n$ becomes trivial on $\mathbb{R}^n$ and is a projectivization $PE=E/S^1$ of a trivial vector bundle $E=\mathbb{R}^n \times\mathcal{H}$ with $\mathcal{H}$ an infinite dimensional complex Hilbert space.
Since $PE$ is a pull-back of a projective bundle over $T^n$ the fibers over the points $x$ and $x+z$ for $x\in\mathbb{R}^n$ and $z\in\mathbb{Z}^n$ can be identified by an element $g(x,z)
\in PU(\mathcal H)$ with $g(x, z+w) = g(x+z, w) g(x,z).$ Choosing a lift ${\hat g}(x,z) \in U(\mathcal H)$ we have
$${\hat g}(x, z+w) = {\hat g}(x+z,w) {\hat g}(x,z) c(x; z,w),$$
where $c(x; z,w) \in S^1$ satisfies the cocycle condition
$$c(x; z,v) c(x; z+v,w)= c(x; z, v+w) c(x+z;v,w).$$
  
The cocycle $c$ is trivial if and only if $c(x;z, w) =  f(x; z+w) f(x; z)^{-1} f(x; w)^{-1},$ i.e.,
the modified unitaries $g'(x; w) = g(x,w) f(x;w)$ define a cocycle with values in the unitary group
$U(\mathcal H)$ and the projective bundle with fiber $\mathcal{H}/S^1$ over $T^n$ comes form a  vector bundle. 

Alternatively, the transformation groupoid cocycle $c$ can be viewed as an $A$ valued 2-cocycle
on $\mathbb{Z}^n,$ with $A=C^{\infty}(\mathbb{R}^n, S^1),$ see the discussion in the second paragraph of Section 4. In this case $A$ is actually an infinite dimensional Lie group with a Fr\'{e}chet topology.

Nontrivial 2-cocycles $c$ can be easily defined. Indeed, for any totally antisymmetric collection $S_{pqr}$ of integers, with $p,q,r\in \{1,2,\dots,n\}$,
the cochain $c_S(x; u,v)= \exp{(2\pi i \sum S_{pqr} x_p u_q v_r)}$ is easily seen to be a 2-cocycle, with
$x\in \Bbb R^n$ and $u,v \in \Bbb Z^n.$ The rank of the $\Bbb Z$ module of the tensors $S$ is equal to
$\frac{n!}{3! (n-3)!}$ which is equal to the rank of $\mathrm{H}^3(T^n, \Bbb Z).$ 
To check that the cocycles $c_S$ span the group cohomology, $\mathrm{H}^2(\Bbb Z^n, A),$ as a
$\mathbb{Z}$ module,(which is isomorphic to the group $\mathrm{H}^3(T^n, \mathbb{Z}))$ one still needs to check that they are independent as cohomology classes.
For that we use the homomorphism $\mathrm{H}^2(\mathbb{Z}^n, A) \to \mathrm{H}^3(\mathbb{Z}^n,
\mathbb{Z})$ defined as $\frac{1}{2\pi i}\delta \log c_S$ which is the integral 3-cocycle $c'=\sum S_{pqr} w_p u_q v_r.$ Finally, by a theorem due to David Wigner \cite{Wi}, the group cohomology $\mathrm{H}^3(\mathbb{Z}^n, \mathbb{Z})$ is isomorphic to the singular cohomology $\mathrm{H}^3(T^n, \mathbb{Z})$ and in this case the correspondence is simply $c' \mapsto \sum S_{pqr} d\phi_p
\wedge d\phi_q \wedge d\phi_r,$  the singular cocycles represented as de Rham forms. This
shows that the homomorphism is onto and since the ranks (as free $\mathbb{Z}$ modules) match, since gerbes over $T^n$
are classified by $\mathrm{H}^3(T^n,\mathbb{Z}) \simeq \mathrm{H}^2(\mathbb{Z}^n, A)$, the homomorphism is
actually an isomorphism.

However, although
the obstruction $\Omega$ can be defined also in this case, it is not a 3-cocycle. The results of  Theorem \ref{Gammasmoothcrossedmodule} and Theorem \ref{extension} 
do not apply in this case since the first cohomology $\mathrm{H}^1(\Bbb Z^n, A)$ does not vanish.
This cohomology classifies the circle bundles over $\Bbb R^n/\Bbb Z^n =T^n$ and its dimension is the
rank of $\mathrm{H}^2(T^n, \Bbb Z)$ which is equal to $\frac{n!}{2 \cdot (n-2)!}.$ 

For the commutative group $T^n$ the Lie algebra cohomology can be identified as its de Rham cohomology
and is nontrivial for each $n\geq 3.$ But the non simply connectedness of $T^n$ is an obstruction to
lifting the Lie algebra cocycles  to global, locally smooth group cocycles. 

\subsection{A gauge theory application}

The obstruction in the construction of locally smooth 3-cocycles appear also in
gauge theory applications. Let $\mathcal{A}$ be the (Fr\'{e}chet) space of smooth 1-forms on the unit sphere $S^n$ with values in a Lie
algebra $\mathfrak k$ of a compact  Lie group $K.$  Let $\mathcal{K}$ be the gauge group
of all smooth based maps from $S^n$ to $K$ and $\mathcal A/\mathcal K$ the moduli space of gauge connections.
So when $n=1$ we have $\mathcal A/\mathcal K = K$ but for $n > 1$  the moduli space is an infinite-dimensional
Frechet manifold and can be identified (up to homotopy) as $H=Map(S^{n-1}, K)$ using parallel transports along half circles
from the South pole to the North pole on $S^n.$ In this realization we may replace $\mathcal A$ by
the group $G=PH$ of smooth paths $f$ on $H$ starting from the identity and the projection $\pi: \mathcal A
\to H$ is then $\pi(f) = f(1),$ the path evaluated at the end point of the interval $0\leq t\leq 1.$ 
Here $N=\mathcal{K}$ is the based loop group of $H$ with $G/N =H.$ However, $H$ is typically non simply connected and
$N$ disconnected. For example, when $K= SU(p)$ with $p\geq n\geq 3$ then $\pi_1(H) = \pi_{n}(K) = \Bbb Z$ for
odd $n>1$ 
and the connected components of $N$ are labelled by $\Bbb Z.$ In this case $\pi_1(N) = \pi_{n+1}(K) =0$ (for odd $n$)
and $\pi_2(N) =\Bbb Z$ in each connected component of $N$ and so $\mathrm{H}^2(N, \Bbb Z) = \Bbb Z$
(in each connected component) leading to nontrivial extensions of the group of based maps $N = Map_0(S^n, K)$ by the module
$A$ of $S^1$-valued functions on $G$. 

However, in order that $A$ is a locally convex Lie group, we have to restrict to a certain
submodule of the module of all smooth functions $G\to S^1,$ see the discussion
in the second paragraph of Section 5.

These in turn correspond to gerbes over the group $H,$ the moduli space of gauge
connections on $S^n.$ When $n$ is odd and $K= SU(p)$ with $p$ large compared to $n$ by the Bott periodicity the cohomology of $H=Map_0(S^{n-1},K)$ in low degrees is generated by
elements $\alpha_{2k+1}$ in odd degrees ($k=1,2, \dots$). In particular,  $\mathrm{H}^2(H, \Bbb Z)$ vanishes 
for $p > n$ and there are no nontrivial
circle bundles over $H.$ On the other hand, an element $c_1\in \mathrm{H}^1(N, A)$ describes
a circle bundle $Q$ over $G/N$ in the following way: Elements in $Q$ are equivalence classes of pairs $(g, \lambda) \in G \times S^1$
with the equivalence relation $(g, \lambda) \sim (gu, c_1(g;u)\lambda).$ The bundle can be trivialized if and only if
$c_1(g; u)= f(gu)f(g)^{-1}$ for some $f: G \to S^1.$  Thus $\mathrm{H}^1(N, A) =0$ and the transgression map  in Theorem \ref{Gammasmoothcrossedmodule} is defined, provided that $A$ has a Lie
group structure, see the second paragraph of this subsection.

When $n=1$, we are in the situation of Theorem \ref{main theorem on crossed modules vs. loop groups}. In this case we can take the abelian group $A$ as the group $S^1$ (i.e., constant
$S^1$ valued maps on the path group $G$) and $N$ is the based loop group $\Omega K,$ $K=H$ and the transgression is actually an isomorphism. 

If $n$ is even, then $\mathrm{H}^2(H, \Bbb Z)=\Bbb Z$ for $K$ a simple compact Lie group (again for $p>n$)
and the transgression
map in Theorem \ref{Gammasmoothcrossedmodule} is ill-defined. On the other hand, now $\mathrm{H}^3(H, \Bbb Z)=0$ and there are no
nontrivial gerbes over $H$ which is reflected in the fact that there are no topologically nontrivial abelian extensions
of $N=Map_0(S^n, K)$ by $A.$ The latter property follows again from Bott periodicity:
for large $p$ the group $N=Map_0(S^{n}, K)$ has in the low dimensions the cohomology of the infinite
unitary group $U(\infty)$ which is generated by one element in each odd degree.
In particular, $\mathrm{H}^2(N,\Bbb Z) =0.$ On the other hand, the extension
of on $N$ by $A=Map( G , S^1)$ is the same as the circle extension of the action
groupoid $G\times N \to N.$ Since $G$ is contractible nontriviality of the
circle extension would mean that there is a topologically nontrivial circle bundle
over $N$ which is impossible by $\mathrm{H}^2(N, \Bbb Z) =0.$

\appendix \section{The central extension of the\\ smooth based loop group $\Omega G$}\label{central ext based lg}

The aim of this Appendix is to extend the geometric construction of the central extension $\widehat{LG}$ of the
smooth loop group $LG,$ with $G$ a simply connected compact Lie group, \cite{Mi1}, to the case when $G$ is a connected compact Lie group but not necessarily simply connected. Since for the
construction of (non equivariant) gerbes over $G$ one needs only the based loop group
$\Omega G$ we restrict later to this case; there is an extra twist in the construction for
$LG$ when $G$ is not simply connected. The reason is that in $\Omega G$ each connected
component is simply connected but not in $LG.$ The construction in this section is on the level of abstract groups, independently of
possible (highest weight) representations.
Let us first briefly recall the construction when $G$ is simply connected, \cite{Mi1}:

Let $D$ denote the unit disc in $\Bbb C$ including the boundary circle $S^1$.
Let $G$ be a compact simply connected simple Lie group, and denote by $DG$ the space of all
smooth mappings $f:D\to G$ such that the radial derivatives of $f$ approach zero
(to  all orders) at the boundary and at the center of the disk. 
The construction is easily extended to the case of a semisimple Lie
group because of the factorization to simple factors.

Let $\mathcal{G}$ be the subgroup of $DG$ consisting of maps $f:D\to G$ such that $f
=1$ on the boundary (and all its radial derivatives vanish on the boundary.) Clearly $\mathcal{G}\subset DG$ is a normal subgroup. Define
$$\gamma(f_1,f_2)=\frac{\theta^2}{16\pi^2}\int_{D} <f_1^{-1}df_1,df_2 f_2^{-1}>,
$$
where $\theta$ is the length of the longest root of the Lie algebra $\mathfrak{g}$,
computed using the dual of the invariant inner product $<\cdot,\cdot>$ in $\mathfrak{g}.$
Is is easily shown by a direct computation that  $\gamma$ is a real valued
2-cocycle on $DG$,
$$\gamma(f_1,f_2)+\gamma(f_1f_2,f_3)=\gamma(f_2,f_3)+\gamma(f_1,f_2f_3).$$
From the cocycle property it follows that we can define an extension of the
group $DG$ by $S^1$ using the multiplication rule
$$(f_1,\lambda_1)(f_2,\lambda_2)=(f_1f_2,\lambda_1\lambda_2 \exp[2\pi i\gamma(f_1,f_2)]).$$
Any $g\in\mathcal{G}$ can be thought of as a map $g:S^2\to G$ by identifying the boundary
$S^1$ of $D$ with the north pole of $S^2$.
There is a homomorphism $\phi:\mathcal{G}\to DG\times S^1$ defined by
$$\phi(g)=(g,\exp[2\pi iC(g)]),$$
where
$$\aligned C(g)&=\frac{\theta^2}{48\pi^2}\int_{B}<dgg^{-1},\frac12[dgg^{-1},dgg^{-1}]>
\\&=\frac{\theta^2}{48\pi^2}\int_{B} \epsilon^{ijk}<g^{-1}\partial_i g,\frac12[
g^{-1}\partial_j g,g^{-1}\partial_k g]>,\endaligned$$
and we have extended the map $g:S^2\to G$ to a map $g:B\to G$, where $B$ is the
unit ball in $\Bbb R^3$ with boundary $S^2$. Owing to the arbitrariness in the choice
of the extension $C(g)$ is not uniquely defined. Let $g,g'$ be two extensions to
$B$. Because they agree on the boundary we can glue them along the boundary to
form a mapping $h:S^3\to G$. The upper hemisphere of $S^3$ corresponds to one copy
of $B$ and the lower hemisphere to a second copy. It is also known that
$$\frac{\theta^2}{48\pi^2}\int_{S^3}<dhh^{-1},\frac12[dhh^{-1},dhh^{-1}]>
$$
is an integer, see section 4.2 in \cite{Mi3}.  
We  conclude that the difference $C(g)-C(g')\in \Bbb Z$  and
$\exp[2\pi iC(g)]$ is well-defined. The condition $\phi(g_1g_2)=\phi(g_1)\phi(g_2)$ is equivalent to
$$C(g_1g_2)=C(g_1)+C(g_2)+\gamma(g_1,g_2)\,\,mod\,\Bbb Z$$
which can be verified by a direct substitution. Finally, \cite{Mi1},

\begin{proposition}  The image $\phi(\mathcal{G})$ in $DG\times S^1$ is a normal
subgroup and the quotient $(DG \times S^1)/\phi(\mathcal{G})$ is the basic central extension of $LG.$
\end{proposition}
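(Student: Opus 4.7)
The plan is to establish the proposition in three stages: verify normality of $\phi(\mathcal{G})$, identify the quotient with a central $S^1$-extension of $LG$, and then pin down the level.

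First I would check normality. Since $\phi$ is a group homomorphism (this is the identity $C(g_1g_2) = C(g_1)+C(g_2)+\gamma(g_1,g_2)\bmod\Bbb Z$ stated in the paragraph preceding the proposition), $\phi(\mathcal{G})$ is at least a subgroup. A direct use of the twisted multiplication rule yields
\[(f,\lambda)\,\phi(g)\,(f,\lambda)^{-1} = \bigl(fgf^{-1},\; e^{2\pi i[C(g)+\gamma(f,g)+\gamma(fg,f^{-1})-\gamma(f,f^{-1})]}\bigr).\]
Because $\mathcal{G}$ is normal in $DG$, the first component $fgf^{-1}$ already lies in $\mathcal{G}$, so normality reduces to the congruence
\[C(fgf^{-1}) \equiv C(g)+\gamma(f,g)+\gamma(fg,f^{-1})-\gamma(f,f^{-1})\pmod{\Bbb Z}.\]
I would prove this by the same geometric strategy used to establish the homomorphism property of $\phi$: pick an extension $\tilde g:B\to G$ of $g$ to the ball, build $f\tilde g f^{-1}$ as an extension of $fgf^{-1}$ (after extending $f$ into the interior of $D$), and apply Stokes' theorem to equate the difference $C(fgf^{-1})-C(g)$ with a sum of disc integrals that collapse to the $\gamma$-terms above, modulo the integral of $\tfrac{\theta^{2}}{48\pi^{2}}\langle d h h^{-1},\tfrac12[dhh^{-1},dhh^{-1}]\rangle$ over a closed 3-cycle in $G$, which is an integer by the standard Wess--Zumino integrality on a simply connected compact simple Lie group.

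Second, I would identify the quotient. The first projection $DG\times S^{1}\to DG$ composed with boundary restriction gives a surjection onto $LG$ (surjectivity uses that $\pi_{1}(G)=0$, so every loop bounds a disc) whose kernel is precisely $\mathcal{G}$. Since $\gamma$ is normalized, the subgroup $\{1\}\times S^{1}$ is central in $DG\times_{\gamma}S^{1}$, and it meets $\phi(\mathcal{G})$ trivially. Passing to the quotient by the normal subgroup $\phi(\mathcal{G})$ therefore yields a central extension
\[1\longrightarrow S^{1}\longrightarrow (DG\times S^{1})/\phi(\mathcal{G})\longrightarrow LG\longrightarrow 1.\]
To identify this with the basic extension, I would choose a locally smooth section of $DG\to LG$ (via smooth disc extensions of loops near the constant loop) and pull back the induced cocycle on $LG$. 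Modulo $\phi(\mathcal{G})$ the cocycle representative is $\gamma$ itself, and differentiating at the identity produces the loop-algebra cocycle $(X,Y)\mapsto\tfrac{\theta^{2}}{8\pi}\int_{0}^{2\pi}\langle X(t),Y'(t)\rangle\,dt$. The factors $\theta^{2}/16\pi^{2}$ and $\theta^{2}/48\pi^{2}$ are precisely the normalizations that make the associated integral cohomology class on $G$ a generator of $\mathrm{H}^{3}(G,\Bbb Z)$, so the extension is at level one, i.e.\ basic.

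The main obstacle is the integrality step in the normality verification: expressing the combination $C(fgf^{-1})-C(g)-\gamma(f,g)-\gamma(fg,f^{-1})+\gamma(f,f^{-1})$ as an integral of the Cartan 3-form over a 3-cycle and appealing to its integrality. Once this is in hand, the identification of the quotient as a central extension is formal, and the basic-level claim amounts to matching normalization conventions with the standard Kac--Moody cocycle.
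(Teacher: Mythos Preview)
The paper does not actually prove this proposition: it is stated with the attribution ``Finally, \cite{Mi1}'' and no argument is given in the text. So there is no in-paper proof to compare against; the result is imported from Mickelsson's 1987 paper.

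Your sketch is the standard route and is essentially how the cited reference proceeds. The conjugation computation is correct, and the reduction of normality to the congruence
\[
C(fgf^{-1}) \equiv C(g)+\gamma(f,g)+\gamma(fg,f^{-1})-\gamma(f,f^{-1}) \pmod{\Bbb Z}
\]
is exactly the right target. Your proposed verification via Stokes' theorem and Wess--Zumino integrality is the intended mechanism; it parallels the homomorphism identity $C(g_1g_2)\equiv C(g_1)+C(g_2)+\gamma(g_1,g_2)$ already quoted in the paper. The identification of the quotient as a central $S^1$-extension of $LG$ is likewise correct: surjectivity of $DG\to LG$ uses $\pi_1(G)=0$, the centrality of $\{1\}\times S^1$ uses $\gamma(1,\cdot)=\gamma(\cdot,1)=0$, and the trivial intersection with $\phi(\mathcal G)$ is immediate from $\phi(1)=(1,1)$.

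One small point worth tightening: when you pass from the $DG$-cocycle $\gamma$ to a Lie-algebra cocycle on $L\mathfrak g$, the section $LG\to DG$ is only locally smooth, so the group cocycle on $LG$ is only locally smooth as well; the Lie-algebra computation is unaffected, but you should phrase it as differentiating the pulled-back cocycle near the identity rather than globally. The normalization claim (that the resulting level is one) is precisely the content of the integrality statement for $\tfrac{\theta^2}{48\pi^2}\int_{S^3}\langle dhh^{-1},\tfrac12[dhh^{-1},dhh^{-1}]\rangle$ quoted just before the proposition, so you can cite that directly rather than re-deriving it.
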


So far we have not said anything about the differentiable structure of $LG$
or $\widehat{LG}$. We shall use in $LG$ the topology of uniform convergence of maps
$f:S^1\to G$ and of all their partial derivatives. Let $U$ be an open neighborhood
of 0 in $\mathfrak{g}$ such that the exponential mapping $\mathfrak{g}\to G$ is one-to-one on $U$.
An open neighborhood $U_f$
of an arbitrary point $f\in LG$ consists of all maps of the form $e^{Z}f$, where
$Z$ is any smooth map from $S^1$ to $U$. The mapping $\xi_f:e^{Z}f\mapsto Z$
defines a coordinate system on $U_f$.The topology of the smooth loop algebra $L
\mathfrak{g}$ is defined by the infinite system of seminorms
$$ \Vert Z\Vert_{(n)}=
\sqrt{\int_{S^1}\biggl<\frac{d^n}{d\varphi^n}Z,\frac{d^n}{d\varphi^n}
Z\biggr > }.$$
The coordinate transformation $\xi_f\circ\xi_{f'}^{-1}(Z)=\log(e^Zff'^{-1})$ is
differentiable in the Fr\'echet sense and by definition $LG$ is a \it Fr\'echet
manifold. \rm Starting from the basic central extension, for any integer $k$ we can construct a central extension of $LG$
of level $k$  by replacing $\gamma(f_1,f_2)$ by $k\gamma(f_1,f_2)$ and $C(g)$ by $k C(g)$.

Next we relax the requirement of simply connectedness on $G$ but concentrate on $\Omega G.$
For any element $a\in \pi_1(G)$
choose a based loop $h_a$ in $G$ representing $a.$ We may fix $h_e =1,$ the constant loop at
the identity element in $G$ when $e$ is the identity element in $\pi_1(G).$ 
Elements in the central extension $\widehat{LG}$ are now represented by pairs $(f,\lambda)$,
where $\lambda\in S^1$ and $f:[0,1] \times S^1 \to G$ is a smooth map such that $f(0,\cdot)$
is the loop $h_a$ when $f(1,\cdot)$ has the homotopy type of $h_a,$  and all the derivatives of
$f$ with respect to the first argument vanish at $t=0,1.$

We define again an equivalence relation $(f,\lambda) \sim (f',\lambda')$ when $f'= fg$
and $\lambda'=\lambda e^{2\pi i C(g)}$, where $g(t, \cdot) =1$ for $t=0,1$ and $C(g)$ is defined using any smooth extension of $g$ to the solid torus with boundary $S^1\times S^1$. Here, the first circle is the interval $[0,1]$ with endpoints identified. The normalization of 
$$C= \int_B \omega_3$$ 
comes from the requirement that the pairing  of the closed 3-form $\omega_3$ with any 3-cycle is
an integer, that is, $\omega_3 \in \mathrm{H}^3(G, \Bbb Z).$ The third cohomology is equal to $\Bbb Z$ 
for any simple compact Lie group. Writing $\omega_3$  in the form
$$\omega_3 =  \frac{k}{24 \pi^2}  <dg g^{-1}, [dg^{-1}, dg g^{-1}]>$$
with $<\cdot, \cdot>$ the invariant bilinear form on the Lie algebra of $G$ normalized
such that the length squared of the longest root is equal to 2, the level $k$ has to be
a multiple of the \emph{fundamental level} $k_f$, \cite{FGK, TL}.

The above construction defines a circle bundle over $\Omega G.$ The product is defined as follows.
If $f_a$ has the homotopy type $a$ and $f_b$ the homotopy type $b$ we choose $c$ such that
$h_c$ represents the element $ab$ in $\pi_1(G).$ We choose a homotopy $h(a,b)$ joining 
$h_ah_b$ to $h_c.$ Together with $f_af_b$ we have a homotopy $f_{a,b}$ joining the loop
$(f_af_b)(1,\cdot)$ to the loop $h_c.$ Then we can define
$$ (f_a, \lambda)\cdot (f_b, \mu) = (f_{a,b}, \lambda\mu e^{2\pi i\gamma(f_a, f_b)})$$
and this descends again to the equivalence classes represented by the above pairs. However, for a construction
of a highest weight representation of the full loop group $LG$ more is needed. 
The level has to be a multiple of the \emph{basic level} $k_b$ which in turn is a multiple
of $k_f;$ these levels have been listed in \cite{TL}. Each multiple of the fundamental 
level produces representations of the based loop group $\Omega G.$ The group $G$ of constant
loops acts on the central extension of $\Omega G$ as a group of automorphisms but these
automorphisms can act in the representation space only for certain highest weights.

\begin{example}
For $G= SO(3)$ we have $\pi_1(G) = \Bbb Z_2$ and we have only one nontrivial $h$
with $h^2$ homotopic to the trivial loop. The pull-back of the basic third cohomology class in
the projection $SU(2) \to SO(3)$ is twice the basic class in $\mathrm{H}^3(SU(2), \Bbb Z).$ 
The basic class in $\mathrm{H}^3(SU(2), \Bbb Z)$ corresponds to the level $k=1$ central extension,
so the basic class in $\mathrm{H}^3(SO(3), \Bbb Z)$ defines the level $k=2$ central extension;
the Lie algebras of the central extension of the loop groups $L SU(2)$ and $L SO(3)$
are isomorphic, so the counting of the algebraic level on the Lie algebra is the same in both cases. In this case the basic level is the same as the fundamental level $k_f =2.$ 
\end{example}

 \section{A Fock space construction of gerbes over $G/Z$ } 
 
 Let $G$ be a simply connected compact Lie group and $Z$ a cyclic subgroup of its center.
 Thus the fundamental group of $G/Z$ is $Z.$ Fix an irreducible representation of $G$
 in a complex vector space $\Bbb C^N.$ Then the loop group $LG$ acts by pointwise
 multiplication in the Hilbert space $\mathcal{H}= L^2(S^1, \Bbb C^N).$ Likewise, identifying the
 based loop group $\Omega G$ as the group of contractible based loops $\Omega_0(G/Z)$ in $G/Z$ 
 we have a representation of the latter in $\mathcal{H}.$ The loop group $\Omega (G/Z)$ is identified 
 as the group $\Omega_Z G$ of paths in $G$ starting from the identity in $G$ and with the other
 end point at $Z\subset G.$ This group splits to components $\Omega_z G$ labelled by $z\in Z.$
 The center $Z$ acts in $\Bbb C^N$ as multiplication by roots of unity.
 
 A gerbe over a compact simple Lie group comes always from an infinite dimensional projective
 unitary representation
 of the based loop group $\Omega G.$ This follows from the fact that we have the principal
 fibration $PG\to  PG/\Omega G = G$ and the pull-back gerbe on
 $PG$ is trivial since that group is contractible.  So a gerbe over $G$ comes
 from a homomorphism from $\Omega G$ to the projective unitary group of a complex Hilbert
 space.  The remaining question is what is the relation of the Dixmier-Douady class 
 of the gerbe, an element in $\mathrm{H}^3(G, \Bbb Z) = \Bbb Z$ to the level $k\in \Bbb Z$ of
 the projective representation of $\Omega G.$ These are equal when $G$ is simply connected
 but they might differ when $G$ is not simply connected, see the Example A.2 in the end of the last section.  
 
 The standard Fock space construction of representations of a central extension of the loop
 group uses the embedding $LG \subset U_{res}(\mathcal{H}_+ \oplus \mathcal{H}_-)$, where 
 $\mathcal{H}=\mathcal{H}_+ \oplus \mathcal{H}_-$
 is the polarization of $\mathcal{H}$ to non negative and negative Fourier modes, \cite{PrSe86}.
 Let us denote by $\epsilon$ the sign operator in $\mathcal{H},$ restricted to $\mathcal{H}_{\pm}$
 it gives the value $\pm 1.$ 
 The restricted unitary group $U_{res}$ consists of unitaries such that the off-diagonal 
 blocks are Hilbert-Schmidt operators.  According to \cite{SS} a unitary operator $g$ in the
 "one particle space" $\mathcal{H}$ can be lifted to a unitary operator in the Fock space $\mathcal{F}$
 with a vacuum corresponding to the energy polarization $\mathcal{H}=\mathcal{H}_+\oplus \mathcal{H}_-$ if and only if 
 $g$ belongs to $U_{res}.$ By a simple Fourier analysis, this is the case when $g$ is an 
 element of $LG$ acting as a multiplication operator in $\mathcal{H}.$ However, it is not difficult
 to see that the Hilbert-Schmidt condition is not satisfied when $g\in \Omega_z G$ when $z$ is
 not the neutral element in $Z.$ 
 
 On the level of Lie algebras, the operators $X,Y$ which satisfy the Hilbert-Schmidt condition above,
 lead to a central extension of the Lie algebra in the Fock space described by the 2-cocycle \cite{Lu}
 $$c(X,Y) = \frac14 \text{tr}\, \epsilon[\epsilon, X], [\epsilon,Y],$$
 which when restricted to the loop algebra gives its standard central extension \cite{PrSe86}.
 
 Let next $\mathcal{H}_Z= \oplus \mathcal{H}_z$, where $\mathcal{H}_z$ is the space $C^{\infty}_z([0,2\pi],\Bbb C^N)$ of
 smooth functions $\psi$ with the boundary condition $\psi(2\pi) = z\cdot \psi(0).$ The
 smooth subspace is in the domain of $D= -i \frac{d}{d\phi}$ and we can define a polarization
 $\mathcal{H}_z = \mathcal{H}_{z+} \oplus \mathcal{H}_{z-}$ to non negative and negative modes for the operator $D.$ 
 The based loop group $\Omega G$ acts unitarily in each $\mathcal{H}_z.$ The elements of $\Omega_{z'}G$
 map $\mathcal{H}_z$ onto $\mathcal{H}_{zz'}.$ 
 
  \begin{lemma} For each $f\in \Omega_z G$ the operator $[\epsilon, m(f)]$ is Hilbert-Schmidt, where 
 $m(f)$ is the  multiplication operator by $f$  in $\mathcal{H}.$  \end{lemma}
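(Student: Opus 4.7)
The plan is to reduce to two independently tractable sub-claims by factoring $f$ through a fixed reference path. Choose once and for all a smooth path $h_z(\phi)=\exp(\phi Y)$, where $Y$ lies in a Cartan subalgebra of $\mathfrak{g}$ with $\exp(2\pi Y)=z$; such $Y$ exists because $z$ is central. For arbitrary $f\in\Omega_zG$, set $f_0:=f\cdot h_z^{-1}$, which is then a smooth \emph{periodic} map $S^1\to G$, i.e.\ $f_0\in\Omega G=\Omega_1 G$. Since $m(f)=m(f_0)\,m(h_z)$, one has
\[
[\epsilon,m(f)]=[\epsilon,m(f_0)]\,m(h_z)+m(f_0)\,[\epsilon,m(h_z)],
\]
and, as $m(f_0)$ and $m(h_z)$ are unitary, it suffices to show that both $[\epsilon,m(f_0)]$ and $[\epsilon,m(h_z)]$ are Hilbert--Schmidt.

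For the first summand, $m(f_0)$ preserves the decomposition $\mathcal{H}_Z=\bigoplus_{z'}\mathcal{H}_{z'}$, and on each summand the Fourier expansion provides an orthonormal basis $e^{i(n+\alpha_{z'})\phi}e_j$ of eigenfunctions of $D$. In this basis the matrix elements of $m(f_0)$ are (essentially) the Fourier coefficients of the smooth function $f_0$, and hence decay faster than any polynomial in $|n-m|$. The commutator $[\epsilon,m(f_0)]$ retains only matrix elements between modes of opposite sign, so its Hilbert--Schmidt norm is dominated by a convergent sum in those Fourier coefficients; this is the classical Pressley--Segal estimate, which carries over verbatim to each twisted sector.

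For the second summand I exploit the special form of $h_z$. Since $h_z^{-1}\partial_\phi h_z=Y$ is a constant matrix commuting with $h_z$, a direct calculation yields the intertwining identity $D_{zz'}\,m(h_z)=m(h_z)\bigl(D_{z'}-iY\bigr)$ on $\mathcal{H}_{z'}$, with $-iY$ bounded self-adjoint on the fibre $\mathbb{C}^N$. Consequently
\[
[\epsilon,m(h_z)]=m(h_z)\bigl(\text{sign}(D_{z'}-iY)-\text{sign}(D_{z'})\bigr).
\]
Diagonalising $Y$ together with the Fourier basis simultaneously diagonalises $D_{z'}$ and $D_{z'}-iY$, and the signs of their eigenvalues can disagree only in the finite spectral window $|D_{z'}|\le\|Y\|$, which meets at most finitely many eigenspaces. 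Hence this difference is of finite rank, in particular Hilbert--Schmidt.

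The main anticipated obstacle is cosmetic: the ambiguity of $\text{sign}$ at the zero eigenvalues of $D_{z'}$ (a kernel of dimension at most $N$) and the need to fix a coherent convention across all the twisted sectors. Any such convention perturbs the commutator only by a finite-rank operator and therefore does not affect the Hilbert--Schmidt conclusion. Substituting both estimates into the displayed decomposition of $[\epsilon,m(f)]$ proves the lemma.
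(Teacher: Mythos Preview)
Your proof is correct and follows the same two-step strategy as the paper's: handle the periodic factor via the Pressley--Segal Fourier estimate on each twisted sector, and handle the reference path to $z$ by observing that it (nearly) preserves the polarization. Your intertwining argument $D\,m(h_z)=m(h_z)(D-iY)$ makes precise what the paper states more tersely---that the reference operator maps positive Fourier modes in $\mathcal{H}_{z'}$ to positive modes in $\mathcal{H}_{zz'}$, so that only the finitely many modes with $|D_{z'}|\le\|Y\|$ can contribute to the commutator.
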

 \begin{proof} The eigenvectors of the free Dirac operator $D$ in $\mathcal{H}_0$ are the Fourier modes 
 $e^{2\pi i nx} v$ with $v\in \Bbb C^N$ and $n\in \Bbb Z.$ We select a basis in $\Bbb C^N$ consisting of eigenvectors $v$ of $Z.$ The eigenvectors of $D$  in $\mathcal{H}_z$ are the functions $e^{2\pi i (n+\frac{1}{p}) x}v$, where $e^{2\pi i/p}$ is the eigenvalue of $z$ corresponding to the eigenvector $v$
 of $Z.$ (Since $G$ is compact simply connected the order of each element in $Z$ is finite.)
 The off-diagonal blocks with respect to the polarization by $\epsilon$ of $m(f)$ for $f\in \Omega G$ 
are estimated as in the proof of Proposition 6.3.1 in \cite{PrSe86} , their Hilbert-Schmidt norm is proportional to $\sum |n| |a_n|^2$,
where $f(x)= \sum f_n(x) =  \sum a_n e^{2\pi i nx}$ with constant $N\times N$ matrices $a_n$. This norm is finite even for $1/2$ differentiable
functions (in the Sobolev sense).  The multiplication operator $m(z)$ maps isometrically the
positive Fourier modes in $\mathcal{H}_{z'}$ to the Fourier  in $\mathcal{H}_{zz'}.$ 
It follows that $m(z)$ trivially satisfies the Hilbert-Schmidt condition on off-diagonal
blocks. \end{proof} 

For a simple simply connected compact Lie group $G$ its center is cyclic except in the cases
$G = D_{2\ell}$ for $\ell=2,3, \dots$  when the center is $\mathbb{Z}_2 \times \mathbb{Z}_2, $  see e.g. \cite[Theorem 15.23.]{SW}. An immediate conclusion of the above Lemma is the following statement:

\begin{proposition} Let $G$ be a compact simply connected Lie group and $Z$ a cyclic subgroup of the
center of order  $p.$ The Fock space construction of the representations of the 
central extension of $\Omega(G/Z)$ gives a representation of level $pk$, where $k$ is the level
of the representation of $\Omega_0 G$ arising from the canonical quantization in the sector $\mathcal{H}_0.$
The subspace of $Z$ invariant vectors in $\mathcal{F}$ carries then a representation of the full loop group $L(G/Z).$ Since $Z$ is cyclic its action in $\Bbb C^N$ is given by 
powers of a complex phase $a$ with $a^p =1.$ Here the action of $Z$ in $\mathcal{F}$ is defined by fixing the action of the generator $z$ of the center on the Fock vacuum to be trivial and the action on $n$ particle states as
a multiplication by $a^n.$ 
 \end{proposition}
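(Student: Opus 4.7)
The plan is to compute the Lie-algebra two-cocycle produced by the Shale--Stinespring lift on $\mathcal{F}$, check that its restriction to $L\mathfrak{g}$ equals $pk$ times the basic cocycle, and then exhibit the $Z$-invariant subspace as the carrier of a genuine projective representation of $L(G/Z)$. I would proceed in four steps.

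First, set up the Fock space. The preceding lemma guarantees $[\epsilon,m(f)]$ is Hilbert--Schmidt for every $f\in\Omega_Z G$, so by Shale--Stinespring each $m(f)$ lifts to a unitary $\hat m(f)$ on the fermionic Fock space $\mathcal{F}$ built from the polarized $\mathcal{H}_Z=\bigoplus_{z\in Z}\mathcal{H}_z$. Infinitesimally the lift is governed by the cocycle
\[c(X,Y)=\tfrac14\tr\bigl(\epsilon[\epsilon,X][\epsilon,Y]\bigr)\]
recalled in the text from \cite{Lu}.

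Second, compute $c$ on the connected component $L\mathfrak{g}$, which acts diagonally on $\mathcal{H}_Z$. The trace then splits as $c=\sum_{z\in Z}c_z$, with $c_z$ the cocycle associated to the polarized sector $\mathcal{H}_z=\mathcal{H}_{z+}\oplus\mathcal{H}_{z-}$. The Fourier spectrum of $D$ on $\mathcal{H}_z$ differs from that on $\mathcal{H}_0$ only by a constant shift on each $Z$-isotypic component of $\mathbb{C}^N$; a loop algebra element $X$ couples a Fourier mode only to modes differing by finitely many units, so the matrix coefficients of $[\epsilon,X]$ in the natural bases of $\mathcal{H}_0$ and $\mathcal{H}_z$ coincide up to a bijection of indices that preserves the trace. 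Hence $c_z=c_0$ for every $z$, and since $|Z|=p$ we obtain $c=p\,c_0$, i.e.\ the restriction of the cocycle to $L\mathfrak{g}$ is $pk$ times the basic bilinear form---by definition this means the central extension has level $pk$.

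Third, promote to the full $\Omega(G/Z)=\bigsqcup_{z\in Z}\Omega_z G$. The lemma provides Shale--Stinespring lifts $\hat m(f)$ of $m(f)$ for each $f\in\Omega_z G$, well-defined up to a scalar in $S^1$; associativity of these lifts forces the scalar phases to obey the cocycle identity of the $U(1)$-central extension of level $pk$ computed in step two, giving a projective representation of $\Omega(G/Z)$ at that level. Finally, the constant loops $G$ act on each $\mathcal{H}_z$ by pointwise multiplication, hence on $\mathcal{F}$ by second quantization. The prescription for $Z$ in the proposition---trivial on the vacuum and multiplication by $a^n$ on $n$-particle states---is consistent with this second quantization because elements of $Z$ act by the scalar $a$ on $\mathbb{C}^N$; centrality of $Z$ makes the $Z$-action commute with every $\hat m(f)$. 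Therefore $G$ preserves the subspace $\mathcal{F}^Z$ of $Z$-invariant vectors, its action there factors through $G/Z$, and combined with the projective action of $\Omega(G/Z)$ we obtain the claimed projective representation of $L(G/Z)=\Omega(G/Z)\rtimes (G/Z)$ on $\mathcal{F}^Z$ at level $pk$.

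The main obstacle is step two, the sector-by-sector equality $c_z=c_0$. One has to show that the fractional shift $1/p$ in the Fourier labels does not produce an independent anomaly when one passes from the conditionally convergent trace $\tfrac14\tr(\epsilon[\epsilon,X][\epsilon,Y])$ to its sector decomposition. Concretely, one must argue that $[\epsilon,\cdot]$ is of finite rank on each Fourier-mode pair, so that the shift of labels merely permutes the index set of a trace-class operator and leaves the trace unchanged. Once this regularization point is handled, steps one, three, and four are essentially formal consequences of the lemma and of the standard fact that the Fock space of a direct sum is the tensor product of the individual Fock spaces, on which cocycles add.
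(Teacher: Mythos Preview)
The paper offers no proof of this proposition beyond declaring it ``an immediate conclusion of the above Lemma,'' so your proposal is not so much to be compared with the paper's argument as to supply one. Your outline is correct and is exactly the reasoning the paper leaves implicit: the $p$ sectors $\mathcal{H}_z$ are acted on diagonally by $L\mathfrak{g}$, each contributes the same Lundberg cocycle as $\mathcal{H}_0$, and the levels add. Your identification of the delicate point---the equality $c_z=c_0$---is right, and your resolution can be sharpened: because the representation of $G$ on $\mathbb{C}^N$ is irreducible, $Z$ acts by a single scalar $a$, so the Fourier spectrum in $\mathcal{H}_{z^j}$ is the uniform shift $\{n+j/p\}_{n\in\mathbb{Z}}$; for $j/p\in[0,1)$ the condition $n+j/p\ge 0$ is equivalent to $n\ge 0$, hence the sign operator $\epsilon$ has \emph{literally the same matrix} in the Fourier bases of $\mathcal{H}_0$ and $\mathcal{H}_{z^j}$, and so does any multiplication operator $X\in L\mathfrak{g}$. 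No regularization subtlety arises: $c_z=c_0$ on the nose, not merely up to coboundary. Steps three and four are, as you say, formal consequences of Shale--Stinespring and the centrality of $Z$.
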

 
\begin{remark}
The minimal level of highest weight representations of $L(G/Z),$ called
the basic level and denoted by $k_b$ in \cite{TL} can be lower than $p.$ For example, in the
case of $G=SU(N)$ and $Z=\mathbb{Z}_r=\mathbb{Z}/r\mathbb{Z}$ for some $r$ dividing $N$ the basic level is $p=r$ if $r$ does
not divide $N/r,$ otherwise it is $N/r^2.$ In particular, for $Z=\mathbb{Z}_N$ the basic level is 
always $N.$ The basic levels for all cases are listed in \cite[Proposition 3.6.2]{TL}.
\end{remark}

\begin{remark}
In the case of the based loop group $\Omega(G/Z)$ the minimal level (called
the fundamental level and denoted by $k_f$ is either 1 or 2).  The reason is that in this case
there is a less stringent condition for the highest weights, they do not need to define
representations of $G/Z\subset L(G/Z).$ The representations are certain direct sums of
irreducible highest weights representations of $LG$ with an action of $Z$  permuting
the irreducible components.  When $G$ is simple the fundamental level depends only on the image 
$\pi(1)$ of the generator  $1\in \mathrm{H}_3(G)= \Bbb Z$ in $\mathrm{H}_3(G/Z).$ The element $\pi(1)$ is $k_f$
times a generator of $\mathrm{H}_3(G/Z)$ as follows from the homological argument in the
Appendix 1  \cite{FGK}, compatible with the calculations in the Section 6 in the same reference
and the listings in Proposition 3.6.2 in \cite{TL}.
\end{remark}

 \section{Extensions of smooth crossed modules}\label{section extension smooth crossed module}

Let $(\alpha,\widehat{S})$ be smooth crossed module (cf. Definition \ref{crossed modules}). In this section we show that the cohomology group $\text{H}^2_{ss}(G/N,Z)_T$ ($=\text{H}^2_{s}(G/N,Z)_T$ if $G/N$ is connected) operates in a natural way as a simply transitive transformation group on the set of equivalence classes of extensions of the smooth crossed module $(\alpha,\widehat{S})$ (cf. \cite[Theorem III.8]{Ne06}). For the sake of a concise presentation we recall the following notation: Given a map $\sigma:G\rightarrow N$ between two groups $G$ and $N$, we write $\delta_\sigma:G\times G\rightarrow N$ for the induced map defined by $$\delta_\sigma(g,g'):=\sigma(g)\sigma(g')\sigma(gg')^{-1}.$$ Moreover, we recall that $\sigma\in C^1_s(G,N)$ implies $\delta_\sigma\in C^2_{ss}(G,N)$ (cf. Section \ref{pre and not}).

\begin{definition}~\label{continuation of lie groups}
\begin{enumerate}
\item
Let $\alpha:\widehat{N}\rightarrow G$ be a morphism of Lie groups such that $N:=\im(\alpha)$ is a split normal subgroup of $G$. Further, let $q:G\rightarrow G/N$ be the corresponding quotient map. A $\emph{continuation}$ of $\alpha:\widehat{N}\rightarrow G$ is a pair $(\widehat{\alpha},\widehat{G})$, consisting of a Lie group extension $\widehat{G}$ of $G/N$ by $\widehat{N}$ and a morphism  $\widehat{\alpha}:\widehat{G}\rightarrow G$ of Lie groups such that the following diagram commutes
\[\xymatrix{ 1 \ar[r] & \widehat{N} \ar[r] \ar[d]^{\alpha} & \widehat{G}\ar[r]\ar[d]^{\widehat{\alpha}}& G/N\ar[r] \ar@{=}[d] & 1\\
1\ar[r]& N\ar[r] & G\ar[r]^{q} & G/N \ar[r] & 1.}
\]It is not hard to check that $\ker(\widehat{\alpha})=Z$ and that $\widehat{\alpha}$ is surjective.
\item
Two continuations $(\widehat{G}_1,\widehat{\alpha}_1)$ and $(\widehat{G}_2,\widehat{\alpha}_2)$ of $\alpha:\widehat{N}\rightarrow G$ are called \emph{equivalent} if there exists a Lie group isomorphism $\varphi:\widehat{G}_2\rightarrow\widehat{G}_1$ satisfying
\begin{align}
\varphi_{\mid_{\widehat{N}}}=\id_{\widehat{N}} \quad \text{and} \quad \widehat{\alpha}_2=\widehat{\alpha}_1\circ\varphi.\notag
\end{align} 
\item
An $\emph{extension}$ of a smooth crossed module $(\alpha,\widehat{S})$ is a continuation $(\widehat{G},\widehat{\alpha})$ of $\alpha:\widehat{N}\rightarrow \widehat{N}$ satisfying $\widehat{S}\circ\widehat{\alpha}=C_{\widehat{G}}:\widehat{G}\rightarrow\Aut(G)$.
\end{enumerate}
\end{definition}

\begin{remark}\label{remark on equivalence}
Note that if $(\widehat{G}_1,\widehat{\alpha}_1)$ and $(\widehat{G}_2,\widehat{\alpha}_2)$ are equivalent extensions of a smooth crossed module $(\alpha,\widehat{S})$, then $\widehat{G}_1$ and $\widehat{G}_2$ are also equivalent as Lie group extensions.
\end{remark}


We continue with a Lie group extension $q:\widehat{G}\rightarrow G$ of the Lie group $G$ by the Lie group $N$ and a morphism $\alpha:N\rightarrow H$ of Lie groups. Furthermore, we assume that $\ker(\alpha)$ is a split Lie subgroup of $N$ and that $\im(\alpha)$ is a split Lie subgroup of $H$ for which $\alpha$ induces an isomorphism $N/\ker(\alpha)\rightarrow\im(\alpha)$. In this case a few moments thought shows that the map
\begin{align*}
\Phi_{\alpha}:\Aut(N,\ker(\alpha))\rightarrow\Aut(\im(\alpha)), \quad \Phi_{\alpha}(\varphi)(\alpha(n)):=\alpha(\varphi(n)),
\end{align*}
where the set on the left-hand side is the group of all Lie group automorphisms of $N$ preserving the split Lie subgroup $\ker(\alpha)$, is a well-defined group homomorphism.


\begin{lemma}\label{technical lemma 2}
Let $\mu:G\rightarrow\widehat{G}$ be a normalized locally smooth section of $q$ and $\nu\in C^1_s(G,H)$. Furthermore, let $C_{N}$ denote the conjugation action of $\widehat{G}$ restricted to the normal subgroup $N$. Then the following statements are equivalent:
\begin{itemize}
\item[(a)]
There exists a unique morphism of Lie groups $\widehat{\alpha}:\widehat{G}\rightarrow H$ satisfying the equations $\widehat{\alpha}_{\mid_N}=\alpha$ and $\widehat{\alpha}\circ\mu=\nu$.
\item[(b)]
The group $\im(\alpha)$ is a normal subgroup of the group generated by itself and $\nu(G)$, and we have $\alpha\circ\delta_{\mu}=\delta_{\nu}$ and $\Phi_{\alpha}(C_N\circ\mu)=C_{\emph{\im}(\alpha)}\circ\nu$. 
\end{itemize}
\end{lemma}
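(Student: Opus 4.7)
The strategy is to exploit the fact that the locally smooth section $\mu$ provides a set-theoretic decomposition $\widehat{G}=N\cdot\mu(G)$, with every element $\widehat{g}\in\widehat{G}$ uniquely representable as $\widehat{g}=n\mu(g)$ for $n\in N$ and $g=q(\widehat{g})\in G$. The only natural candidate for the desired extension is therefore the map
\[
\widehat{\alpha}(n\mu(g)):=\alpha(n)\nu(g),
\]
and both directions of the equivalence reduce to checking when this formula is compatible with the group laws and with the smooth structures involved.

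For the implication (a)$\Rightarrow$(b), I would simply evaluate a given $\widehat{\alpha}$ on the two fundamental relations governing the extension structure of $\widehat{G}$: applying $\widehat{\alpha}$ to $\mu(g)\mu(g')=\delta_{\mu}(g,g')\mu(gg')$ yields $\alpha\circ\delta_\mu=\delta_\nu$ at once, while applying $\widehat{\alpha}$ to $\mu(g)n\mu(g)^{-1}\in N$ (which lies in $N$ by normality) gives $\alpha\bigl((C_N\circ\mu)(g)(n)\bigr)=\nu(g)\alpha(n)\nu(g)^{-1}$. This last identity forces $\nu(g)\alpha(n)\nu(g)^{-1}\in\im(\alpha)$, so $\im(\alpha)$ is normalized by $\nu(G)$ (hence normal in the subgroup of $H$ generated by $\im(\alpha)\cup\nu(G)$), and it translates directly into $\Phi_\alpha(C_N\circ\mu)=C_{\im(\alpha)}\circ\nu$ under the defining formula for $\Phi_\alpha$.

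For the converse (b)$\Rightarrow$(a), I define $\widehat{\alpha}$ by the formula above; the decomposition $\widehat{g}=n\mu(g)$ is unique, so $\widehat{\alpha}$ is well-defined as a set-theoretic map. To verify the homomorphism property, I would expand
\[
(n_1\mu(g_1))(n_2\mu(g_2))=n_1\bigl(\mu(g_1)n_2\mu(g_1)^{-1}\bigr)\delta_\mu(g_1,g_2)\mu(g_1g_2),
\]
apply $\widehat{\alpha}$, and use the two algebraic identities from (b): $\alpha\circ\delta_\mu=\delta_\nu$ handles the cocycle term, while $\Phi_\alpha(C_N\circ\mu)=C_{\im(\alpha)}\circ\nu$ handles the conjugation term. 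The outcome collapses to $\alpha(n_1)\nu(g_1)\alpha(n_2)\nu(g_2)=\widehat{\alpha}(n_1\mu(g_1))\widehat{\alpha}(n_2\mu(g_2))$, exactly as needed. The conditions $\widehat{\alpha}_{\mid N}=\alpha$ and $\widehat{\alpha}\circ\mu=\nu$ are built into the definition, and uniqueness follows because $\widehat{G}$ is generated as a group by $N\cup\mu(G)$.

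The step I expect to require the most care is the passage from set-theoretic homomorphism to Lie group morphism. Since $\mu$ and $\nu$ are only locally smooth near the identity, $\widehat{\alpha}$ is only a priori smooth on an identity neighbourhood of $\widehat{G}$ (using that $\widehat{G}\cong N\times G$ as a smooth manifold via $(n,g)\mapsto n\mu(g)$ on a suitable neighbourhood, and that $\alpha$ is globally smooth). Having established smoothness on an identity neighbourhood and the group homomorphism property, the standard fact that a group homomorphism between Lie groups which is smooth on an identity neighbourhood is globally smooth then upgrades $\widehat{\alpha}$ to a morphism of Lie groups. This completes the equivalence.
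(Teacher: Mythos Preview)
Your proof is correct and follows essentially the same approach as the paper: both exploit the unique decomposition $\widehat{g}=n\mu(g)$, define $\widehat{\alpha}(n\mu(g)):=\alpha(n)\nu(g)$, verify the homomorphism property via the same expansion of $(n\mu(g))(n'\mu(g'))$ using the two identities in (b), and upgrade local smoothness near the identity to a global Lie group morphism. If anything, you spell out the direction (a)$\Rightarrow$(b) and the smoothness argument in more detail than the paper, which dismisses the former as ``a simple calculation''.
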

\begin{proof}
We first recall that every element of $\widehat{G}$ can be uniquely written as $n\mu(g)$ for $n\in N$ and $g\in G$. Then we claim that the map
\[\widehat{\alpha}:\widehat{G}\rightarrow H, \quad \widehat{\alpha}(n\mu(g)):=\alpha(n)\cdot\nu(g)
\]satisfies the requirements of part (a). In fact, 
\begin{align}
\widehat{\alpha}((n\mu(g))\cdot(n'\mu(g')))
&=\widehat{\alpha}(n\cdot (C_N(\mu(g)))n'\cdot\delta_{\mu}(g,g')\cdot\mu(gg'))\notag\\
&=\alpha(n)\cdot(\Phi_{\alpha}(C_N(\mu(g)))(\alpha(n'))\cdot\alpha(\delta_{\mu}(g,g'))\cdot\nu(gg')\notag\\
&=\alpha(n)\cdot( C_{\im(\alpha)}\circ\nu(g))(\alpha(n'))\cdot\delta_{\nu}(g,g')\cdot\nu(gg')\notag\\
&=\alpha(n)\cdot\nu(g)\cdot\alpha(n')\cdot\nu(g')=\widehat{\alpha}(n\mu(g))\cdot\widehat{\alpha}(n'\mu(g')),\notag
\end{align}
shows that $\widehat{\alpha}$ is a homomorphism. That it is actually a morphism of Lie groups follows from the fact that it is smooth in an identity neighbourhood.
The other direction follows from a simple calculation.
\end{proof}

\begin{definition}\label{smooth structural cocycle}
Let $(\alpha,\widehat{S})$ be a smooth crossed module. The elements of the set
\[Z^2_{ss}(\alpha,\widehat{S}):=\{(f,\sigma)\in C^2_{ss}(G/N,\widehat{N})\times C^1_s(G/N,G):\alpha\circ f=\delta_{\sigma},\,d_{\widehat{S}\circ\sigma}f=1_{\widehat{N}}\}
\]are called \emph{smooth structural cocycles} of $(\alpha,\widehat{S})$ (cf. Section \ref{pre and not} for the definition of the differential).
\end{definition}

It is easily checked that each smooth structural cocycles $(f,\sigma)$ gives rise to a smooth factor system of the form $(\widehat{S}\circ\sigma,f)$. In the following we write $\widehat{N}\times_{(f,\sigma)} G/N$ for the corresponding Lie group extension. 


\begin{proposition}\label{smooth structural cocycle vs. continuation}
Let $(\alpha,\widehat{S})$ be a smooth crossed module. Then the following assertions hold:
\begin{itemize}
\item[(a)]
Each smooth structural cocycle $(f,\sigma)$ gives rise to an extension $$\widehat{G}:=(\widehat{N}\times_{(f,\sigma)} G/N,\widehat{\alpha}_{(f,\sigma)})$$ of $(\alpha,\widehat{S})$ with $\widehat{\alpha}_{(f,\sigma)}:\widehat{N}\times_{(f,\sigma)} G/N\rightarrow G$, $(n,g)\mapsto\alpha(n)\sigma(g)$.
\item[(b)]
Each extension $(\widehat{G},\widehat{\alpha})$ of $(\alpha,\widehat{S})$ gives rise to a smooth structural cocycle by choosing a locally smooth normalized section $\mu:G/N\rightarrow\widehat{G}$ and defining $(f,\sigma):=(\delta_{\mu},\widehat{\alpha}\circ\mu)$. In particular, the extension $(\widehat{G},\widehat{\alpha})$ is equivalent to $$(\widehat{N}\times_{(f,\sigma)} G/N,\widehat{\alpha}_{(f,\sigma)}).$$
\end{itemize}
\end{proposition}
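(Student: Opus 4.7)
The plan is to handle (a) and (b) in sequence, using Lemma \ref{technical lemma 2} to avoid reproving basic homomorphism identities from scratch, and relying on the standard theory of smooth factor systems recalled in Section \ref{pre and not} to package the Lie group structure on $\widehat{N}\times_{(f,\sigma)} G/N$.

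For (a), the first step is to observe that any smooth structural cocycle $(f,\sigma)$ gives a smooth factor system in the sense of Section \ref{pre and not}, namely $(\widehat{S}\circ\sigma, f)\in Z^2_{ss}(G/N,\widehat{N})$. The module-compatibility $\delta_{\widehat{S}\circ\sigma}=C_{\widehat{N}}\circ f$ is immediate from $\alpha\circ f=\delta_{\sigma}$ combined with (CM1) and (CM2): since $\widehat{S}(\sigma(g))\widehat{S}(\sigma(g'))\widehat{S}(\sigma(gg'))^{-1}=\widehat{S}(\delta_{\sigma}(g,g'))=\widehat{S}(\alpha(f(g,g')))=C_{\widehat{N}}(f(g,g'))$. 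The cocycle identity $d_{\widehat{S}\circ\sigma}f=1$ is given by hypothesis, so \cite[Proposition 2.8]{Ne06} furnishes the Lie group $\widehat{N}\times_{(f,\sigma)} G/N$. Next I would define $\widehat{\alpha}_{(f,\sigma)}(n,g):=\alpha(n)\sigma(g)$ and invoke Lemma \ref{technical lemma 2} with $\mu=(1_{\widehat{N}},\cdot)$ and $\nu=\sigma$: condition (b) of that lemma reduces, in this setting, to the structural-cocycle identities plus (CM1), so Lemma \ref{technical lemma 2} delivers both the homomorphism property and smoothness of $\widehat{\alpha}_{(f,\sigma)}$. Finally, to confirm that this is an extension of the crossed module and not merely a continuation, I need to verify $\widehat{S}\circ\widehat{\alpha}_{(f,\sigma)}=C_{\widehat{G}}$. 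Writing an arbitrary element of $\widehat{G}$ as $(n',1)\cdot(1,g')$, the formula splits into two pieces: the action of $(n',1)$ by conjugation agrees with $C_{\widehat{N}}(n')=\widehat{S}(\alpha(n'))$ by (CM2), and a short computation in $\widehat{N}\times_{(f,\sigma)}G/N$ shows that $(1,g')$ conjugates $(n,1)$ to $(\widehat{S}(\sigma(g'))(n),1)=(\widehat{S}(\widehat{\alpha}_{(f,\sigma)}(1,g'))(n),1)$.

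For (b), the point of departure is a normalized locally smooth section $\mu:G/N\to\widehat{G}$ (which exists because the extension $\widehat{G}\to G/N$ admits such a section by definition of a Lie group extension). Setting $\sigma:=\widehat{\alpha}\circ\mu\in C^1_s(G/N,G)$ and $f:=\delta_{\mu}\in C^2_{ss}(G/N,\widehat{N})$ (the image of $\delta_{\mu}$ lands in $\widehat{N}=\ker(\widehat{G}\to G/N)$), the identity $\alpha\circ f=\delta_{\sigma}$ follows since $\widehat{\alpha}$ is a group homomorphism restricting to $\alpha$ on $\widehat{N}$. The cocycle condition $d_{\widehat{S}\circ\sigma}f=1$ is then the classical associativity check for a sectioned extension: expanding $\mu(g)\mu(g')\mu(g'')$ in two ways and moving $\delta_{\mu}(g',g'')$ across $\mu(g)$ via the conjugation action of $\widehat{G}$ on $\widehat{N}$ (which by hypothesis is $\widehat{S}\circ\widehat{\alpha}=\widehat{S}\circ\sigma$ on the $\mu$-image) gives exactly $d_{\widehat{S}\circ\sigma}\delta_{\mu}=1$. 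Hence $(f,\sigma)\in Z^2_{ss}(\alpha,\widehat{S})$.

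It remains to construct the equivalence of extensions. The obvious candidate is
\[
\varphi:\widehat{N}\times_{(f,\sigma)}G/N\longrightarrow\widehat{G},\quad (n,g)\mapsto n\,\mu(g).
\]
This is a set-theoretic bijection because every element of $\widehat{G}$ has a unique expression $n\mu(g)$ with $n\in\widehat{N}$, and it is a group homomorphism precisely because the twisted multiplication on $\widehat{N}\times_{(f,\sigma)}G/N$ was defined so that this formula matches the product $(n\mu(g))(n'\mu(g'))=n\bigl(\mu(g)n'\mu(g)^{-1}\bigr)\bigl(\mu(g)\mu(g')\mu(gg')^{-1}\bigr)\mu(gg')$. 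Smoothness of $\varphi$ in an identity neighbourhood follows from local smoothness of $\mu$, which upgrades to a Lie group isomorphism via \cite[Proposition 2.8]{Ne06}. The equalities $\varphi|_{\widehat{N}}=\id_{\widehat{N}}$ and $\widehat{\alpha}\circ\varphi=\widehat{\alpha}_{(f,\sigma)}$ are immediate from the definitions, so $\varphi$ is an equivalence of extensions of $(\alpha,\widehat{S})$.

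The main technical obstacle I anticipate is not any single computation but ensuring that the locally-smooth bookkeeping is coherent: in particular, that $f=\delta_{\mu}$ genuinely lies in $C^2_{ss}$ (and not merely $C^2_s$), which is needed for the factor-system machinery of Section \ref{pre and not} to apply. This is where the ``ss'' refinement in Definition \ref{smooth structural cocycle} pays off, and once one checks that $\delta_{\mu}$ obtained from a normalized locally smooth section of a Lie group extension automatically belongs to $C^2_{ss}$ (by expressing $\omega_g$ in terms of the inner automorphism of $\widehat{G}$ induced by $\mu(g)$), all the remaining verifications reduce to routine manipulations with the factor system $(\widehat{S}\circ\sigma,f)$.
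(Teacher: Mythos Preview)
Your proposal is correct and follows essentially the same route as the paper: for (a) you verify that $(\widehat{S}\circ\sigma,f)$ is a smooth factor system, invoke Lemma \ref{technical lemma 2} with the canonical section $\mu(g)=(1_{\widehat{N}},g)$ and $\nu=\sigma$ to produce $\widehat{\alpha}_{(f,\sigma)}$, and then check $\widehat{S}\circ\widehat{\alpha}_{(f,\sigma)}=C_{\widehat{G}}$ separately on $\widehat{N}$ and on $\mu(G/N)$; for (b) you unpack the factor-system equivalence $\widehat{G}\cong\widehat{N}\times_{(C_{\widehat{G}}\circ\mu,\delta_\mu)}G/N$ and identify $C_{\widehat{G}}\circ\mu$ with $\widehat{S}\circ\sigma$ via the crossed-module axiom $\widehat{S}\circ\widehat{\alpha}=C_{\widehat{G}}$. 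Your remark that $\delta_\mu\in C^2_{ss}$ is handled in the paper by the observation in Section \ref{pre and not} that $\sigma\in C^1_s$ automatically yields $\delta_\sigma\in C^2_{ss}$, so no extra work is needed there.
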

\begin{proof}
(a) Let $(f,\sigma)$ be a smooth structural cocycle and $\mu:G/N\rightarrow\widehat{G}$, $g\mapsto(1_{\widehat{N}},g)$ the canonical section of the corresponding Lie group extension. Then it is easily checked that $(C_{\widehat{G}}\circ\mu,\delta_{\mu})=(\widehat{S}\circ\sigma,f)$. Moreover, a few moments thought shows that 
\begin{align*}
\alpha\circ\delta_{\mu}=\alpha\circ f=\delta_{\sigma} \quad \text{and} \quad \alpha\circ\delta_{\mu}=\alpha\circ f=\delta_{\sigma}.
\end{align*}
 Hence, it follows from Lemma \ref{technical lemma 2} that $\widehat{\alpha}_{(f,\sigma)}$ is the unique Lie group morphism satisfying $\widehat{\alpha}_{(f,\sigma)\mid \widehat{N}}=\alpha$ and $\widehat{\alpha}_{(f,\sigma)}\circ\mu=\sigma $ which in turn shows that 
 \begin{align*}
\widehat{S}\circ\widehat{\alpha}_{(f,\sigma)}\circ\mu=\widehat{S}\circ\sigma=C_{\widehat{G}}\circ\mu.
 \end{align*}
In particular, we conclude that the maps $\widehat{S}\circ\widehat{\alpha}_{(f,\sigma)}$ and $C_{\widehat{G}}$ coincide on $\mu(G/N)$. But on $\widehat{N}$ we also have $\widehat{S}\circ\widehat{\alpha}_{(f,\sigma)}=\widehat{S}\circ\alpha=C_{\widehat{G}}$. Therefore, we conclude that $\widehat{S}\circ\widehat{\alpha}_{(f,\sigma)}=C_{\widehat{G}}$ holds on $\widehat{G}$.

(b) Since $\mu$ is a locally smooth section of $q\circ\widehat{\alpha}$, it follows that $\sigma:=\widehat{\alpha}\circ\mu$ is a locally smooth section of $q$. Therefore $\alpha\circ f=\widehat{\alpha}\circ f=\delta_{\sigma}$ and, as Lie groups, $\widehat{G}$ is equivalent to $\widehat{N}\times_{(C_{\widehat{G}}\circ\mu,f)}G/N$. From this we conclude that $$d_{(C_{\widehat{G}}\circ\mu)}f=1_{\widehat{N}},$$ because $(C_{\widehat{G}}\circ\mu,f)$ is a smooth factor system for $(\widehat{N},G/N)$ (cf. Section \ref{pre and not}). Finally, the equation $C_{\widehat{G}}\circ\mu=\widehat{S}\circ\widehat{\alpha}\circ\mu=\widehat{S}\circ\sigma$ shows that $d_{\widehat{S}\circ\sigma}f=1_{\widehat{N}}$.
\end{proof}

The following result shows that equivalence classes of extensions of the smooth crossed module $(\alpha,\widehat{S})$ correspond to orbits of the group $C^1_s(G/N,\widehat{N})$ in the set of smooth structural cocycles:

\pagebreak[3]
\begin{proposition}\label{orbit of structural cocycles}
For two smooth structural cocycles $(f,\sigma)$, $(f',\sigma')\in Z^2_{ss}(\alpha,\widehat{S})$ the following statements are equivalent:
\begin{itemize}
\item[(a)]
$\widehat{N}\times_{(f,\sigma)}G/N$ and $\widehat{N}\times_{(f',\sigma')}G/N$ are equivalent extensions of $(\alpha,\widehat{S})$.
\item[(b)]
There exists an element $c\in C^1_s(G/N,\widehat{N})$ with 
\[c.(f,\sigma):=(c\ast_{\widehat{S}\circ\sigma}f,(\alpha\circ c)\cdot\sigma)=(f',\sigma'),  
\]where
\[(c\ast_{\widehat{S}\circ\sigma}f)(g,g'):=c(g)(\widehat{S}\circ\sigma)(g)(c(g'))f(g,g')c(gg')^{-1}.
\]
\end{itemize}
If these conditions are satisfied, then the map
\[\varphi:\widehat{N}\times_{(f',\sigma')}G/N\rightarrow \widehat{N}\times_{(f,\sigma)}G/N,\quad (n,g)\mapsto (nc(g),g)
\]is an equivalence of extensions of the smooth crossed module $(\alpha,\widehat{S})$ and all equivalences of extensions $\widehat{N}\times_{(f',\sigma')}G/N\rightarrow \widehat{N}\times_{(f,\sigma)}G/N$ are of this form.
\end{proposition}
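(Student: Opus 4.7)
The plan is to prove the equivalence of (a) and (b) by extracting the map $c$ directly from any equivalence of extensions in one direction, and by building the equivalence from $c$ in the other; the displayed formula $\varphi(n,g) = (nc(g), g)$ then serves both as the conclusion of the last sentence and as the explicit equivalence produced in each implication.

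For (b) $\Rightarrow$ (a), I would first observe that $\varphi$ restricts to the identity on $\widehat{N}$ (using normalization $c(1_{G/N}) = 1_{\widehat{N}}$), and that the identity $\widehat{\alpha}_{(f,\sigma)} \circ \varphi = \widehat{\alpha}_{(f',\sigma')}$ follows immediately from $\alpha(nc(g))\sigma(g) = \alpha(n)\sigma'(g)$, using $\sigma' = (\alpha \circ c) \cdot \sigma$. The crucial step is the group-homomorphism property: expanding $\varphi((n,g)(n',g'))$ in the twisted multiplication for $(f',\sigma')$ and $\varphi(n,g)\varphi(n',g')$ in the twisted multiplication for $(f,\sigma)$, and invoking (CM2) to rewrite
\[
\widehat{S}(\sigma'(g)) = \widehat{S}(\alpha(c(g))) \widehat{S}(\sigma(g)) = C_{\widehat{N}}(c(g)) \circ \widehat{S}(\sigma(g)),
\]
one verifies that the two expressions agree precisely because of the defining relation $f' = c *_{\widehat{S}\circ\sigma} f$. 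Local smoothness of $\varphi$ around the identity follows from the corresponding property of $c$, and the inverse is given by the analogous formula with $c(g)^{-1}$ replacing $c(g)$.

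For (a) $\Rightarrow$ (b), I would note that any equivalence $\varphi$ restricts to the identity on $\widehat{N}$ and induces the identity on $G/N$ (cf. Remark~\ref{remark on equivalence}). The decomposition $(n,g) = (n, 1_{G/N}) \cdot (1_{\widehat{N}}, g)$, valid because $f'$ is normalized, then forces $\varphi(n,g) = (n, 1_{G/N}) \cdot \varphi(1_{\widehat{N}}, g)$; writing $\varphi(1_{\widehat{N}}, g) = (c(g), g)$, one multiplication in the target group yields $\varphi(n,g) = (nc(g), g)$. Local smoothness of $\varphi$ and its morphism property give $c \in C^1_s(G/N,\widehat{N})$ with $c(1_{G/N}) = 1_{\widehat{N}}$. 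The compatibility $\widehat{\alpha}_{(f,\sigma)} \circ \varphi = \widehat{\alpha}_{(f',\sigma')}$ then produces $\sigma' = (\alpha \circ c) \cdot \sigma$, and reading the homomorphism computation of the first direction in reverse yields $f' = c *_{\widehat{S}\circ\sigma} f$. Consequently $c.(f,\sigma) = (f',\sigma')$, and the final assertion that every equivalence has the displayed form is precisely this derivation. The main obstacle I anticipate is the bookkeeping in the homomorphism calculation, where the crossed-module axioms and the twisted cocycle identities must be combined carefully; the definition of the action $c.(f,\sigma)$ is designed so that all terms cancel exactly.
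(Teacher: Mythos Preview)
Your proposal is correct and follows essentially the same route as the paper. The only difference is one of presentation: for (a)~$\Rightarrow$~(b) the paper invokes Remark~\ref{remark on equivalence} and then cites \cite[Proposition~II.10]{Ne06} to obtain the cochain $c$ with $f'=c\ast_{\widehat{S}\circ\sigma}f$, whereas you re-derive this by decomposing $(n,g)=(n,1_{G/N})\cdot(1_{\widehat{N}},g)$ and reading off $c$ directly; the subsequent derivation of $\sigma'=(\alpha\circ c)\cdot\sigma$ from the compatibility with $\widehat{\alpha}$ is identical in both arguments, and your use of (CM2) in the homomorphism check for (b)~$\Rightarrow$~(a) is exactly the computation the paper leaves implicit.
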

\begin{proof}
If $\widehat{N}\times_{(f,\sigma)}G/N$ and $\widehat{N}\times_{(f',\sigma')}G/N$ are equivalent extensions of $(\alpha,\widehat{S})$, then they are also equivalent as Lie group extensions (cf. Remark \ref{remark on equivalence}). Therefore \cite[Proposition II.10]{Ne06} implies that there exists an element $c\in C^1_s(G/N,\widehat{N})$ satisfying $f'=c.f=c\ast_{\widehat{S}\circ\sigma}f$. Moreover, if $\mu$ and $\mu'$ denote the corresponding canonical sections, then
\[\sigma'\cdot\sigma^{-1}=\widehat{\alpha}_{(f,\sigma)}\circ((\varphi\circ\mu')\cdot\mu^{-1})=\alpha\circ h.
\]
For the other direction we only point out that $\widehat{\alpha}_{(f',\sigma')}=\widehat{\alpha}_{(f,\sigma)}\circ\varphi$ follows from the assumption $\sigma'=(\alpha\circ c)\cdot\sigma$
\end{proof}

\begin{theorem}\label{ext} 
Let $\Ext(\alpha,\widehat{S})$ denote the set of all equivalence classes of extensions of the smooth crossed module $(\alpha,\widehat{S})$. Then the following map is a bijection:
\[Z^2_{ss}(\alpha,\widehat{S})/C^1_s(\widehat{N},G/N)\rightarrow\Ext(\alpha,\widehat{S}), \quad [(f,\sigma)]\mapsto[\widehat{N}\times_{(f,\sigma)}G/N].
\]
\end{theorem}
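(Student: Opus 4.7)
The plan is to assemble the theorem directly from Proposition \ref{smooth structural cocycle vs. continuation} and Proposition \ref{orbit of structural cocycles}, which together already do all the substantive work. First I would verify that the formula $c.(f,\sigma):=(c\ast_{\widehat{S}\circ\sigma}f,(\alpha\circ c)\cdot\sigma)$ of Proposition \ref{orbit of structural cocycles} indeed defines a (left) group action of $C^1_s(G/N,\widehat{N})$ on $Z^2_{ss}(\alpha,\widehat{S})$, so that the quotient set on the left of the claimed bijection makes sense. Stability under the action, i.e.\ $c.(f,\sigma)\in Z^2_{ss}(\alpha,\widehat{S})$, follows from the construction of the equivalence $\varphi(n,g)=(nc(g),g)$ in Proposition \ref{orbit of structural cocycles}: transporting the factor-system identities from $(f,\sigma)$ across $\varphi$ yields the defining identities $\alpha\circ(c\ast_{\widehat{S}\circ\sigma}f)=\delta_{(\alpha\circ c)\cdot\sigma}$ and $d_{\widehat{S}\circ((\alpha\circ c)\cdot\sigma)}(c\ast_{\widehat{S}\circ\sigma}f)=1_{\widehat{N}}$. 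Associativity of the action is then a short direct computation using the definitions of $\ast_{\widehat{S}\circ\sigma}$ and the smooth factor-system identities recalled in Section \ref{pre and not}.

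For surjectivity, given an extension $(\widehat{G},\widehat{\alpha})$ of $(\alpha,\widehat{S})$ I would pick a locally smooth normalized section $\mu:G/N\rightarrow\widehat{G}$ and set $(f,\sigma):=(\delta_\mu,\widehat{\alpha}\circ\mu)$. Proposition \ref{smooth structural cocycle vs. continuation}(b) then guarantees both that $(f,\sigma)\in Z^2_{ss}(\alpha,\widehat{S})$ and that $(\widehat{G},\widehat{\alpha})$ is equivalent as an extension of $(\alpha,\widehat{S})$ to $\widehat{N}\times_{(f,\sigma)}G/N$, so $[\widehat{G},\widehat{\alpha}]$ lies in the image of the map. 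Well-definedness on orbits and injectivity are then exactly the two implications of Proposition \ref{orbit of structural cocycles}: the implication (b)$\Rightarrow$(a) tells us that $C^1_s(G/N,\widehat{N})$-equivalent structural cocycles give equivalent extensions of $(\alpha,\widehat{S})$, and the implication (a)$\Rightarrow$(b) tells us that any equivalence of extensions $\widehat{N}\times_{(f,\sigma)}G/N\simeq\widehat{N}\times_{(f',\sigma')}G/N$ is witnessed by some $c\in C^1_s(G/N,\widehat{N})$ with $c.(f,\sigma)=(f',\sigma')$.

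The one place that requires care, and which I expect to be the main obstacle, is to keep clear throughout that we are working with equivalences of extensions of the smooth crossed module $(\alpha,\widehat{S})$ — hence required to intertwine the augmentations $\widehat{\alpha}_{(f,\sigma)}$ and $\widehat{\alpha}_{(f',\sigma')}$ to $G$ — and not merely with equivalences of Lie group extensions of $G/N$ by $\widehat{N}$ (cf.\ Remark \ref{remark on equivalence}). The extra information needed to upgrade the latter to the former is captured exactly by the second component of the action, namely the identity $\sigma'=(\alpha\circ c)\cdot\sigma$: in Proposition \ref{orbit of structural cocycles} this identity ensures $\widehat{\alpha}_{(f',\sigma')}=\widehat{\alpha}_{(f,\sigma)}\circ\varphi$ via Lemma \ref{technical lemma 2}, and conversely any equivalence of extensions of $(\alpha,\widehat{S})$ produced from \cite[Proposition II.10]{Ne06} automatically forces $c$ to satisfy this second identity because $\widehat{\alpha}_{(f,\sigma)}\circ\mu=\sigma$. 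Checking this compatibility at each step is what turns the routine bookkeeping into a genuine statement about crossed-module extensions.
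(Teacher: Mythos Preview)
Your proposal is correct and follows essentially the same approach as the paper, which simply states that the claim follows directly from Proposition~\ref{smooth structural cocycle vs. continuation} and Proposition~\ref{orbit of structural cocycles}. You have spelled out in more detail how surjectivity comes from part~(b) of the former and how well-definedness and injectivity come from the two implications of the latter, together with the verification that the $C^1_s(G/N,\widehat{N})$-action is well-defined; this elaboration is accurate and matches what the paper takes for granted.
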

\begin{proof}
The claim directly follows from Proposition \ref{smooth structural cocycle vs. continuation} and Proposition \ref{orbit of structural cocycles}.
\end{proof}

\pagebreak[3]
The next statement gives a useful criteria for the extendability of smooth crossed module:

\begin{corollary}\label{extendability crossed module}\emph{(}\cite[Theorem III.8]{Ne06}\emph{)}.
A crossed module $(\alpha,\widehat{S})$ is extendable if and only if its characteristic class $\chi_{ss}(\alpha,\widehat{S})\in \emph{H}^3_{ss}(G/N,Z)_T$ vanishes.
\end{corollary}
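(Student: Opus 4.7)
The plan is to use Theorem \ref{ext} to translate extendability of the crossed module into the existence of a smooth structural cocycle, and then to identify the characteristic class $\chi_{ss}(\alpha,\widehat{S})$ as the precise obstruction to producing such a cocycle from the building blocks we always have at hand. First I would fix a locally smooth normalized section $\sigma:G/N\rightarrow G$ of the quotient map $q:G\rightarrow G/N$, which exists because $N$ is a split normal subgroup. This already gives a candidate for the second component of a structural cocycle, and it satisfies $\delta_\sigma\in C^2_{ss}(G/N,N)$ with $N=\im(\alpha)$.

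Next I would try to produce the first component. Condition (CM3) says that $\alpha$ realises $\widehat{N}$ as a smooth principal $Z$-bundle over $\im(\alpha)=N$, so, shrinking the identity neighbourhood on which $\sigma$ is smooth if necessary, I can choose a locally smooth lift $f\in C^2_{ss}(G/N,\widehat{N})$ with $\alpha\circ f=\delta_\sigma$; this is the first defining equation of $Z^2_{ss}(\alpha,\widehat{S})$. The pair $(f,\sigma)$ will fail to be a smooth structural cocycle precisely to the extent that $d_{\widehat{S}\circ\sigma}f$ differs from $1_{\widehat{N}}$. A direct computation, together with the 2-cocycle identity for $\delta_\sigma$ and the crossed module axiom (CM2), shows that $\alpha\circ d_{\widehat{S}\circ\sigma}f=1$, so $d_{\widehat{S}\circ\sigma}f$ actually takes values in $Z=\ker(\alpha)$. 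By Lemma \ref{basic staff on crossed module}(d) the resulting map lies in $Z^3_{ss}(G/N,Z)_T$, and I would then define $\chi_{ss}(\alpha,\widehat{S}):=[d_{\widehat{S}\circ\sigma}f]\in\mathrm{H}^3_{ss}(G/N,Z)_T$.

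Before concluding I would check that this class is independent of the two choices made. Replacing $f$ by another lift $f'$ of $\delta_\sigma$ changes it by an element of $C^2_{ss}(G/N,Z)$, which modifies $d_{\widehat{S}\circ\sigma}f$ by a coboundary in $C^3_{ss}(G/N,Z)_T$. Replacing $\sigma$ by another locally smooth section $\sigma'$ amounts to multiplying $\sigma$ by a locally smooth map into $N$, and after lifting this map into $\widehat{N}$ using the $Z$-principal bundle structure one sees that the effect on $d_{\widehat{S}\circ\sigma}f$ is again a $T$-twisted coboundary; this is the step I expect to require the most care, since it mixes the $C^2_s$/$C^2_{ss}$ bookkeeping with the crossed module axioms.

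Finally I would combine these pieces as follows. If the crossed module is extendable, Theorem \ref{ext} furnishes a smooth structural cocycle $(f,\sigma)\in Z^2_{ss}(\alpha,\widehat{S})$, and then by definition $d_{\widehat{S}\circ\sigma}f=1_{\widehat{N}}$, so $\chi_{ss}(\alpha,\widehat{S})=0$. Conversely, if $\chi_{ss}(\alpha,\widehat{S})=0$, then starting from an arbitrary pair $(f,\sigma)$ as above I can write $d_{\widehat{S}\circ\sigma}f=d_{\widehat{S}\circ\sigma}z$ for some $z\in C^2_{ss}(G/N,Z)$; replacing $f$ by $fz^{-1}$ still lifts $\delta_\sigma$ (since $z$ is central in $\widehat{N}$) and now satisfies $d_{\widehat{S}\circ\sigma}(fz^{-1})=1_{\widehat{N}}$, so $(fz^{-1},\sigma)\in Z^2_{ss}(\alpha,\widehat{S})$. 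Applying Theorem \ref{ext} to this structural cocycle produces an extension of $(\alpha,\widehat{S})$, which completes the proof.
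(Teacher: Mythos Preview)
Your proposal is correct and follows essentially the same approach as the paper: both arguments reduce extendability to the existence of a smooth structural cocycle via Theorem~\ref{ext}, and then identify $\chi_{ss}(\alpha,\widehat{S})=[d_{\widehat{S}\circ\sigma}f]$ as the obstruction to adjusting a lift $(f,\sigma)$ of $\delta_\sigma$ into such a cocycle. The paper's proof is extremely terse (it simply invokes the definition of the characteristic class, deferring its construction and well-definedness to \cite[Lemma~III.6]{Ne06}), whereas you have spelled out that construction and the independence-of-choices verification explicitly; this is more detail, not a different method.
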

\begin{proof} We first note that Theorem \ref{ext} implies that $(\alpha,\widehat{S})$ is extendable if and only if it admits a smooth structural cocycle. Moreover, the definition of the characteristic class
shows that $(\alpha,\widehat{S})$ admits a smooth structural cocycle if and only if the class $\chi_{ss}(\alpha,\widehat{S}):=[(d_{\widehat{S}\circ\sigma}f)]\in \text{H}^3_{ss}(G/N,Z)_T$ vanishes.
\end{proof}

We are finally ready to state and prove the main result of this appendix:

\begin{theorem}\label{H^2(G,Z)=ext}
Let $(\alpha,\widehat{S})$ be a smooth crossed module with $\Ext(\alpha,\widehat{S})\neq\emptyset$. 
Further, let $(\widehat{N}\times_{(f,\sigma)}G/N,\widehat{\alpha}_{(f,\sigma)})$ be a fixed extension of $(\alpha,\widehat{S})$ represented by the smooth structural cocycle $(f,\sigma)$ \emph{(} cf. Theorem \ref{ext}\emph{)}. Then the following assertions hold:
\begin{itemize}
\item[(a)]
Any other extension $(\widehat{N}\times_{(f',\sigma')}G/N,\widehat{\alpha}_{(f',\sigma')})$ representing an element of the set $\Ext(\alpha,\widehat{S})$ is equivalent to one of the form $(\widehat{N}\times_{(f'',\sigma)} G/N,\widehat{\alpha}_{(f'',\sigma)})$ with
\[f''\cdot f^{-1}\in Z^2_{ss}(G/N,Z)_T
\]
\item[(b)]
The extensions $(\widehat{N}\times_{(f,\sigma)} G/N,\widehat{\alpha}_{(f,\sigma)})$ and $(\widehat{N}\times_{(f',\sigma)} G/N,\widehat{\alpha}_{(f',\sigma)})$ define equivalent extensions if and only if
\[f'\cdot f^{-1}\in B^2_{s}(G/N,Z)_T.
\]
\end{itemize}
\end{theorem}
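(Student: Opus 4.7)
The plan is to reduce both parts to Proposition~\ref{orbit of structural cocycles}, which parametrizes equivalences of crossed-module extensions by the action of $C^1_s(G/N, \widehat{N})$ on smooth structural cocycles. A key preliminary observation is that condition (CM3) in the definition of a smooth crossed module together with Lemma~\ref{basic staff on crossed module}(b) makes $\alpha \colon \widehat{N} \to N = \im(\alpha)$ into a smooth principal $Z$-bundle with central fiber $Z = \ker(\alpha)$; this is what allows lifts of locally smooth maps into $N$ to be lifted to $\widehat{N}$.

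For part~(a), I would first normalize the section of an arbitrary second extension $(\widehat{N} \times_{(f', \sigma')} G/N, \widehat{\alpha}_{(f',\sigma')})$ so that it agrees with $\sigma$. Since $\sigma \cdot (\sigma')^{-1}\colon G/N \to N$ is locally smooth (both $\sigma$ and $\sigma'$ are locally smooth normalized sections of $q$), the principal $Z$-bundle $\alpha$ admits a locally smooth lift $c \in C^1_s(G/N, \widehat{N})$ with $\alpha \circ c = \sigma \cdot (\sigma')^{-1}$. Setting $f'' := c \ast_{\widehat{S} \circ \sigma'} f'$, the action formula yields $c.(f', \sigma') = (f'', \sigma)$, and Proposition~\ref{orbit of structural cocycles} gives the desired equivalence. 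I would then verify $z := f'' \cdot f^{-1} \in Z^2_{ss}(G/N, Z)_T$: the identities $\alpha \circ f = \delta_\sigma = \alpha \circ f''$ force $z$ to take values in $Z$, and centrality of $Z$ (Lemma~\ref{basic staff on crossed module}(b)) together with $\widehat{S}(\sigma(g))|_Z = T([g])$ (Lemma~\ref{basic staff on crossed module}(d)) converts the cocycle identities $d_{\widehat{S} \circ \sigma} f = 1_{\widehat{N}} = d_{\widehat{S} \circ \sigma} f''$ into $d_T z = 1_Z$. The $ss$-smoothness of $z$ is inherited from that of $f$ and $f''$ using centrality.

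For part~(b), I would apply Proposition~\ref{orbit of structural cocycles} once more, this time with $\sigma' = \sigma$. An equivalence between the two extensions corresponds to some $c \in C^1_s(G/N, \widehat{N})$ satisfying $(\alpha \circ c) \cdot \sigma = \sigma$, which forces $\alpha \circ c = 1$, i.e., $c$ takes values in $Z$. Using centrality of $Z$ and $\widehat{S}(\sigma(g))|_Z = T([g])$, the formula $f' = c \ast_{\widehat{S} \circ \sigma} f$ collapses to the pointwise identity $f' \cdot f^{-1} = d_T c$, placing $f' \cdot f^{-1}$ in $B^2_{s}(G/N, Z)_T$. The converse is the same computation read backwards: any coboundary $d_T c$ with $c \in C^1_s(G/N, Z) \subseteq C^1_s(G/N, \widehat{N})$ produces, via Proposition~\ref{orbit of structural cocycles}, an equivalence between $(\widehat{N} \times_{(f, \sigma)} G/N, \widehat{\alpha}_{(f,\sigma)})$ and $(\widehat{N} \times_{(f', \sigma)} G/N, \widehat{\alpha}_{(f',\sigma)})$ as extensions of the smooth crossed module.

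The main obstacle I anticipate is not algebraic but smoothness bookkeeping: one must check throughout that lifts, ratios, and transformed cocycles remain in the correct locally smooth classes, and in particular that in part~(a) the element $z = f'' \cdot f^{-1}$ actually lands in $Z^2_{ss}$ (with the strengthened $ss$-condition from Section~\ref{pre and not}) rather than merely in $Z^2_s$. Centrality of $Z$ is what makes this go through cleanly, since it ensures that the conjugation-twist part of the $ss$-condition reduces to the $T$-action on $Z$ and passes through the pointwise quotient intact.
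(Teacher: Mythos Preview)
Your proposal is correct and follows essentially the same route as the paper: reduce both parts to Proposition~\ref{orbit of structural cocycles}, in~(a) lift $\sigma\cdot(\sigma')^{-1}$ through the principal $Z$-bundle $\alpha$ to normalize the section, and in~(b) observe that the constraint $(\alpha\circ c)\cdot\sigma=\sigma$ forces $c$ to take values in~$Z$, collapsing the twisted product to $f\cdot d_T c$. Your write-up is in fact slightly more careful than the paper's---you justify the existence of the lift via~(CM3), track the $ss$-smoothness of $z$, and spell out the converse in~(b)---and you silently correct what appears to be a typo in the paper's proof, where $f''$ should be $c\ast_{\widehat{S}\circ\sigma'}f'$ rather than $c\ast_{\widehat{S}\circ\sigma}f'$ to match the action formula in Proposition~\ref{orbit of structural cocycles}.
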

\begin{proof}
We first note that $\im(\sigma\cdot(\sigma')^{-1})\subseteq N$. Therefore, we may an element $c\in C^1_s(G/N,\widehat{N})$ satisfying $\sigma=(\alpha\circ c)\cdot\sigma'$. If we now define $f'':=c\ast_{\widehat{S}\circ\sigma}f'$, then Proposition \ref{orbit of structural cocycles} implies that the extensions $(\widehat{N}\times_{(f',\sigma')} G/N,\widehat{\alpha}_{(f',\sigma')})$  and $(\widehat{N}\times_{(f'',\sigma)} G/N,\widehat{\alpha}_{(f'',\sigma)})$ are equivalent. Moreover, if $\beta:=f^{-1}\cdot f''$, then a few moments thought shows that $\alpha\circ\beta=(\delta_{\sigma})^{-1}\cdot\delta_{\sigma}=1_N$ and consequently that $\im(\beta)\subseteq Z$. Hence, 
\[1_{\widehat{N}}=d_Tf''=d_T(f\cdot\beta)=d_Tf\cdot d_T\beta=d_T\beta
\]from which we conclude that $\beta\in Z^2_{ss}(G/N,Z)_T$. 

(b) According to Proposition \ref{orbit of structural cocycles}, the extensions $(\widehat{N}\times_{(f,\sigma)} G/N,\widehat{\alpha}_{(f,\sigma)})$ and $(\widehat{N}\times_{(f',\sigma)} G/N,\widehat{\alpha}_{(f',\sigma)})$ are equivalent if and only if there exists $c\in C^1_s(G/N,\widehat{N})$ satisfying $f'=c\ast_{\widehat{S}\circ\sigma}f$ and $\sigma=(\alpha\circ c)\cdot\sigma$. From this we immediately conclude that $\im(c)\subseteq Z$ and therefore that $f'=c\ast_{\widehat{S}\circ\sigma}f=f\cdot d_Tc$. 
\end{proof}


\begin{corollary}\label{H^2(G,Z)}
Let $(\alpha,\widehat{S})$ be a smooth crossed module with $\Ext(\alpha,\widehat{S})\neq\emptyset$. Then the map
\[\emph{H}^2_{ss}(G/N,Z)_T\times\Ext(\alpha,\widehat{S})\rightarrow\Ext(\alpha,\widehat{S})
\]
\[([\beta],[(\widehat{N}\times_{(f,\sigma)} G/N,\widehat{\alpha}_{(f,\sigma)})])\mapsto [(\widehat{N}\times_{(f\cdot\beta,\sigma)} G/N,\widehat{\alpha}_{(f\cdot\beta,\sigma)})]
\]is a well-defined simply transitive action.
\end{corollary}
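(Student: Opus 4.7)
The proof plan is to reduce everything to Theorem \ref{H^2(G,Z)=ext}, which already does the hard work: part (a) will give transitivity and part (b) will give freeness. The remaining task is to verify that the formula $([\beta],[(\widehat{N}\times_{(f,\sigma)}G/N,\widehat{\alpha}_{(f,\sigma)})])\mapsto [(\widehat{N}\times_{(f\cdot\beta,\sigma)}G/N,\widehat{\alpha}_{(f\cdot\beta,\sigma)})]$ makes sense and defines a group action.

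First I would verify that $(f\cdot\beta,\sigma)$ is again a smooth structural cocycle for $(\alpha,\widehat{S})$. Since $\beta$ takes values in $Z=\ker(\alpha)\subseteq Z(\widehat{N})$ by Lemma \ref{basic staff on crossed module}, the pointwise product $f\cdot\beta$ lies in $C^2_{ss}(G/N,\widehat{N})$, we have $\alpha\circ(f\cdot\beta)=\alpha\circ f=\delta_{\sigma}$, and the relation $d_{\widehat{S}\circ\sigma}(f\cdot\beta)=d_{\widehat{S}\circ\sigma}f\cdot d_T\beta=1_{\widehat{N}}$ follows because $\widehat{S}\circ\sigma$ acts on $Z$ through $T\circ q$ by Lemma \ref{basic staff on crossed module}(d). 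Next I would check that the class on the right-hand side depends only on $[\beta]\in\emph{H}^2_{ss}(G/N,Z)_T$: if $\beta'=\beta\cdot d_Th$ for some $h\in C^1_s(G/N,Z)$, then $(f\cdot\beta')\cdot(f\cdot\beta)^{-1}=d_Th\in B^2_s(G/N,Z)_T$, so by Theorem \ref{H^2(G,Z)=ext}(b) the two extensions are equivalent.

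The main obstacle is the independence from the choice of representative $(f,\sigma)$ of the extension class. Given a second representative $(f',\sigma')$, Proposition \ref{orbit of structural cocycles} produces $c\in C^1_s(G/N,\widehat{N})$ with $f'=c\ast_{\widehat{S}\circ\sigma}f$ and $\sigma'=(\alpha\circ c)\cdot\sigma$. Using centrality of $Z$ in $\widehat{N}$ together with $\widehat{S}\circ\sigma\big|_Z=T\circ q\circ\sigma$, a direct computation shows
\[c\ast_{\widehat{S}\circ\sigma}(f\cdot\beta)=(c\ast_{\widehat{S}\circ\sigma}f)\cdot\beta=f'\cdot\beta,\]
and the same $c$ still satisfies $(\alpha\circ c)\cdot\sigma=\sigma'$. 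Hence Proposition \ref{orbit of structural cocycles} again provides an equivalence between $\widehat{N}\times_{(f\cdot\beta,\sigma)}G/N$ and $\widehat{N}\times_{(f'\cdot\beta,\sigma')}G/N$ as extensions of $(\alpha,\widehat{S})$, so the action is well-defined.

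The group action axioms then follow from $(f\cdot\beta)\cdot\beta'=f\cdot(\beta\cdot\beta')$ at the cocycle level and from the vanishing of the trivial class. Finally, transitivity is exactly Theorem \ref{H^2(G,Z)=ext}(a): any $[(\widehat{N}\times_{(f',\sigma')}G/N,\widehat{\alpha}_{(f',\sigma')})]\in\Ext(\alpha,\widehat{S})$ equals $[(\widehat{N}\times_{(f\cdot\beta,\sigma)}G/N,\widehat{\alpha}_{(f\cdot\beta,\sigma)})]$ for $\beta:=f^{-1}\cdot f''\in Z^2_{ss}(G/N,Z)_T$ with $f''$ as produced there. Freeness is Theorem \ref{H^2(G,Z)=ext}(b): if $[\beta]$ fixes $[(\widehat{N}\times_{(f,\sigma)}G/N,\widehat{\alpha}_{(f,\sigma)})]$, then $(f\cdot\beta)\cdot f^{-1}=\beta\in B^2_s(G/N,Z)_T$, so $[\beta]=0$.
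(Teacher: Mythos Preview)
Your proof is correct and follows exactly the approach implicit in the paper: the corollary is stated there without proof, as an immediate consequence of Theorem~\ref{H^2(G,Z)=ext} and Proposition~\ref{orbit of structural cocycles}, and you have carefully filled in the routine verifications (well-definedness on both factors, the action axioms, and the deduction of transitivity and freeness from parts (a) and (b) of the theorem). There is nothing to add.
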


\end{document}